\let\c@table\c@figure
 \newtheorem{theorem}{Theorem}
 \newtheorem{example}[theorem]{Example}
 \newtheorem{lemma}[theorem]{Lemma}
 \newtheorem*{lemma*}{Lemma}
 \newtheorem{proposition}[theorem]{Proposition}
\newcommand{\OWL}{\textsl{OWL\,2}}
\newcommand{\OWLQL}{\textsl{OWL\,2\,QL}}
\newcommand{\LTL}{\textsl{LTL}}
\newcommand{\nxt}{{\ensuremath\raisebox{0.25ex}{\text{\scriptsize$\bigcirc$}}}}
\newcommand{\ind}{\mathsf{ind}}
\newcommand{\Mmf}{\mathfrak{M}}
\newcommand{\avec}[1]{\boldsymbol{#1}}
\renewcommand{\int}{\mathsf{int}}
\newcommand{\q}{\boldsymbol q}
\newcommand{\range}{\varrho}
\newcommand{\D}{\mathcal{D}}
\newcommand{\num}{\mathsf{num}}
\newcommand{\sect}{\mathsf{sec}}
\newcommand{\Si}{\mathbin{\mathcal{S}}}
\newcommand{\U}{\mathbin{\mathcal{U}}}
\newcommand{\diamondplus}{%
  \raisebox{-.15ex}{\begin{tikzpicture}
    \useasboundingbox (-0.7ex, -0.9ex) rectangle (0.7ex, 0.9ex);
    \node (w) at (-0.8ex,0) {};
    \node (e) at (+0.8ex,0) {};
    \node (s) at (0,-.9ex) {};
    \node (n) at (0,+.9ex) {};
    \draw (n.center) -- (e.center) -- (s.center) -- (w.center) -- (n.center);
    \draw (n.center) -- (s.center);
    \draw (e.center) -- (w.center);
  \end{tikzpicture}}}
\newcommand{\diamondminus}{%
  \raisebox{-.15ex}{\begin{tikzpicture}
    \useasboundingbox (-0.7ex, -0.9ex) rectangle (0.7ex, 0.9ex);
    \node (w) at (-0.8ex,0) {};
    \node (e) at (+0.8ex,0) {};
    \node (s) at (0,-.9ex) {};
    \node (n) at (0,+.9ex) {};
    \draw (n.center) -- (e.center) -- (s.center) -- (w.center) -- (n.center);
    \draw (e.center) -- (w.center);
  \end{tikzpicture}}}
\newcommand{\Rnext}{\nxt_{\!\scriptscriptstyle F}}
\newcommand{\Lnext}{\nxt_{\!\scriptscriptstyle P}}
\newcommand{\Rdiamond}{\Diamond_{\!\scriptscriptstyle F}}
\newcommand{\Ldiamond}{\Diamond_{\!\scriptscriptstyle P}}
\newcommand{\Rbox}{\rule{0pt}{1.4ex}\Box_{\!\scriptscriptstyle F}}
\newcommand{\Lbox}{\rule{0pt}{1.4ex}\Box_{\!\scriptscriptstyle P}}
\newcommand{\C}{\mathcal{C}}
\newcommand{\cl}{\mathsf{cl}}
\newcommand{\MTL}{\textsl{MTL}}
\newcommand{\hMTL}{\textsl{hornMTL}}
\newcommand{\dMTL}{\textsl{datalogMTL}}
\newcommand{\nrdMTL}{\textsl{datalog$_{\it nr}$MTL}}
\newcommand{\TB}{\ensuremath{\mathsf{TB}\_\mathsf{Sensor}}}
\newcommand{\Ci}{\mathop{\text{\ding{192}}}}
\newcommand{\Cii}{\mathop{\text{\ding{193}}}}
\newcommand{\pc}{\mathop{+^c}}
\newcommand{\po}{\mathop{+^o}}
\newcommand{\mc}{\mathop{-^c}}
\newcommand{\mo}{\mathop{-^o}}
\definecolor{myred}{HTML}{C00D2A}
\definecolor{myblue2}{RGB}{ 11, 144, 189 }
\colorlet{darkgreen}{green!50!black}
\definecolor{myblue}{HTML}{116699}
\definecolor{myviolet}{HTML}{3E1199}
\definecolor{myorange}{HTML}{cc6633}
\definecolor{mygray}{HTML}{cccccc}
\definecolor{mydarkgray}{HTML}{999999}
\colorlet{textgray}{black!80}
\begin{document}

\title{Querying Log Data with Metric Temporal Logic}

\author{\name Sebastian Brandt \email sebastian-philipp.brandt@siemens.com \\
\addr Siemens CT\\ M\"unchen, Germany
\AND
       \name Elem G\"uzel Kalayc{\i} \email kalayci@inf.unibz.it \\
       \addr KRDB Research Centre\\ Faculty of Computer Science\\ Free University of Bozen-Bolzano, Italy
\AND
       \name Vladislav Ryzhikov \email vlad@dcs.bbk.ac.uk\\
       \addr Department of Computer Science and Information Systems\\ Birkbeck, University of London, UK
\AND
       \name Guohui Xiao \email xiao@inf.unibz.it\\
       \addr KRDB Research Centre\\ Faculty of Computer Science\\ Free University of Bozen-Bolzano, Italy
       \AND
       \name Michael Zakharyaschev \email michael@dcs.bbk.ac.uk \\
       \addr Department of Computer Science and Information Systems\\ Birkbeck, University of London, UK}


\maketitle

\begin{abstract}
  We propose a novel framework for ontology-based access to temporal log data using a datalog extension $\dMTL$ of the Horn fragment of the metric temporal logic $\MTL$. We show that $\dMTL$ is {\sc ExpSpace}-complete  even with punctual intervals, in which case full $\MTL$ is known to be undecidable. We also prove that nonrecursive $\dMTL$ is {\sc PSpace}-complete for combined complexity and in AC$^0$ for data complexity.  We demonstrate by two real-world use cases that nonrecursive $\dMTL$ programs can express complex temporal concepts from typical user queries and thereby facilitate access to temporal log data. Our experiments with Siemens turbine data and MesoWest weather data show that $\dMTL$ ontology-mediated queries are efficient and scale on large datasets.
\end{abstract}


\section{Introduction}

In this paper, we present a new ontology-based framework for querying temporal log data. We begin by outlining this framework in the context of data gathering and analysis at Siemens, a leading manufacturer and supplier of systems for power generation, power transmission, medical diagnosis, and industry automation.

\subsection{Data gathering at Siemens}
For the Siemens equipment, analytics services are usually delivered by remote diagnostic centres that store data from the relevant industrial sites or individual equipment around the globe.
The analytics provided at these centres falls into three categories: descriptive, predictive, and prescriptive. Descriptive analytics describes or quantifies in
detail what has happened after an event. Predictive analytics aims to
anticipate events before they occur and provide a window of
opportunity for countermeasures. Prescriptive analytics aims to automate the process of suggesting underlying reasons for the predicted events and carrying out appropriate countermeasures.
All these types of analytics heavily rely on the ability to recognise interesting events using sensor measurements or other machine data such as the power output of a gas turbine, its maximum rotor speed, average exhaust temperature, etc.
For example, a service engineer at a Siemens remote diagnostic centre could be interested in active power trips of the turbine, that is, events when
%
\begin{description}
\item[($\mathsf{ActivePowerTrip}$)] the active power was above 1.5MW
  for a period of at least 10 seconds, maximum 3 seconds after which
  there was a period of at least one minute where the active power was
  below 0.15MW.
\end{description}
Under the standard workflow, when facing the task of finding the
active power trips of the turbine, the engineer would call an IT
expert who would then produce a specific script (in a proprietary
signal processing language developed by Siemens) such as
\begin{align*}
&\mathtt{message}(``\mathtt{active\ power\ trip}")  =\\
&\quad \$t1: \mathtt{eval}( >, \#\mathtt{activePower}, 1.5 ) :\\
&\quad\quad\quad \mathtt{for}( >= 10s)\\
&\quad\quad\quad\text{\&\&}\\
&\quad\quad\quad \mathtt{eval}( <, \#\mathtt{activePower}, 0.15 ) :\\
&\quad\quad\quad \mathtt{start}( \mathtt{after}[ 0s, 3s ] \$t1:\mathtt{end} ):\\
&\quad\quad\quad \mathtt{for}( >= 1m);
\end{align*}
for the turbine aggregated data stored in a table \TB, which looks as follows:
\begin{center}
\begin{tabular}{|c|c|c|c|c|c|}
\hline
turbineId & dateTime & activePower & rotorSpeed & mainFlame & \dots\\\hline
 & & \dots &  & & \\
tb0 & 2015-04-04\;12:20:48 & 2 & 1550 & 0 & \\
tb0 & 2015-04-04\;12:20:49 & 1.8 & 1400 & null & \\
tb0 & 2015-04-04\;12:20:52 & 1.7 & 1350 & 1 & \\
 & & \dots & & & \\\hline
\end{tabular}
\end{center}
The result of running the script is a log with records such as
\begin{center}
``2015-04-04\;12:22:17 $\mathtt{active\ power\ trip}$ tb0''
\end{center}
where information about all the events is accumulated.

When facing the same task but for a different turbine, the engineer may have to call the IT expert once again because different models of turbines and sensors may have different log/database formats. Moreover, the storage platform for the sensor data often changes (thus, currently Siemens are pondering over migrating certain data to a cloud-based storage).  Maintaining a set of scripts, one for each data source, does not provide an efficient solution since a query such as `find all the turbines that had an active power trip in May 2017' would require an intermediate database with integrated data of active power trips.
Another difficulty is that the definitions of events the engineer is interested in can also change. Some changes are minor, say the pressure threshold or the number of seconds in the active power trip definition, but some could be more substantial, such as `find the active power trips that were followed by a high pressure within 3 minutes that lasted for 30 seconds'\!. This modification would require rewriting the script above into a much longer one rather than using it as a module in the new definition.

The permanent involvement of an IT expert familiar with database technology incurs high costs for Siemens, and data gathering accounts for a major part of the time the service engineers spend at Siemens remote diagnostic centres, most of which due to the indirect access to data.

\subsection{Ontology-based data access}
Ontology-based data access (OBDA for short) offers a  different workflow that excludes the IT middleman from data gathering~\cite{PLCD*08}; consult also the recent survey by \citeA{IJCAI-18}. In a nutshell, the OBDA workflow in the Siemens context looks as follows. Domain experts develop and maintain an ontology that contains terms for the events the engineers may be interested in. IT experts develop and maintain  mappings that relate these terms to the database schemas. The engineer can now use familiar terms from the ontology and a graphical tool such as OptiqueVQS~\cite{VQSSoylu16} to construct and run queries such as $\mathsf{ActivePowerTrip}(\text{tb0})@x$. The task of the OBDA system such as Ontop~\cite{DBLP:conf/semweb/Rodriguez-MuroKZ13,DBLP:journals/semweb/CalvaneseCKKLRR17} will be, using the mappings, to rewrite the engineer's \emph{ontology-mediated  query} into an SQL query over the database and then execute it returning the time intervals $x$ where the turbine with the ID tb0 had active power trips.

Unfortunately, the ontology and query languages designed for OBDA and standardised by the W3C---the \OWLQL{} profile of \OWL{} and SPARQL---are not suitable for the Siemens case because they were not meant to deal with essentially \emph{temporal} data, concepts and properties. There have been several attempts to develop temporal OBDA.

One approach is to use the same \OWLQL{} as an ontology language, assuming that ontology axioms hold at all times, and extend the query language with various temporal operators~\cite{DBLP:conf/rr/Gutierrez-BasultoK12,BBL13,BLT13,OMNZK:13,DBLP:conf/rr/KlarmanM14,DBLP:conf/rweb/OzcepM14,DBLP:conf/sigmod/KharlamovBJKLMN16}. Unfortunately, \OWLQL{} is not able to define the temporal feature of `active power trip'\!, and so the engineer would have to capture it in a complex temporal query (or call an expert in temporal logic). Another known  approach is to allow the temporal operators of the linear-time temporal logic \LTL{} in both queries and ontologies~\cite{ArtaleKWZ13,DBLP:conf/ijcai/ArtaleKKRWZ15,GJK-IJCAI16}. For more details and further references, consult the recent survey by \citeA{DBLP:conf/time/ArtaleKKRWZ17}\footnote{Surveys of early developments in temporal deductive databases are given by~\citeA{DBLP:books/bc/tanselCGSS93/BaudinetCW93} and \citeA{DBLP:conf/dagstuhl/ChomickiT98}.}\!.

However, standard \LTL{} over a \emph{discrete} timeline such as $(\mathbb N,\le)$ or $(\mathbb Z,\le)$ is not able to adequately represent the temporal data and knowledge in the Siemens use case because measurements are taken and sent \emph{asynchronously} by \emph{multiple} sensors at \emph{irregular} time intervals, which can depend on the turbine model, sensor type, etc. To model measurements and events using discrete time, one could take a sufficiently small time unit (quantum), say 1 second, and encode `active power was below 0.15MW for a period of one minute' by an \LTL-formula of the form $\Lnext p \land \Lnext^2 p \land \dots \land \Lnext^{60} p$, where $\Lnext$ is the previous-time operator. One problem with this encoding is that it is clearly awkward, not succinct, and only works under the assumption that the active power is measured \emph{each and every second}. If, for some reason, a measurement is missing as in \TB, the formula becomes inadequate.
This problem can be solved by using the (more succinct) metric temporal logic $\MTL$ with  operators like $\boxminus_{[1,60]}$ interpreted as `at every time instant within the previous minute when a measurement was taken'\!. The satisfiability problem for the description logic $\mathcal{ALC}$ extended with such operators over $(\mathbb N,\le)$ was investigated by Guti\'errez-Basulto, Jung, and Ozaki~\citeyear{DBLP:conf/ecai/Gutierrez-Basulto16}.
%
A more fundamental issue with modelling turbine events using discrete time is that it only applies to data complying with the chosen quantum and requires amendments every time the  quantum has to be set to a different value because of a new equipment or because asynchronous sensor measurements start to happen more frequently.
Thus, a better way of modelling the temporal data and events under consideration is by means of a suitable fragment of $\MTL$ interpreted over \emph{dense} time such as the rationals $(\mathbb Q,\le)$ or reals $(\mathbb R,\le)$. This would allow us to capture, for example, that one event, say a sharp temperature rise, happened \emph{just before} (maybe a fraction of a quantum), and so possibly caused another event, say an emergency shutdown, which is a typical feature of an asynchronous behaviour of real-time systems where the actual time of event occurrences cannot be predicted at the modelling stage.

\subsection{Metric Temporal Logic}
The \emph{metric temporal logic} \MTL{}
was originally designed for modelling and reasoning about real-time systems \cite{DBLP:journals/rts/Koymans90,DBLP:journals/iandc/AlurH93}. \MTL{} is equipped  with two alternative semantics, \emph{pointwise} and \emph{continuous} (aka interval-based). In both semantics, the timestamps are taken from a dense timeline $(\mathbb T,\le)$ such as $(\mathbb Q,\le)$ or $(\mathbb R,\le)$. Under the pointwise semantics, an interpretation is a \emph{timed word}, that is, a finite or infinite sequence of pairs $(\Sigma_i, t_i)$, where $\Sigma_i$ is a subset of propositional variables that are assumed to hold at $t_i \in \mathbb T$ and $t_i < t_j$ for $i < j$. Under the continuous semantics, an interpretation is an assignment of a set  of propositional variables  to \emph{each} $t \in \mathbb T$. \MTL{} allows formulas such as $\boxplus_{[1.5,3]} \varphi$ (or $\diamondplus_{[1.5,3]} \varphi$) that holds at a moment $t$ if and only if $\varphi$ holds at every (respectively, some) moment in the interval $[t+1.5,t+3]$. However, under the pointwise semantics, $t$ must be a timestamp from the timed word and $\varphi$ must only hold at every (respectively, some) $t_i$ with $1.5 \le t_i - t \le 3$. Thus, $\boxplus_{[1,1]} \bot$ is satisfiable under the pointwise semantics, for example, by a timed word with $t_{i+1} - t_i > 1$, but not under the continuous semantics.

In the Siemens case, we assume that the real-time system is being continuously monitored, the result of the next measurement of a sensor is only recorded
when it exceeds the previous one by some fixed margin, and events such as active power trip can happen between measurements. This makes the continuous semantics a natural choice for temporal modelling.
%
%
The satisfiability problem for \MTL{} under this semantics turns out to be undecidable~\cite{DBLP:journals/iandc/AlurH93} and {\sc ExpSpace}-complete if the punctual operators such as $\diamondplus_{[1,1]}$ are disallowed~\cite{DBLP:journals/jacm/AlurFH96}; see also the work by \citeA{Ouaknine:2005:DMT:1078035.1079694,DBLP:conf/formats/OuaknineW08}. Note that, under the pointwise semantics, \MTL{} is decidable over finite timed words, though not primitive recursive~\cite{Ouaknine:2005:DMT:1078035.1079694}.



\subsection{Our contribution}
Having analysed two real-world scenarios of querying asynchronous real-time systems (to be discussed in Section~\ref{sec:uc}), we came to a conclusion that a basic ontology language for temporal OBDA should contain datalog rules with \MTL{} operators in their bodies. In this language, for example, the event of  active power trip can be defined by the rule
\begin{multline}\label{eq:power-trip}
\mathsf{ActivePowerTrip}(v) \leftarrow  \mathsf{Turbine}(v) \land{}
\boxminus_{[0,1m]}\mathsf{ActivePowerBelow0.15}(v) \land {}\\
\diamondminus_{[60s,63s]} \boxminus_{[0,10s]} \mathsf{ActivePowerAbove1.5}(v).
\end{multline}
The variables of the predicates in such rules range over a (non-temporal) object domain. Thus, the intended domain for $v$ in~\eqref{eq:power-trip} comprises turbines, their parts, sensors, etc. The underlying (dense) timeline is implicit: we understand~\eqref{eq:power-trip} as saying that $\mathsf{ActivePowerTrip}(v)$ holds at any given time instant $t$ if the pattern shown in the picture below has occurred before $t$:
\begin{center}
\begin{tikzpicture}[point/.style={draw, thick, circle, inner sep=1.5, outer
    sep=2}]
  \node[point,label=above:{$t$}] (t) at (0,0) {}  node[below right] {$\mathsf{ActivePowerTrip}$};


  \coordinate (t') at (-5,0);
  \draw[purple, ultra thick] (t) -- (t') node [above, midway] {$\mathsf{ActivePowerBelow0.15}$};

  \coordinate (t''') at (-5.4,0);
  \draw[purple, dotted, thick] (t') -- (t''') ;

  \coordinate (t'') at (-7.4,0);
  \draw[blue, ultra thick] (t''') -- (t'')  node [above left, pos=0.4] {$\mathsf{ActivePowerAbove1.5}$};

   \draw[thick, dotted] (t'') -- (-10,0) node [left, pos =1] {$v$};

   \draw [decorate,decoration={brace,amplitude=15pt, raise=2pt}]
(t) -- ($(t')+(0.15,0)$) node [midway,shift={(.4,-.7)}]
{$\textcolor{darkgreen}{1m}$};

   \draw [decorate,decoration={brace,amplitude=15pt, raise=7pt}]
($(t'')+(1.7,0)$) -- (t) node [midway,shift={(.4,.9)}]
{$\textcolor{darkgreen}{63s}$};


\draw [decorate,decoration={brace,amplitude=15pt, raise=2pt}]
(t''') -- ($(t'')+(.15,0)$) node [midway,shift={(.4,-.7)}]
{$\textcolor{darkgreen}{10s}$};
\end{tikzpicture}
\end{center}
Unlike model-checking liveness properties (that some events eventually
happen) in transition systems, our task is to query historical data
for events that have already happened and are actually implicitly
recorded in the data. As a consequence, we do not need ontology axioms
with eventuality operators in the head such as
$\diamondplus_{[0,3s]} \mathsf{ShutDown}(v) \leftarrow
\mathsf{ActivePowerTrip}(v)$ saying that an active power trip must be
followed by a shutdown within 3 seconds.  \OWLQL{} allows existential
quantification in the head of rules such as
$\exists u \, \mathsf{hasRotor}(v,u) \leftarrow \mathsf{Turbine}(v)$
stating that every turbine has a rotor. Although axioms of this sort are
present in the Siemens turbine configuration ontology~\cite{KharlamovMMNORS17}, we opted not to include $\exists$ in the head of rules in our language. On the one hand, we have not found  meaningful queries in the use cases for which such axioms would provide more answers. On the
other hand, it is known that existential axioms may considerably increase the
combined complexity of both
atemporal~\cite{DBLP:journals/ai/GottlobKKPSZ14,JACM} and temporal
ontology-mediated query
answering~\cite{DBLP:conf/ijcai/ArtaleKKRWZ15}. For these reasons, we
do not allow existential rules in our ontology language and leave
their investigation for future work.

The resulting temporal ontology language can be described as a datalog extension of the \emph{Horn fragment} of \MTL{} (without diamond operators in the head of rules). We denote this language by $\dMTL$ and prove in Section~\ref{sec:complexity} that answering ontology-mediated queries of the form $(\Pi, Q(\avec{v})@x)$ is {\sc ExpSpace}-complete for combined complexity, where $\Pi$ is a $\dMTL$ program, $Q(\avec{v})$ a goal with individual variables $\avec{v}$, and $x$ a variable over time intervals during which $Q(\avec{v})$ holds. On the other hand, we show that $\hMTL$ becomes undecidable if the diamond operators are allowed in the head of rules.
We also prove that answering \emph{propositional} $\dMTL$ queries is P-hard for data complexity. To compare, recall that answering ontology-mediated queries with propositional (not necessarily Horn) \LTL{} ontologies is \text{NC}$^1$-complete for data complexity~\cite{DBLP:conf/ijcai/ArtaleKKRWZ15}.

From the practical point of view, most interesting are \emph{nonrecursive} $\dMTL$ queries. We show in Section~\ref{nonrecursive} that answering such queries is in $\text{AC}^0$ for data complexity (assuming that data timestamps and the ranges of the temporal operators in $\dMTL$ programs are represented as finite binary fractions) and {\sc PSpace}-complete for combined complexity (even {\sc NP}-complete if the arity of predicates is bounded). In this case, we develop a query answering algorithm that can be implemented in standard SQL with window functions. We also present in Section~\ref{implementing} a framework for practical OBDA with nonrecursive $\dMTL$ queries and temporal log data stored in databases as shown above. Finally, in Section~\ref{sec:uc}, we evaluate our framework on two use cases. We develop a $\dMTL$ ontology for temporal concepts used in typical  queries at Siemens (e.g., $\mathsf{NormalStop}$ that takes place if events $\mathsf{ActivePowerOff}$, $\mathsf{MainFlameOff}$, $\mathsf{CoastDown6600to1500}$, and $\mathsf{CoastDown1500to200}$ happen in a certain temporal pattern). We also create a weather ontology defining standard meteorological concepts such as $\mathsf{Hurricane}$ ($\mathsf{HurricaneForceWind}$, wind with the speed above 118 km/h, lasting at least 1 hour).
Using Siemens sensor databases and MesoWest historical records of the weather stations across the US, we experimentally demonstrate  that our algorithm is efficient in practice and scales on large datasets of up to 8.3GB. We used two systems, PostgreSQL and Apache Spark, to evaluate our SQL programs. To our surprise, Apache Spark achieved tenfold better performance on the weather data than PostgreSQL. This effect can be attributed to the capacity of Spark to parallelise query execution as well as to the natural `modularity' of weather data by location.

An extended abstract of this paper was presented at AAAI-17~\cite{DBLP:conf/aaai/BrandtKKRXZ17}.


\section{Datalog\MTL}

In the standard metric temporal logic \MTL~\cite{DBLP:journals/jacm/AlurFH96}, the temporal domain is the real numbers $\mathbb{R}$, while the intervals $\range$ in the constrained temporal operators such as $\diamondplus_{\range}$ (sometime in the future within the interval $\range$ from now) have natural numbers or $\infty$ as their endpoints. In the context of the applications of $\MTL$ we deal with in this paper, it  is more natural to assume that the endpoints of $\range$ are non-negative \emph{dyadic rational numbers}---finite binary fractions\footnote{In other words, a dyadic rational is a  number of the form $n/2^m$, where $n \in \mathbb Z$ and $m \in \mathbb N$.} such as 101.011---or $\infty$. We denote the set of dyadic rationals by $\mathbb Q_2$ and remind the reader that $\mathbb Q_2$ is dense in $\mathbb R$ and, by Cantor's theorem, $(\mathbb Q_2,<)$ is isomorphic to $(\mathbb Q,<)$.
%
%
By an \emph{interval}, $\iota$, we mean any nonempty subset of $\mathbb Q_2$ of the form  $[t_1, t_2]$, $[t_1, t_2)$, $(t_1, t_2]$ or $(t_1, t_2)$, where $t_1,t_2\in \mathbb{Q}_2\cup \{-\infty, \infty\}$ and $t_1 \le t_2$. We identify $(t,\infty]$ with $(t,\infty$), $[-\infty,t]$ with $(-\infty,t]$, etc.
A \emph{range}, $\range$, is an interval with non-negative endpoints.  The temporal operators of \MTL{} take the form $\boxplus_{\range}$, $\diamondplus_{\range}$ and $\U_{\range}$, which refer to the future, and $\boxminus_{\range}$, $\diamondminus_{\range}$ and $\Si_{\range}$, which refer to the past.
The end-points of intervals and ranges are assumed to be represented in binary.

An \emph{individual term}, $\tau$, is an individual variable, $v$, or a  constant, $c$. As usual, we assume that there is a countably-infinite list of predicate symbols, $P$, with assigned arities. A $\dMTL$ \emph{program}, $\Pi$, is a finite set of \emph{rules} of the form
\begin{equation*}
A^+ \leftarrow A_1\land \dots \land A_k  \qquad \text{ or } \qquad \bot \leftarrow A_1\land \dots \land A_k,
\end{equation*}
where $k \ge 1$, each $A_i$ $(1 \le i \le k)$ is either an inequality $(\tau \ne \tau')$  or defined by the grammar
\begin{equation*}
A \ ::= \ P(\tau_1,\dots,\tau_m)\ \ \mid\ \ \top \ \mid \ \ \boxplus_{\range} A \ \ \mid\ \ \boxminus_{\range} A \ \ \mid \ \ \diamondplus_{\range} A \ \ \mid\ \ \diamondminus_{\range} A  \ \ \mid\ \ A \U_\range A' \ \ \mid\ \ A \Si_\range A'
\end{equation*}
and $A^+$ is given by the same grammar but \emph{without} any `non-deterministic' operators $\diamondplus_{\range}$, $\diamondminus_{\range}$, $\U_\range$, $\Si_\range$. The atoms $A_1,\dots,A_k$ constitute the \emph{body} of the rule, while $A^+$ or $\bot$ its \emph{head}. As usual, we assume that every variable in the head of a rule also occurs in its body.

A \emph{data instance}, $\D$, is a finite set of \emph{facts} of the form $P(\avec{c})@\iota$, where $P(\avec{c})$ is a ground atom (with a tuple $\avec{c}$ of individual constants) and $\iota$ an interval. The fact $P(\avec{c})@\iota$ states that $P(\avec{c})$ holds throughout the interval $\iota$. We denote by $\num(\D)$ the set of numbers (excluding $\pm\infty$) that occur in $\D$, and by $\num(\Pi, \D)$ the set of number occurring in $\Pi$ or $\D$.

An \emph{interpretation}, $\Mmf$, is based on a \emph{domain} $\Delta \ne \emptyset$ for the individual variables and constants. For any $m$-ary predicate $P$, $m$-tuple $\avec{a}$ from $\Delta$, and moment of time $t \in \mathbb R$, the interpretation $\mathfrak M$ specifies whether $P$ is \emph{true on~$\avec{a}$ at $t$}, in which case we write $\mathfrak M,t \models P(\avec{a})$. Let $\nu$ be an \emph{assignment} of elements of $\Delta$ to the individual terms. To simplify notation, we adopt the standard name assumption according to which $\nu(c) = c$, for every individual constant $c$.
We then set inductively:
\begin{align*}
& \Mmf, t \models^\nu P(\avec{\tau}) \quad\text{ iff }\quad \Mmf, t \models P(\nu(\avec{\tau})),\\
& \Mmf, t \models^\nu (\tau \ne \smash{\tau'}) \quad\text{iff }\quad \nu(\tau) \ne \nu(\smash{\tau'}),\\
& \mathfrak M,t \models^\nu \boxplus_{\range} A \quad \text{ iff } \quad  \mathfrak{M},s\models^\nu A \text{ for all } s \text{ with }s-t \in \range,\\
& \mathfrak M,t \models^\nu \boxminus_{\range} A \quad \text{ iff } \quad  \mathfrak{M},s\models^\nu A \text{ for all } s \text{ with } t-s \in \range,\\
& \mathfrak{M},t\models^\nu \diamondplus_{\range} A \quad \text{ iff } \quad  \mathfrak{M},s\models^\nu A \text{ for some } s \text{ with }s-t \in \range,
\end{align*}
\begin{align*}
& \mathfrak{M},t\models^\nu \diamondminus_{\range} A \quad \text{iff} \quad \mathfrak{M},s\models^\nu A \text{ for some } s \text{ with } t-s \in \range,\\
& \mathfrak{M},t\models^\nu A \U_\range A' \quad \text{iff} \quad  \mathfrak{M},t'\models^\nu A' \text{ for some } t' \text{ with }t'-t \in \range \text{ and } \mathfrak{M},s\models^\nu A \text{ for all } s \in (t, t'),\\
& \mathfrak{M},t\models^\nu A \Si_\range A' \quad \text{iff} \quad  \mathfrak{M},t'\models^\nu A' \text{ for some } t' \text{ with }t-t' \in \range \text{ and } \mathfrak{M},s\models^\nu A \text{ for all } s \in (t', t),\\
& \mathfrak{M},t\models^\nu \top,\\
& \mathfrak{M},t \not \models^\nu \bot.
%
\end{align*}
The picture below illustrates the semantics of the `future' operators for $\range = [d, e]$:

{\centering
\begin{tikzpicture}[point/.style={draw, thick, circle, inner sep=1.5, outer
    sep=2}]
  \node[point] (t) at (0,0) {} node[below left] {$t$} node[above] {$\textcolor{purple}{\boxplus_\range} \textcolor{blue}{A}\ \ \ \ \quad$};

  \draw[thick] (-1, 0) -- (t);

  \coordinate (t') at (3,0);
  \draw[purple, thick] (t) -- (t');

  \coordinate (t'') at (5,0);
  \draw[blue, ultra thick] (t') -- (t'') node [below, midway] {$\textcolor{black}{s}$} node [above, midway] {$A$};

   \draw[thick, -open triangle 60] (t'') -- (6,0);

   \draw [decorate,decoration={brace,amplitude=15pt, raise=2pt}]
(t) -- (t') node [midway,shift={(.3,.7)}]
{$\textcolor{darkgreen}{d}$};

   \draw [decorate,decoration={brace,amplitude=15pt, raise=7pt}]
(t'') -- (t) node [midway,shift={(.3,-.7)}]
{$\textcolor{darkgreen}{e}$};

   \draw[blue] ($(t')+(0,.1)$) -- ($(t')+(0,-.1)$);
   \draw[blue] ($(t'')+(0,.1)$) -- ($(t'')+(0,-.1)$);

\end{tikzpicture}

\begin{tikzpicture}[point/.style={draw, thick, circle, inner sep=1.5, outer
    sep=2}]
  \node[point] (t) at (0,0) {} node[below left] {$t$} node[above] {$\textcolor{purple}{\diamondplus_\range} \textcolor{blue}{A}\ \ \ \ \quad$};

  \draw[thick] (-1, 0) -- (t);

  \coordinate (t') at (3,0);
  \draw[purple, thick] (t) -- (t');

  \coordinate (t'') at (5,0);
  \draw[blue, thick, dotted] (t') -- (t'') node [below, pos = 0.7] {$\textcolor{black}{s}$} node [above, pos = 0.7] {$A$} node[pos = 0.7] {$\bullet$};

   \draw[thick, -open triangle 60] (t'') -- (6,0);

   \draw [decorate,decoration={brace,amplitude=15pt, raise=2pt}]
(t) -- (t') node [midway,shift={(.3,.7)}]
{$\textcolor{darkgreen}{d}$};

   \draw [decorate,decoration={brace,amplitude=15pt, raise=7pt}]
(t'') -- (t) node [midway,shift={(.3,-.7)}]
{$\textcolor{darkgreen}{e}$};

   \draw[blue] ($(t')+(0,.1)$) -- ($(t')+(0,-.1)$);
   \draw[blue] ($(t'')+(0,.1)$) -- ($(t'')+(0,-.1)$);

\end{tikzpicture}

\begin{tikzpicture}[point/.style={draw, thick, circle, inner sep=1.5, outer
    sep=2}]
  \node[point] (t) at (0,0) {} node[below left] {$t$} node[above] {$\textcolor{orange}{A}\,\textcolor{purple}{\U_\range} \textcolor{blue}{A'}\ \ \ \ \quad$};

  \draw[thick] (-1, 0) -- (t);

  \coordinate (t') at (3,0);
  \draw[purple, thick] (t) -- (t');

  \coordinate (t'') at (5,0);
  \draw[blue, thick, dotted] (t') -- (t'') node [below, pos = 0.7] {$\textcolor{black}{s}$} node [above, pos = 0.7] {$A'$} node[pos = 0.7] {$\bullet$};

   \draw[thick, -open triangle 60] (t'') -- (6,0);

   \draw [decorate,decoration={brace,amplitude=15pt, raise=2pt}]
(t) -- (t') node [midway,shift={(.3,.7)}]
{$\textcolor{darkgreen}{d}$};

   \draw [decorate,decoration={brace,amplitude=15pt, raise=7pt}]
(t'') -- (t) node [midway,shift={(.3,-.7)}]
{$\textcolor{darkgreen}{e}$};

   \draw[blue] ($(t')+(0,.1)$) -- ($(t')+(0,-.1)$);
   \draw[blue] ($(t'')+(0,.1)$) -- ($(t'')+(0,-.1)$);

\draw[orange, ultra thick] ($(t)+(0.15,-.1)$) -- ($(t')+(1.3,-.1)$) node[midway, below, shift={(0,.1)}] {$\textcolor{orange}{A}$};

\end{tikzpicture}

}

We say that $\Mmf$ \emph{satisfies} a $\dMTL$ program $\Pi$ under an assignment $\nu$ if, for  \emph{all} $t \in \mathbb R$ and all the rules $A \leftarrow A_1\land \dots \land A_k$ in $\Pi$, we have
\begin{equation*}
\Mmf, t \models^\nu A \ \ \text{ whenever }\ \ \Mmf, t \models^\nu A_i \ \text{ for }  1 \le i \le k.
\end{equation*}
We call $\Mmf$ a \emph{model} of
$\Pi$ and $\D$ and write $\Mmf \models (\Pi, \D)$ if $\Mmf$ satisfies $\Pi$ under every assignment, and
$\mathfrak M,t \models P(\avec{c})$ for any $P(\avec{c})@\iota$ in
$\D$ and any $t \in \iota$.  $\Pi$ and $\D$ are \emph{consistent} if
they have a model.

Note that ranges $\range$ in the temporal operators can be punctual $[r,r]$, in which case $\boxplus_{[r,r]} A$ is equivalent to $\diamondplus_{[r,r]} A$, and $\boxminus_{[r,r]} A$ to $\diamondminus_{[r,r]} A$. We also observe that $\top \Si_\range A$ is equivalent to $\diamondminus_\range A$ (that is, $\mathfrak{M},t\models^\nu \top \Si_\range A$ iff $\mathfrak{M},t\models^\nu \diamondminus_\range A$ for all $\mathfrak M$, $t$ and $\nu$), and $\top \U_\range A$ is equivalent to $\diamondplus_\range A$.

A $\dMTL$ \emph{query} takes the form $(\Pi, \q(\avec{v},x))$, where $\Pi$ is a $\dMTL$ program and $\q(\avec{v},x) = Q(\avec{\tau})@x$, for some predicate $Q$, $\avec{v}$ is a tuple of all individual variables occurring in the terms $\avec{\tau}$, and $x$ an \emph{interval variable}. A \emph{certain answer} to $(\Pi, \q(\avec{v},x))$ over a data instance $\D$ is a pair $(\avec{c},\iota)$ such that $\avec{c}$ is a tuple of constants from $\D$ of the same length as $\avec{v}$, $\iota$ an interval and, for any $t \in \iota$,  any model $\Mmf$ of $\Pi$ and $\D$, and any assignment $\nu$ mapping $\avec{v}$ to $\avec{c}$, we have $\Mmf,t \models^\nu Q(\avec{\tau})$. In this case, we write $\Mmf,t \models \q(\avec{c})$.
If the tuple $\avec{v}$ is empty (that is, $Q(\avec{\tau})$ does not have any individual variables), then we say that $\iota$ is a \emph{certain answer} to $(\Pi, \q(x))$ over $\D$.

\begin{example}\em
Suppose that $\Pi$ has one rule~\eqref{eq:power-trip} and $\D$
consists of the facts
\begin{align*}
& \mathsf{Turbine(tb0)}@(-\infty, \infty),\\
& \mathsf{ActivePowerAbove1.5(tb0)}@[13{:}00{:}00, 13{:}00{:}15),\\
& \mathsf{ActivePowerBelow0.15(tb0)}@[13{:}00{:}17, 13{:}01{:}25).
\end{align*}
Then any subinterval of the interval
$[13{:}01{:}17, 13{:}01{:}18)$ is a certain answer to the $\dMTL$ query
$(\Pi,\mathsf{ActivePowerTrip(tb0)}@x)$.
\end{example}
\begin{example}\label{ex:dance} \em
We illustrate the importance of the operators $\Si$ (since) and $\U$ (until) using an example inspired by the ballet moves ontology~\cite{DBLP:conf/esws/RahebMRPI17}. Suppose we want to say that $\mathsf{SupportBending}$ is a move spanning from the beginning to the end  of $\mathsf{RightAndLeftSupportLowPlace}$ provided that it is preceded by $\mathsf{RightAndLeftSupportMiddlePlace}$, which ends within $3s$ from the beginning of the $\mathsf{RightAndLeftSupportLowPlace}$, as shown below:

{\centering
\begin{tikzpicture}[point/.style={draw, thick, circle, inner sep=1.5, outer
    sep=2}]

  \draw[thick, blue] (0, 0) -- (6, 0) node[midway,above] {$\textcolor{blue}{\mathsf{RightAndLeftSupportMiddlePlace}}$};

  \draw[thick, orange] (8, 0) -- (15, 0) node[midway,above] {$\textcolor{orange}{\mathsf{RightAndLeftSupportLowPlace}}$};

  \draw[thick, darkgreen] (8, -1) -- (15, -1) node[midway,above] {$\textcolor{darkgreen}{\mathsf{SupportBending}}$};

  \draw [decorate,decoration={brace,amplitude=15pt, raise=2pt}]
(6,0) -- (8.5,0) node [midway,shift={(.3,.7)}]
{$\textcolor{darkgreen}{3s}$};

 \draw[thick, gray] (0, -2) -- (8.5, -2) node[midway,above] {$\textcolor{gray}{\diamondminus_{[0, 3s]}\mathsf{RightAndLeftSupportMiddlePlace}}$};

\end{tikzpicture}
}\\[5mm]
We can define the $\mathsf{SupportBending}$ move using the following rule:
\begin{multline*}
\mathsf{SupportBending} \leftarrow \\
\mathsf{RightAndLeftSupportLowPlace} \Si_{[0, \infty)} \big(\diamondminus_{[0, 3s]} \mathsf{RightAndLeftSupportMiddlePlace}\big).
\end{multline*}
(note that a definition of $\mathsf{SupportBending}$ in $\dMTL$ would be problematic if only the $\Box$ and $\Diamond$ operators were available).
\end{example}

By \emph{answering $\dMTL$ queries} we understand the problem of
checking whether a given pair $(\avec{c},\iota)$ is a certain answer
to a given $\dMTL$ query $(\Pi, \q(\avec{v},x))$ over a given data
instance~$\D$. The \emph{consistency} (or \emph{satisfiability}) \emph{problem} is to check whether a given $\dMTL$ program $\Pi$ is consistent with a given data
instance~$\D$. As usual in database theory~\cite{Vardi82} and ontology-mediated query answering, we distinguish between the \emph{combined complexity} and the \emph{data complexity} of these problems: the former regards all the ingredients---$\Pi$, $\q(\avec{c},\iota)$ and $\D$---as input, while the latter one assumes that $\Pi$ and $\q$ are fixed and only~$\D$ and $(\avec{c},\iota)$ are the input.
\begin{proposition}\label{prop:red}
Answering $\dMTL$ queries and consistency checking are polynomially reducible to the complement of each other.
\end{proposition}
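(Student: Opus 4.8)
The plan is to exhibit two polynomial-time (indeed, logspace) many-one reductions.

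\emph{Consistency to the complement of query answering.} First I would introduce a fresh nullary predicate $G$ occurring in neither $\Pi$ nor $\D$, and claim that $(\Pi,\D)$ is inconsistent iff $[0,0]$ is a certain answer to the query $(\Pi, G@x)$ over $\D$. If $(\Pi,\D)$ has no model, then vacuously every interval is a certain answer to every query; conversely, given a model $\Mmf$ of $(\Pi,\D)$, one obtains another model by making $G$ false at every time point (this is legitimate, as $G$ appears in no rule and no fact), and this model witnesses that $[0,0]$ is not a certain answer. Hence $(\Pi,\D)$ is consistent iff $[0,0]$ is not a certain answer to $(\Pi,G@x)$ over $\D$, which is an instance of the complement of query answering. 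This part is entirely routine.

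\emph{Query answering to the complement of consistency.} Given a query $(\Pi, Q(\avec\tau)@x)$, a data instance $\D$, and a candidate $(\avec c,\iota)$, I would first substitute $\avec v\mapsto\avec c$ in $\avec\tau$ to obtain a ground atom $Q(\avec{c}')$; by definition, $(\avec c,\iota)$ is a certain answer iff $Q(\avec{c}')$ holds at every point of $\iota$ in every model of $(\Pi,\D)$. The key step is to encode ``$Q(\avec{c}')$ holds throughout $\iota$'' as the truth, at a single fixed time point $t^*$, of a body expression $\beta_\iota$ built from $Q(\avec{c}')$ using box operators: if $\iota$ is bounded below, take $t^*$ to be its lower endpoint and $\beta_\iota:=\boxplus_\range Q(\avec{c}')$ with $\range$ the translate of $\iota$ to lower endpoint $0$ (keeping its open/closed sides); if $\iota$ is bounded above only, take $t^*$ to be its upper endpoint and use $\boxminus_\range$ symmetrically; and if $\iota=(-\infty,\infty)$, take $t^*:=0$ and $\beta_\iota:=\boxplus_{[0,\infty)}Q(\avec{c}')\land\boxminus_{[0,\infty)}Q(\avec{c}')$. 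In every case the endpoints of the ranges that arise are dyadic rationals or $\infty$, so $\beta_\iota$ is a legal conjunction of body atoms, and a short case analysis on the semantics shows that $\Mmf,t^*\models\beta_\iota$ iff $\Mmf,s\models Q(\avec{c}')$ for all $s\in\iota$. Now I would pick a fresh nullary predicate $S$ and set $\Pi':=\Pi\cup\{\,\bot\leftarrow S\land\beta_\iota\,\}$ and $\D':=\D\cup\{\,S@[t^*,t^*]\,\}$, and prove: $(\avec c,\iota)$ is a certain answer to $(\Pi,Q(\avec\tau)@x)$ over $\D$ iff $(\Pi',\D')$ is inconsistent. For the forward direction, any model of $(\Pi',\D')$ is a model of $(\Pi,\D)$ in which $S$, and hence (by the certain-answer assumption) also $\beta_\iota$, is true at $t^*$, so the constraint $\bot\leftarrow S\land\beta_\iota$ fires at $t^*$, a contradiction; thus no such model exists. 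For the converse, from a model $\Mmf$ of $(\Pi,\D)$ with $Q(\avec{c}')$ false somewhere in $\iota$ (so $\beta_\iota$ false at $t^*$) one obtains a model of $(\Pi',\D')$ by redefining $S$ to be true exactly at $t^*$.

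Both constructions add only one rule and one fact of size linear in the input, so they are computable in polynomial (in fact logarithmic) space. The one place that needs care — and the step I expect to be the most error-prone, although it is otherwise routine — is the construction of $t^*$ and $\beta_\iota$: one must get the open/closed sides of $\range$ right and cover the finitely many shapes of $\iota$ (its two endpoints may independently be finite, $-\infty$, or $\infty$) so that the equivalence ``$\Mmf,t^*\models\beta_\iota$ iff $Q(\avec{c}')$ holds throughout $\iota$ in $\Mmf$'' holds uniformly. Everything else follows directly from the definitions of model and of certain answer.
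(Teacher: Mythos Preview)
Your proof is correct and follows essentially the same approach as the paper. The only cosmetic difference is the choice of anchor point: the paper fixes the marker fact at time $0$ and splits the interval $\iota$ into a past part (covered by $\boxminus$) and a future part (covered by $\boxplus$), whereas you anchor at an endpoint of $\iota$ and use a single box; both encodings are equivalent and your version handles the various shapes of $\iota$ slightly more uniformly.
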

\begin{proof}
Suppose first that we want to check whether $(\avec{c}, \iota)$ is a certain answer to $(\Pi, \q(\avec{v},x))$ over $\D$, where $\q(\avec{v},x) = Q(\avec{\tau})@x$ and $\iota = [-t_1, t_2)$, $t_1, t_2 \in \mathbb{Q}_2^{\geq 0}$; other types of $\iota$ are considered analogously. Consider the following program $\Pi'$ and data instance $\D'$:
\begin{align*}
& \Pi' = \Pi \cup \{ \bot \leftarrow P(\avec{v}) \land \boxminus_{[0,t_1]} Q(\avec{v}) \land \boxplus_{(0,t_2)} Q(\avec{v})\},\\
& \D' = \D \cup \{P(\avec{c})@[0,0] \},
\end{align*}
where $P$ is a fresh predicate. It is readily seen that $(\avec{c}, \iota)$ is a certain answer to $(\Pi, \q(\avec{v},x))$ over $\D$ iff $\Pi'$ is \emph{not} consistent with $\D'$. Conversely, $\Pi$ and $\D$ are consistent iff $[0,0]$ is \emph{not} a certain answer to $(\Pi, P@x)$ over $\D$, where $P$ is a fresh $0$-ary predicate, that is, a propositional variable.
\end{proof}

We conclude this section by reminding the reader that, over the integer numbers $(\mathbb Z,<)$, \MTL{} is as expressive as the \emph{linear temporal logic} \LTL{} with the operators $\Rnext$ (at the next moment), $\U$ (until), $\Rbox$ (always in the future), $\Rdiamond$ (some time in the future) and their past counterparts $\Lnext$, $\mathcal{S}$, $\Lbox$ and $\Ldiamond$. For example, the \LTL-formula
$\Rnext A$ is equivalent to $\diamondplus_{[1,1]}A$ and $A \U B$ under the irreflexive semantics to $A\U_{(0,\infty)}B$; conversely, $\diamondplus_{[2,3]}A$ is clearly equivalent to the \LTL-formula $\Rnext\Rnext A \lor \Rnext\Rnext\Rnext A$. However, \MTL{} operators are more succinct, which explains why \MTL-satisfiability over $(\mathbb Z,<)$ is \textsc{ExpSpace}-complete~\cite{DBLP:journals/iandc/AlurH93,DBLP:conf/ictac/FuriaS08} whereas \LTL-satisfiability is \textsc{PSpace}-complete~\cite{Sistla&Clarke85}.

In the next section, we show that consistency checking for $\dMTL$ programs is \textsc{ExpSpace}-complete for combined complexity. It follows from Proposition~\ref{prop:red} that answering $\dMTL$ que\-ries is \textsc{ExpSpace}-complete as well. On the other hand, we also prove that answering propositional $\dMTL$ queries is P-hard for data complexity, and that the extension of $\dMTL$ with $\diamondplus$ in the head of rules leads to undecidability.


\section{Complexity of answering $\dMTL$ queries}
\label{sec:complexity}

Observe first that every $\dMTL$ program $\Pi$ can be transformed (using polynomially-many fresh predicates) to a $\dMTL$ program in \emph{normal form} that only contains rules such as
\begin{align}
& P(\avec{\tau}) \leftarrow \bigwedge_{i\in I} P_i(\avec{\tau}_i), &&  \bot \leftarrow \bigwedge_{i\in I} P_i(\avec{\tau}_i),\label{eq:nf-horn}\\
&  P(\avec{\tau}) \leftarrow P_1 (\avec{\tau}_1)\Si_\range P_2(\avec{\tau}_2), &&  P(\avec{\tau}) \leftarrow P_1(\avec{\tau}_1) \U_\range P_2(\avec{\tau}_2),\label{eq:nf-to-box}\\
& P(\avec{\tau}) \leftarrow \boxminus_{\range} P_1(\avec{\tau}_1), && P(\avec{\tau}) \leftarrow \boxplus_{\range} P_1(\avec{\tau}_1), \label{eq:nf-from-box}
\end{align}
and gives the same certain answers as $\Pi$ over any data instance. (In particular, $\dMTL$ programs in normal form do not contain occurrences of the diamond operators.) For example, we can replace the rule $\boxplus_{\range'} P(\avec{\tau}) \leftarrow P_1(\avec{\tau}_1) \land
\boxminus_{\range} P_2(\avec{\tau}_2)$ in $\Pi$ with three rules
\begin{align*}
& P'(\avec{\tau}) \leftarrow P_1(\avec{\tau}_1) \land
 P'_2(\avec{\tau}_2),\\
& P'_2(\avec{\tau}_2) \leftarrow \boxminus_\range P_2(\avec{\tau}_2),\\
& P(\avec{\tau}) \leftarrow \top \Si_{\range'} P'(\avec{\tau}),
\end{align*}
where $P'$ is a fresh predicate of the same arity as $P$ and $P'_2$ a fresh predicate of the same arity as $P_2$.
Moreover, we can only consider those programs and data instances  where intervals take one of the following two forms:
\begin{itemize}
\item[--] $(t_1, t_2)$ with $t_1,t_2\in \mathbb{Q}_2\cup \{ -\infty,\infty\}$,

\item[--] $[t,t]$ with $t \in \mathbb{Q}_2$; such intervals are called \emph{punctual}.
\end{itemize}
For example, a data instance $\D = \D' \cup \{P(\avec{c})@(t_1, t_2] \}$ is equivalent to the data instance
$$\D = \D' \cup \{P(\avec{c})@(t_1, t_2),\ P(\avec{c})@[t_2, t_2] \}
$$
in the sense that is gives the same certain answers as $\D$, the rule $P(v) \leftarrow \boxminus_{(r_1, r_2]} P'(v)$ is equivalent to $P(v) \leftarrow \boxminus_{(r_1, r_2)} P'(v) \land \boxminus_{[r_2, r_2]} P'(v)$, whereas the rule $P(v) \leftarrow P_1(v) \U_{(r_1, r_2]} P_2(v)$ is equivalent to the pair of rules
$$
P(v) \leftarrow P_1(v) \U_{(r_1, r_2)} P_2(v),\quad P(v) \leftarrow P_1 (v) \U_{[r_2, r_2]} P_2(v).
$$
We use the following notations. We assume that $\langle$ is one of $($ and $[$, while $\rangle$ is one of $)$ and $]$. Given an interval $\iota = \langle \iota_b, \iota_e \rangle$ and a range $\range$, we set
\begin{align*}
  \iota \po \range = %
  \begin{cases}
    \langle \iota_b + r, \iota_e +r \rangle, &\text{ if }\range = [r,r],\\
    (\iota_b + r_1, \iota_e + r_2), & \text{ if } \range = (r_1, r_2),
      \end{cases} &\quad
  \iota \mo \range = %
  \begin{cases}
    \langle \iota_b - r, \iota_e -r \rangle, & \text{ if }\range = [r,r],\\
    (\iota_b - r_2, \iota_e - r_1), & \text{ if }\range = (r_1, r_2).
      \end{cases}
\end{align*}
In other words,
$
\iota \po \range = \{ t + k \mid t\in\iota \text{ and } k \in\range\}
$ and
$
\iota \mo \range = \{ t - k \mid t\in\iota \text{ and } k \in\range\}
$. We also set
\begin{align*}
       \iota \mc \range =& %
  \begin{cases}
    \langle \iota_b - r, \iota_e -r \rangle, & \text{ if }\range = [r,r],\\
    [\iota_b - r_1, \iota_e - r_2], & \text{ if }\range = (r_1, r_2),\quad r_2, \iota_e \in \mathbb{Q}_2,\\
    [\iota_b - r_1, \infty), & \text{ if }\range = (r_1, r_2),\quad r_2 = \infty \text{ or } \iota_e= \infty,
      \end{cases}\\
         \iota \pc \range =& %
  \begin{cases}
    \langle \iota_b + r, \iota_e +r \rangle, & \text{ if }\range = [r,r],\\
    [\iota_b + r_2, \iota_e + r_1], & \text{ if }\range = (r_1, r_2), \quad  r_2, \iota_b \in \mathbb{Q}_2,\\
    (-\infty, \iota_e + r_1], & \text{ if }\range = (r_1, r_2), \quad r_2= \infty \text{ or }\iota_b= -\infty.
      \end{cases}
\end{align*}
%
%
We assume that $\iota \mc \range$ and $\iota \pc \range$ are only defined if $r_2 - r_1 \leq \iota_e - \iota_b$, in which case we write $\range \sqsubseteq \iota$. Thus, $\iota \mc \range$ is defined if there is $t'$ such that $t' + k \in \iota$, for all $k\in\range$. Symmetrically, $\iota \pc \range$ is defined if there is $t'$ such that $t' - k \in \iota$. The picture below illustrates the intuition behind $\iota \po \range$ and $\iota \pc \range$, for non-punctual $\range$, and the difference between them:

\noindent
\begin{center}
	\begin{tikzpicture}[point/.style={draw, thick, circle, inner sep=1.5, outer
		sep=2}]
	\coordinate (f) at (-0.5,4);
	\draw[black](f) node  {$\textcolor{black}{\iota \po \range}$};
	
	\coordinate (t) at (4,4);
	\draw[gray, dashed, thick] (0.5, 4) -- (t);
	\draw[black](t) node  {$\textcolor{black}{(}$};
	
	\coordinate (t') at (12,4);
	\draw[black, thick] (t) -- (t');
	\draw[black](t') node  {$\textcolor{black}{)}$};
	
	\draw[gray, dashed, thick, -open triangle 60] (t') -- (13,4);
	
	\coordinate (f) at (-0.5,3);
	\draw[darkgreen](f) node  {$\textcolor{darkgreen}{\range}$};
	
	\coordinate (t) at (1,3);
	\draw[gray, dashed, thick] (0.5, 3) -- (t);
	\draw[darkgreen](t) node  {$\textcolor{darkgreen}{(}$};
	
	\coordinate (t') at (3,3);
	\draw[darkgreen, thick] (t) -- (t');
	\draw[darkgreen](t') node  {$\textcolor{darkgreen}{)}$};
	
	\coordinate (t'') at (4,3);
	\draw[gray, thick] (t') -- (t'');
	\draw[darkgreen](t') node  {$\textcolor{darkgreen}{)}$};
	
	\draw[gray] ($(t'')+(0,.1)$) -- ($(t'')+(0,-.1)$);
	
	\draw [darkgreen,decorate,decoration={brace,amplitude=5pt, raise=5pt}]
	(t) -- (t'') node [midway,shift={(.2,.5)}]
	{$\textcolor{darkgreen}{r_2}$};
	
	\draw [darkgreen,decorate,decoration={brace,amplitude=5pt, raise=5pt, mirror}]
	(t') -- (t'') node [midway,shift={(.2,-.5)}]
	{$\textcolor{darkgreen}{r_1}$};
	
	\coordinate (x) at (9,3);
	\draw[gray, dashed, thick] (t'') -- (x);
	\draw[blue](x) node  {$\textcolor{blue}{(}$};
	
	\coordinate (x') at (11,3);
	\draw[blue, thick] (x) -- (x');
	\draw[blue](x') node  {$\textcolor{blue}{)}$};
	
	\coordinate (x'') at (12,3);
	\draw[gray, thick] (x') -- (x'');
	\draw[blue](x') node  {$\textcolor{blue}{)}$};
	
	\draw[gray] ($(x'')+(0,.1)$) -- ($(x'')+(0,-.1)$);
	
	\draw [blue,decorate,decoration={brace,amplitude=5pt, raise=5pt}]
	(x) -- (x'') node [midway,shift={(.2,.5)}]
	{$\textcolor{blue}{r_2}$};
	
	\draw [blue,decorate,decoration={brace,amplitude=5pt, raise=5pt, mirror}]
	(x') -- (x'') node [midway,shift={(.2,-.5)}]
	{$\textcolor{blue}{r_1}$};
	
	\draw[gray, dashed, thick, -open triangle 60] (x'') -- (13,3);
	
	\coordinate (f) at (-0.5,2);
	\draw[myred](f) node  {$\textcolor{myred}{\iota}$};
	
	\coordinate (t) at (3,2);
	\draw[gray, dashed, thick] (0.5, 2) -- (t);
	\draw[myred](t) node  {$\textcolor{myred}{\langle}$};
	
	\coordinate (t') at (9,2);
	\draw[myred, thick] (t) -- (t');
	\draw[myred](t') node  {$\textcolor{myred}{\rangle}$};		
	
    \draw[gray, dashed, thick, -open triangle 60] (t') -- (13,2);
	
	\coordinate (f) at (-0.5,1);
	\draw[darkgreen](f) node  {$\textcolor{darkgreen}{\range}$};
	
	\coordinate (t) at (3,1);
	\draw[gray, dashed, thick] (0.5, 1) -- (t);
	\draw[darkgreen](t) node  {$\textcolor{darkgreen}{(}$};
	
	\coordinate (t') at (5,1);
	\draw[darkgreen, thick] (t) -- (t');
	\draw[darkgreen](t') node  {$\textcolor{darkgreen}{)}$};
	
	\coordinate (t'') at (6,1);
	\draw[gray, thick] (t') -- (t'');
	\draw[darkgreen](t') node  {$\textcolor{darkgreen}{)}$};
	
	\draw[gray] ($(t'')+(0,.1)$) -- ($(t'')+(0,-.1)$);
	
	\draw [darkgreen,decorate,decoration={brace,amplitude=5pt, raise=5pt}]
	(t) -- (t'') node [midway,shift={(.2,.5)}]
	{$\textcolor{darkgreen}{r_2}$};
	
	\draw [darkgreen,decorate,decoration={brace,amplitude=5pt, raise=5pt, mirror}]
	(t') -- (t'') node [midway,shift={(.2,-.5)}]
	{$\textcolor{darkgreen}{r_1}$};
	
	\coordinate (x) at (7,1);
	\draw[gray, dashed, thick] (t'') -- (x);
	\draw[blue](x) node  {$\textcolor{blue}{(}$};
	
	\coordinate (x') at (9,1);
	\draw[blue, thick] (x) -- (x');
	\draw[blue](x') node  {$\textcolor{blue}{)}$};
	
	\coordinate (x'') at (10,1);
	\draw[gray, thick] (x') -- (x'');
	\draw[blue](x') node  {$\textcolor{blue}{)}$};
	
	\draw[gray] ($(x'')+(0,.1)$) -- ($(x'')+(0,-.1)$);
	
	\draw [blue,decorate,decoration={brace,amplitude=5pt, raise=5pt}]
	(x) -- (x'') node [midway,shift={(.2,.5)}]
	{$\textcolor{blue}{r_2}$};
	
	\draw [blue,decorate,decoration={brace,amplitude=5pt, raise=5pt, mirror}]
	(x') -- (x'') node [midway,shift={(.2,-.5)}]
	{$\textcolor{blue}{r_1}$};
	
\draw[gray, dashed, thick, -open triangle 60] (x'') -- (13,1);
	
	\coordinate (f) at (-0.5,0);
	\draw[black](f) node  {$\textcolor{black}{\iota \pc \range}$};
	
	\coordinate (t) at (6,0);
	\draw[gray, dashed, thick] (0.5, 0) -- (t);
	\draw[black](t) node  {$\textcolor{black}{[}$};
	
	\coordinate (t') at (10,0);
	\draw[black, thick] (t) -- (t');
	\draw[black](t') node  {$\textcolor{black}{]}$};
	
\draw[gray, dashed, thick, -open triangle 60] (t') -- (13,0);
	
	\coordinate (x) at (3,2.7);
	\coordinate (x') at (3,2.2);
	\draw[gray, dotted, thick] (x) -- (x');	
	
	\coordinate (x) at (3,1.7);
	\coordinate (x') at (3,1.2);
	\draw[gray, dotted, thick] (x) -- (x');		
	
	\coordinate (x) at (9,2.7);
	\coordinate (x') at (9,2.2);
	\draw[gray, dotted, thick] (x) -- (x');	
	
	\coordinate (x) at (9,1.7);
	\coordinate (x') at (9,1.2);
	\draw[gray, dotted, thick] (x) -- (x');		
	
	\end{tikzpicture}
\end{center}

Furthermore, we write
\begin{itemize}
\item[--] $\bigcap_{i \in I}\iota_i \neq \emptyset$ to say that the  intersection of the intervals $\iota_i$, for $i \in I$, is non-empty;

\item[--] $\bigcap_{i \in I}\iota_i$ for the intersection of the intervals $\iota_i$ provided that $\bigcap_{i \in I}\iota_i \neq \emptyset$; otherwise $\bigcap_{i \in I}\iota_i$ is undefined;

\item[--] $\bigcup_{i \in I}\iota_i$ for the union of the intervals $\iota_i$ provided that $\bigcup_{i \in I}\iota_i$ is a single interval; otherwise $\bigcup_{i \in I}\iota_i$ is undefined;

\item[--] $\iota^c$ for the \emph{closure} of an interval $\iota$, that is  $\iota^c=[\iota_b, \iota_e]$ for any $\iota=\langle \iota_b, \iota_e \rangle$.
\end{itemize}

Suppose now that we are given a $\dMTL$ program $\Pi$ (in normal form)
and a data instance $\D$. We define a (possibly infinite) set $\mathfrak C_{\Pi,\D}$ of atoms of the form $P(\avec{c})@\iota$ or $\bot@\iota$ that contains all answers to $\dMTL$ queries with $\Pi$ over $\D$. The construction is essentially the standard chase procedure from database theory~\cite{Abitebouletal95} adapted to time intervals and the temporal operators by mimicking their semantics. The only new chase rule is coalescing  (coal) that merges---possibly infinitely-many---smaller intervals into the lager one they cover. Because of this rule, our chase construction requires transfinite recursion; see also the work by \citeA{DBLP:journals/tocl/BresolinKMRSZ17} and \citeA{DBLP:conf/ijcai/ArtaleKKRWZ15}.

Let $\C$ be some set of atoms of the form
$P(\avec{c})@\iota$ or $\bot @ \iota$ from $\Pi$ and $\D$.  Denote by
$\cl(\C)$ the result of applying exhaustively and non-recursively the following rules
to $\C$:
\begin{description}
\item[(coal)] if $P(\avec{c})@\iota_i \in \C$, for all $i \in I$ with a possibly infinite set $I$, and $\bigcup_{i\in I}\iota_i$ is defined, then we add $P(\avec{c})@\bigcup_{i\in I}\iota_i$ to $\C$;


\item[(horn)] if $P(\avec{c}) \leftarrow \bigwedge_{i \in I} P_i(\avec{c}_i)$ is an instance of a rule in $\Pi$ with all $P_i(\avec{c}_i)@\iota_i$ in $\C$ and $\bigcap_{i\in I}\iota_i \ne \emptyset$, then we add $P(\avec{c})@\bigcap_{i\in I}\iota_i$ to $\C$; if $\bot \leftarrow \bigwedge_{i \in I} P_i(\avec{c}_i)$ is an instance of a rule in $\Pi$, then we add $\bot @\bigcap_{i\in I}\iota_i$ to $\C$;

\item[$(\Si_{\range})$] if $P(\avec{c}) \leftarrow P_1(\avec{c}_1) \Si_\range P_2(\avec{c}_2)$ is an instance of a rule in $\Pi$ with $P_i(\avec{c}_i)@ \iota_i \in \C$ for $i \in \{1,2\}$, $\iota^c_1 \cap \iota_2 \neq \emptyset$, and $((\iota^c_1 \cap \iota_2) \po \range) \cap \iota^c_1 \neq \emptyset$, then we add $P(\avec{c})@((\iota^c_1 \cap \iota_2) \po \range) \cap \iota^c_1$ to $\C$; see the picture below, where $\range = (r_1, r_2)$;
\begin{center}
	\begin{tikzpicture}[point/.style={draw, thick, circle, inner sep=1.5, outer
		sep=2}]
	\coordinate (f) at (-1.7,4);
	\draw[blue](f) node  {$\textcolor{blue}{\iota_2}$};
	
	\coordinate (t) at (1,4);
	\draw[gray, dashed, thick] (0, 4) -- (t);
	\draw[blue](t) node  {$\textcolor{blue}{\langle}$};
	
	\coordinate (t') at (3,4);
	\draw[blue, thick] (t) -- (t') node [above, midway] {$\textcolor{blue}{P_2}$};
	\draw[blue](t') node  {$\textcolor{blue}{\rangle}$};
	
    \draw[gray, dashed, thick, -open triangle 60] (t') -- (7,4);
	
	\coordinate (f) at (-1.7,3);
	\draw[red](f) node  {$\textcolor{red}{\iota_1}$};
	
	\coordinate (t) at (2,3);
	\draw[gray, dashed, thick] (0, 3) -- (t);
	\draw[red](t) node  {$\textcolor{red}{\langle}$};
	
	\coordinate (t') at (5,3);
	\draw[red, thick] (t) -- (t') node [above, midway] {$\textcolor{red}{P_1}$};
	\draw[red](t') node  {$\textcolor{red}{\rangle}$};
	
	\draw[gray, dashed, thick, -open triangle 60] (t') -- (7,3);
	
	\coordinate (f) at (-1.7,2);
	\draw[black](f) node  {$\textcolor{black}{(\iota^c_1 \cap \iota_2)}$};
	
	\coordinate (t) at (2,2);
	\draw[gray, dashed, thick] (0, 2) -- (t);
	\draw[black](t) node  {$\textcolor{black}{[}$};
	
	\coordinate (t') at (3,2);
	\draw[black, thick] (t) -- (t');
	\draw[black](t') node  {$\textcolor{black}{\rangle}$};

	\draw[gray, dashed, thick, -open triangle 60] (t') -- (7,2);
	
%
%
%
%
%
%
%
	
	\coordinate (f) at (-1.7,1);
	\draw[myred](f) node  {$\textcolor{myred}{(\iota^c_1 \cap \iota_2)\po\range}$};
	
	\coordinate (t) at (2,1);
	\draw[gray, dashed, thick] (0, 1) -- (t);
	\draw[gray](t) node  {$\textcolor{gray}{[}$};
	
	\coordinate (t') at (3,1);
	\draw[gray, thick] (t) -- (t');
	\draw[gray](t') node  {$\textcolor{gray}{\rangle}$};
	
	\coordinate (x) at (3.5,1);
	\draw [gray,decorate,decoration={brace,amplitude=5pt, raise=7pt}]
	(t) -- (x) node [midway,shift={(.2,.5)}]
	{$\textcolor{gray}{r_1}$};
	
	\coordinate (x') at (5.5,1);
	\draw [gray,decorate,decoration={brace,amplitude=5pt, raise=5pt, mirror}]
	(t') -- (x') node [midway,shift={(.2,-.5)}]
	{$\textcolor{gray}{r_2}$};
	
	
	\coordinate (t) at (3.5,1);
	\draw[gray, dashed, thick] (0, 1) -- (t);
	\draw[myred](t) node  {$\textcolor{myred}{(}$};
	
	\coordinate (t') at (5.5,1);
	\draw[myred, thick] (t) -- (t');
	\draw[myred](t') node  {$\textcolor{myred}{)}$};
	
	\draw[gray, dashed, thick, -open triangle 60] (t') -- (7,1);
	
	\coordinate (f) at (-1.7,0);
	\draw[myblue](f) node  {$\textcolor{myblue}{((\iota^c_1 \cap \iota_2) \po \range) \cap \iota^c_1}$};
	
	\coordinate (t) at (3.5,0);
	\draw[gray, dashed, thick] (0, 0) -- (t);
	\draw[myblue](t) node  {$\textcolor{myblue}{(}$};
	
	\coordinate (t') at (5,0);
	\draw[myblue, thick] (t) -- (t');
	\draw[myblue](t') node  {$\textcolor{myblue}{]}$};
	
	\draw[gray, dashed, thick, -open triangle 60] (t') -- (7,0);
	
	\end{tikzpicture}
\end{center}

\item[$(\boxplus_{\range})$] if $P(\avec{c}) \leftarrow \boxplus_{\range} P_1(\avec{c}_1)$ is an instance of a rule in $\Pi$ with $P_1(\avec{c}_1)@\iota \in \C$ and $\range \sqsubseteq \iota$, then we add $P(\avec{c})@(\iota \mc \range)$ to $\C$;

\item[$(\U_{\range})$] if $P(\avec{c}) \leftarrow P_1(\avec{c}_1) \U_\range P_2(\avec{c}_2)$ is an instance of a rule in $\Pi$ with $P_i(\avec{c}_i)@ \iota_i \in \C$, $\iota^c_1 \cap \iota_2 \neq \emptyset$ and $((\iota^c_1 \cap \iota_2) \mo \range) \cap \iota^c_1 \neq \emptyset$, then we add $P(\avec{c})@((\iota^c_1 \cap \iota_2) \mo \range) \cap \iota^c_1$ to $\C$;

\item[$(\boxminus_{\range})$] if $P(\avec{c}) \leftarrow \boxminus_{\range} P_1(\avec{c}_1)$ is an instance of a rule in $\Pi$ with $P_1(\avec{c}_1)@\iota \in \C$ and $\range \sqsubseteq \iota$, then we add $P(\avec{c})@(\iota \pc \range)$ to $\C$.
\end{description}
We set $\cl^0(\D) = \D \cup \{\top(-\infty, \infty)\}$ and, for any successor ordinal $\xi +1$ and limit ordinal $\zeta$,
\begin{equation}
\cl^{\xi +1}(\D) = \cl(\cl^\xi(\D)),\quad  \cl^{\zeta} (\D) = \bigcup\nolimits_{\xi<\zeta} \cl^{\xi}(\D) \quad \text{ and }\quad \mathfrak C_{\Pi,\D} = \cl^{\omega_1}(\D), \label{closures}
\end{equation}
where $\omega_1$ is the first uncountable ordinal (as $\cl^{\omega_1}(\D)$ is countable, there is an ordinal $\alpha < \omega_1$ such that $\cl^\alpha(\D) = \cl^\beta (\D)$, for all $\beta \ge \alpha$). We regard $\mathfrak C_{\Pi,\D}$ as both a set of atoms of the form $P(\avec{c})@\iota$ or $\bot@\iota$ and an interpretation where, for any $t \in \mathbb{R}$, any $P$ (different from $\bot$), and any tuple $\avec{c}$ of individual constants, we have $\mathfrak C_{\Pi,\D},t \models P(\avec{c})$ iff $P(\avec{c})@\iota \in \mathfrak C_{\Pi,\D}$ and $t \in \iota$. The \emph{domain} of $\mathfrak C_{\Pi,\D}$ is the set $\ind(\D) \cup \ind(\Pi)$ that comprises the individual constants occurring in $\D$ and $\Pi$.

We illustrate the definition above by a simple example:
\begin{example}\em
Let $\Pi$ have two rules $P \leftarrow \boxminus_{[1,1]} P$ and $Q \leftarrow \boxplus_{(0, \infty)} P$, and let $\D = \{P(0,1] \}$. The first $\omega$ steps of the construction of $\mathfrak C_{\Pi,\D}$ will produce, using the rules $(\boxminus_{\range})$ and (coal), the atoms $P(n,n+1]$ and $P(0,n+1]$, for $n < \omega$. In the step $\omega + 1$, (coal) will give $P(0,\infty)$ and then $(\boxplus_{\range})$ will return $Q@[0, \infty)$.
%
\end{example}

\begin{lemma}\label{canonical}
Let $\Pi$ be a $\dMTL$ program and $\D$ a data instance. Then, for any predicate symbol $P$ from $\Pi$ and $\D$, any tuple $\avec{c}$ of constants from $\D$ and $\Pi$, and any interval $\iota$,
\begin{description}
\item[\rm (\emph{i})] $P(\avec{c})@\iota \in \mathfrak C_{\Pi,\D}$ implies $\Mmf,t \models P(\avec{c})$, for all $t \in \iota$ and all models $\Mmf$ of $\Pi$ and $\D$\textup{;}

\item[\rm (\emph{ii})] if $\bot@\iota \notin \mathfrak C_{\Pi,\D}$ for any $\iota$, then $\mathfrak C_{\Pi,\D}\models (\Pi, \D)$\textup{;} otherwise, $\Pi$ and $\D$ are inconsistent.
\end{description}
\end{lemma}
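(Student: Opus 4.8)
The plan is to read (i) as the \emph{soundness} of the chase $\mathfrak C_{\Pi,\D}$ and (ii) as its \emph{completeness}, i.e.\ that $\mathfrak C_{\Pi,\D}$ is itself a model of $(\Pi,\D)$ whenever it does not contain $\bot$. By the reductions preceding the lemma we may assume $\Pi$ is in normal form and that all intervals occurring in $\Pi$ and $\D$ are open $(t_1,t_2)$ or punctual $[t,t]$. For (i) we argue by transfinite induction on the least ordinal $\xi$ with $P(\avec{c})@\iota\in\cl^\xi(\D)$, proving simultaneously that $\Mmf,t\models P(\avec{c})$ for every model $\Mmf$ of $(\Pi,\D)$ and every $t\in\iota$. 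If $\xi=0$ then $P(\avec{c})@\iota$ lies in $\D$ or is $\top@(-\infty,\infty)$, so the claim is immediate from the definition of a model; if $\xi$ is a limit the atom already occurs at an earlier stage. If $\xi=\eta+1$, the atom was obtained from $\cl^\eta(\D)$ by one of the chase rules, and we treat each rule separately. For \textbf{(coal)} the claim is trivial, since $\Mmf,t\models P(\avec{c})$ on each $\iota_i$ entails it on $\bigcup_i\iota_i$; for \textbf{(horn)} applied through an instance of a rule $P(\avec{\tau})\leftarrow\bigwedge_iP_i(\avec{\tau}_i)$ and atoms $P_i(\avec{c}_i)@\iota_i$, the induction hypothesis gives $\Mmf,t\models P_i(\avec{c}_i)$ for all $i$ and all $t\in\bigcap_i\iota_i$, whence $\Mmf,t\models P(\avec{c})$ because $\Mmf$ satisfies $\Pi$ under every assignment.

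For the temporal chase rules in (i), the content is exactly that the interval arithmetic $\po,\pc,\mo,\mc$ together with the closure operation $(\cdot)^c$ faithfully mirrors the semantics of the temporal operators. For $P(\avec{\tau})\leftarrow\boxminus_\range P_1(\avec{\tau}_1)$, for instance, one verifies (by a case analysis on whether $\range$ is punctual or open and on the endpoint types of $\iota$) that $t\in\iota\pc\range$ implies $t-k\in\iota$ for every $k\in\range$, so the induction hypothesis on $P_1(\avec{c}_1)@\iota$ yields $\Mmf,t\models\boxminus_\range P_1(\avec{c}_1)$ and hence $\Mmf,t\models P(\avec{c})$ since $\Mmf\models\Pi$; the future rule $(\boxplus_\range)$ is symmetric. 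For $(\Si_\range)$ and $(\U_\range)$ the additional ingredient is the closure $\iota_1^c$: since $A\Si_\range A'$ (resp.\ $A\U_\range A'$) constrains $A$ only on the \emph{open} interval strictly between the reference point and its witness, a witness may legitimately sit on the boundary of the interval on which $P_1$ is known to hold; one checks that every $t$ in $((\iota_1^c\cap\iota_2)\po\range)\cap\iota_1^c$ (resp.\ the analogous set built with $\mo$ in place of $\po$) admits a witness $t'\in\iota_1^c\cap\iota_2$ whose open interval to $t$ lies inside $\iota_1$, so that the rule of $\Pi$ forces $\Mmf,t\models P(\avec{c})$.

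For (ii), recall that $\mathfrak C_{\Pi,\D}=\cl^{\omega_1}(\D)$ is countable, so $\cl^\alpha(\D)=\cl^{\alpha+1}(\D)=\mathfrak C_{\Pi,\D}$ for some $\alpha<\omega_1$; in particular $\mathfrak C_{\Pi,\D}$ is closed under all chase rules. Viewing it as an interpretation, it satisfies $\D$ because each fact of $\D$ sits in $\cl^0(\D)$. The crucial observation, supplied by exhaustive application of \textbf{(coal)}, is that for every non-$\bot$ predicate $P$ and tuple $\avec{c}$, each connected component $J$ of $\{t:\mathfrak C_{\Pi,\D},t\models P(\avec{c})\}$ is already an atom of the chase, i.e.\ $P(\avec{c})@J\in\mathfrak C_{\Pi,\D}$: indeed every interval $\iota$ with $P(\avec{c})@\iota\in\mathfrak C_{\Pi,\D}$ that meets $J$ is contained in $J$ by maximality of $J$, and these intervals cover $J$, so (coal) applies to them. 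Using this, take any rule of $\Pi$, any assignment $\nu$ and any point $t$ at which the body holds in $\mathfrak C_{\Pi,\D}$; we must show the head holds at $t$. Each body atom holding at $t$ places $t$ in a component of the corresponding predicate, which by the observation is an atom of $\mathfrak C_{\Pi,\D}$; feeding these atoms into the matching chase rule and using the interval identities established in (i) produces an atom $P(\avec{c})@\iota'$ (or $\bot@\iota'$) of $\mathfrak C_{\Pi,\D}$ with $t\in\iota'$. For a $\bot$-rule this contradicts the hypothesis of (ii), so no $\bot$-rule has a satisfiable body in $\mathfrak C_{\Pi,\D}$ and all $\bot$-rules hold vacuously; for the remaining rules we obtain $\mathfrak C_{\Pi,\D},t\models P(\avec{c})$, so $\mathfrak C_{\Pi,\D}\models(\Pi,\D)$. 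Finally, if $\bot@\iota\in\mathfrak C_{\Pi,\D}$, it was added by (horn) from atoms $P_i(\avec{c}_i)@\iota_i$ with $\iota=\bigcap_i\iota_i\neq\emptyset$ via an instance of a $\bot$-rule; by (i) every model $\Mmf$ of $(\Pi,\D)$ satisfies $\Mmf,t\models P_i(\avec{c}_i)$ on each $\iota_i$, hence on $\iota$, so $\Mmf,t\models\bot$ for $t\in\iota$, which is impossible, and $(\Pi,\D)$ has no model.

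The routine-looking part is the transfinite induction itself; the real work is the interval bookkeeping and the bridge between the pointwise semantics and the interval-based chase. On the one hand one must check that $\po,\pc,\mo,\mc$ and $(\cdot)^c$ reproduce all of the temporal semantics in every combination of punctual versus open range and open, closed, or infinite interval endpoints, and in particular that the use of $\iota_1^c$ in $(\Si_\range)$, $(\U_\range)$ and of the condition $\range\sqsubseteq\iota$ in $(\boxplus_\range)$, $(\boxminus_\range)$ is matched precisely to the strict reading of the operators. On the other hand, for (ii) one needs that a body atom holding pointwise in $\mathfrak C_{\Pi,\D}$ always factors through a single chase atom $P(\avec{c})@J$ on a maximal interval $J$ — which is exactly what exhaustive coalescing guarantees — so that closure of $\mathfrak C_{\Pi,\D}$ under the chase rules can be brought to bear; without (coal) the chase would not in general be a model.
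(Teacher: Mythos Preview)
Your proof is correct and follows essentially the same approach as the paper's: transfinite induction on the least stage $\xi$ with $P(\avec{c})@\iota\in\cl^\xi(\D)$ for (i), and verification that the fixed point $\mathfrak C_{\Pi,\D}$ satisfies every rule of $\Pi$ by exhibiting, for each body satisfied at a point $t$, chase atoms to which the corresponding chase rule applies for (ii). Your connected-component formulation for (ii)---observing that every maximal interval on which $P(\avec{c})$ holds is itself a chase atom thanks to closure under (coal)---is a slightly tidier packaging of what the paper does; the paper instead covers the open interval $(t',t)$ on which $P_1$ must hold by a family of chase atoms and then invokes (coal) to merge them into a single $\iota_1$, which is the same idea expressed pointwise rather than globally.
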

\begin{proof}
(\emph{i}) Suppose that $\Mmf$ is a model of $\Pi$ and $\D$, and that  $P(\avec{c})@\iota \in \mathfrak C_{\Pi,\D}$. Let $\xi$ be the smallest ordinal such that $P(\avec{c})@\iota \in \cl^\xi(\D)$. We show that $\Mmf,t \models P(\avec{c})$ for all $t \in \iota$ by induction of $\xi$. If $\xi = 0$, then $P(\avec{c})@\iota \in \D$, and since $\Mmf$ satisfies every assertion in $\D$, we are done. If  $\xi = \xi' + 1$ then $P(\avec{c})@\iota$ was obtained from $\cl^{\xi'}(\D)$ by applying one of the construction rules for $\mathfrak C_{\Pi,\D}$. Suppose $P(\avec{c})@\iota$ is $P(\avec{c})@\bigcup_{i\in I}\iota_i$ obtained by {\rm (coal)}. By the induction hypothesis, $\Mmf,t \models P(\avec{c})$ for all $t \in \iota_i$ and $i \in I$. Clearly, $\Mmf, t \models P(\avec{c})$ for all $t \in \bigcup_{i\in I}\iota_i$, and so for all $t \in \iota$. The case of {\rm (horn)} is similar (with intersection in place of union).

Suppose $P(\avec{c})@\iota$ is obtained by $(\Si_{\range})$ from $P_i(\avec{c}_i)@\iota_i$, $i \in \{1,2\}$. By the induction hypothesis, $\Mmf,t \models P_i(\avec{c}_i)$ for every $t \in \iota_i$. Take an arbitrary $t \in ((\iota^c_1 \cap \iota_2) \po \range) \cap \iota_1^c$. Then there exists $t' \in \iota^c_1 \cap \iota_2$ such that $t - t' \in \range$ and $\Mmf,t \models P_2(\avec{c}_2)$. Moreover, we have $\Mmf, s \models P_1(\avec{c}_1)$ for all $s \in (t',t)$. Therefore, $\Mmf, t \models P_1(\avec{c}_1) \Si P_2(\avec{c}_2)$. If $P(\avec{c})@\iota$ is obtained by $(\boxplus_{\range})$ from $P_1(\avec{c}_1)@\iota$, the proof is analogous by considering $t \in \iota \mc \range$. The remaining rules are treated similarly.

(\emph{ii}) Suppose $\bot@\iota \notin \mathfrak C_{\Pi,\D}$ for any $\iota$. By definition, $\D \subseteq \mathfrak C_{\Pi,\D}$, and so $\mathfrak C_{\Pi,\D} \models P(\avec{c})@\iota$ for every $P(\avec{c})@\iota \in \D$. To show that all the rules in $\Pi$ are satisfied by $\mathfrak C_{\Pi,\D}$, we take an assignment $\nu$, a rule $P(\avec{\tau}) \leftarrow \bigwedge_{i\in I} P_i(\avec{\tau}_i)$ from $\Pi$, and suppose that $\mathfrak C_{\Pi,\D}, t \models^\nu P_i(\avec{\tau}_i)$, for all $i\in I$. By the definition of $\mathfrak C_{\Pi,\D}$, it follows that $\mathfrak C_{\Pi,\D}, t \models P_i(\nu(\avec{\tau}_i))$ and $P_i(\nu(\avec{\tau}_i)) \in \mathfrak C_{\Pi,\D}$, for some $\iota_i \ni t$. Moreover, there are ordinals $\xi_i$, $i \in I$, such that $P_i(\nu(\avec{\tau}_i))@\iota_i \in \cl^{\xi_i}(\D)$. By the rule {\rm (horn)}, we then have $P(\nu(\avec{\tau}))@\bigcap_{i\in I} \iota_i \in \cl^{\max \{\xi_i \mid i \in I\}+1}(\D)$, from which   $P(\nu(\avec{\tau}))@\bigcap_{i\in I} \iota_i \in \mathfrak C_{\Pi,\D}$, and so $\mathfrak C_{\Pi,\D},t \models P(\nu(\avec{\tau}))$. Now, consider a rule $\bot \leftarrow \bigwedge_{i\in I} P_i(\avec{\tau}_i)$ and suppose that $\mathfrak C_{\Pi,\D}, t \models^\nu P_i(\avec{\tau}_i)$, for all $i\in I$. By the argument  above, we then should have $\bot @ \bigcap_{i\in I} \iota_i \in \mathfrak C_{\Pi,\D}$, which is a contradiction. For a rule  $P(\avec{\tau}) \leftarrow P_1(\avec{\tau}_1) \Si_\range P_2(\avec{\tau}_2)$, take an arbitrary $t$ and suppose that $\mathfrak C_{\Pi,\D}, t_2 \models^\nu P_2(\avec{\tau}_2)$ for some $t_2$ with $t - t_2 \in \range$ and $\mathfrak C_{\Pi,\D}, t_1 \models^\nu P_1(\avec{\tau}_1)$ for all $t_2 \in (t_2, t)$. By the construction of  $\mathfrak C_{\Pi,\D}$, it follows that $P_2(\nu(\avec{\tau}_2)) @ \iota_2 \in \mathfrak C_{\Pi,\D}$ for some $\iota_2 \ni t_2$. Moreover, there are finitely many intervals $\iota_i'$, $i \in I$, such that $(t_2, t) \subseteq \bigcup_{i \in I} \iota_i'$ and $P_1(\nu(\avec{\tau}_1)) @ \iota_i' \in \mathfrak C_{\Pi,\D}$. By the rule {\rm (coal)}, $P_1(\nu(\avec{\tau}_1)) @ \iota_1 \in \mathfrak C_{\Pi,\D}$ for $\iota_1 = \bigcup_{i \in I} \iota_i'$. It follows then that $t_2, t \in \iota_1^c$, and so $\iota_1^c \cap \iota_2 \neq \emptyset$ and $t \in ((\iota^c_1 \cap \iota_2) \po \range) \cap \iota^c_1$. Thus, by the rule $(\Si_{\range})$, we have $P(\nu(\avec{\tau}))@((\iota^c_1 \cap \iota_2) \po \range) \cap \iota^c_1 \in \mathfrak C_{\Pi,\D}$. Therefore, $\mathfrak C_{\Pi,\D}, t \models^\nu P(\avec{\tau})$. The remaining rules are considered in the same manner.

That $\bot@\iota \in \mathfrak C_{\Pi,\D}$, for some $\iota$, implies inconsistency of $\D$ and $\Pi$ follows from (\emph{i}).   \end{proof}

If $\bot @ \iota \notin \mathfrak C_{\Pi,\D}$, we call $\mathfrak C_{\Pi,\D}$ the \emph{canonical} (or \emph{minimal}) \emph{model of $\Pi$ and $\D$}. We now establish an important property of $\mathfrak C_{\Pi,\D}$ that will allow us to reduce consistency checking for $\dMTL$ programs and data to the satisfiability problem for formulas in the linear temporal logic $\LTL$ over $(\mathbb Z,<)$.

Recall that the \emph{greatest common divisor} of a finite set $N \subseteq \mathbb{Q}$ (at least one of which is not 0) is the largest number $\gcd(N) >0$ such that every $n \in N$ is divisible by $\gcd(N)$ (in the sense that $n/\gcd(N) \in \mathbb Z$). It is known that $\gcd(N)$ always exists and $\gcd(N) \leq \prod_{n \in N} |n|$. It is easy to see that, for any a finite set $N \subseteq \mathbb Q_2$ (at least one of which is not 0), we have $\gcd(N) = 2^m$, where $m$ is the maximal natural number such that $n/2^m \in  N$ is an irreducible fraction. Thus, $\gcd(N)$ can be computed and stored using space polynomial in $|N|$ (the size of the binary encoding of $N$). To make further definitions simpler, it will be convenient to assume that $\gcd(N)=1$ if $N = \{0\}$.

Given a $\dMTL$ program $\Pi$ and a data instance $\D$, we take $d = \gcd(\num(\Pi, \D))$. Denote by $\sect_{\Pi, \D}$ the set of all the intervals of the form $[k d, k d]$ and \mbox{$((k-1) d, k d)$}, for $k \in \mathbb{Z}$. Clearly, $\sect_{\Pi, \D}$ is a partition of $\mathbb{Q}_2$. We represent $\sect_{\Pi, \D}$ as
$$
\sect_{\Pi, \D} = \{ \dots ,\sigma_{-3},\sigma_{-2},\sigma_{-1}, \sigma_0, \sigma_1, \sigma_2, \sigma_3, \dots\},
$$
where $\sigma_0 = [0,0]$, $\sigma_1 = (0, d)$, $\sigma_2 = [d,d]$, $\sigma_3 = (d,2d)$, $\sigma_{-1} = (-d, 0)$, etc. Thus, $\sigma_i$ is punctual if $i$ is even and non-punctual if $i$ is odd. We refer to the $\sigma_i$ as \emph{sections} of $\sect_{\Pi, \D}$.

\begin{lemma}\label{l:zones}
For every atom $P(\avec{c})$ and every $\sigma \in \sect_{\Pi,\D}$, we either have $\mathfrak C_{\Pi, \D}, t \models P(\avec{c})$ for all $t \in \sigma$, or $\mathfrak C_{\Pi, \D}, t \not \models P(\avec{c})$ for all $t \in \sigma$.
\end{lemma}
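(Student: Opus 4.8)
The plan is to prove, by transfinite induction on the stage of the chase construction, the stronger invariant that \emph{every} atom $P(\avec{c})@\iota$ (and every $\bot@\iota$) in $\mathfrak{C}_{\Pi,\D}$ has the property that both endpoints of $\iota$ lie in $d\mathbb{Z}\cup\{-\infty,\infty\}$, where $d=\gcd(\num(\Pi,\D))$. Granting this invariant, Lemma~\ref{l:zones} is immediate: by the way $\mathfrak{C}_{\Pi,\D}$ is read as an interpretation, $\mathfrak{C}_{\Pi,\D},t\models P(\avec{c})$ iff $t\in\iota$ for some $P(\avec{c})@\iota\in\mathfrak{C}_{\Pi,\D}$; and any interval whose endpoints lie in $d\mathbb{Z}\cup\{-\infty,\infty\}$ is a union of sections of $\sect_{\Pi,\D}$, so each section $\sigma$ is either entirely contained in $\iota$ or disjoint from $\iota$. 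Hence the set $\{t\mid \mathfrak{C}_{\Pi,\D},t\models P(\avec{c})\}$ is a union of sections, which is exactly the statement of the lemma.

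For the base case, $\cl^0(\D)=\D\cup\{\top@(-\infty,\infty)\}$: every finite number occurring in $\D$ belongs to $\num(\Pi,\D)$ and is therefore a multiple of $d$, and the interval $(-\infty,\infty)$ has infinite endpoints, so the invariant holds. At a limit ordinal $\zeta$ the set $\cl^\zeta(\D)$ is a union of sets each satisfying the invariant, so it is preserved. For a successor step one checks that each closure rule, applied to premise atoms satisfying the invariant, produces intervals satisfying it. For (coal) the endpoints of $\bigcup_{i\in I}\iota_i$ are the infimum of the left endpoints and the supremum of the right endpoints of the $\iota_i$; since $d\mathbb{Z}$ is closed and discrete in $\mathbb{R}$, these lie in $d\mathbb{Z}\cup\{-\infty,\infty\}$. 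For (horn) the endpoints of $\bigcap_{i\in I}\iota_i$ are a left endpoint and a right endpoint of some of the $\iota_i$. For the temporal rules one additionally uses that every finite range endpoint occurs in $\Pi$, hence lies in $\num(\Pi,\D)$ and is divisible by $d$: passing to $\iota^c$ leaves endpoints unchanged, and each of the operations $\po$, $\mo$, $\pc$, $\mc$ either shifts an endpoint by $\pm r$ (punctual range) or adds/subtracts $r_1$ or $r_2$ (non-punctual range), thereby sending multiples of $d$ to multiples of $d$ (or to $\pm\infty$); the final intersection with $\iota_1^c$ only reuses endpoints already accounted for.

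I expect the only real work to be bookkeeping: going through the case distinctions built into the definitions of $\po$, $\mo$, $\pc$, $\mc$ (punctual vs.\ non-punctual ranges, finite vs.\ infinite endpoints, and the side condition $\range\sqsubseteq\iota$) and verifying in each branch that the resulting endpoints stay in $d\mathbb{Z}\cup\{-\infty,\infty\}$, together with the elementary observation that an interval with such endpoints meets each open section $((k-1)d,kd)$ either in the whole section or not at all (and each punctual section $[kd,kd]$ trivially, being a single point). No idea beyond maintaining the divisibility invariant through the transfinite recursion appears to be needed.
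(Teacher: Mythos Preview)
Your proposal is correct and follows exactly the approach taken in the paper: prove by (transfinite) induction on the chase construction that every interval appearing in $\mathfrak C_{\Pi,\D}$ has endpoints in $d\mathbb{Z}\cup\{-\infty,\infty\}$, from which the lemma follows immediately. The paper states this invariant and says it ``can readily be done by induction on the construction of $\mathfrak C_{\Pi,\D}$''; you have simply spelled out the inductive cases in more detail than the paper does.
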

\begin{proof}
It suffices to show that every interval $\iota$ such that  $P(\avec{c})@\iota \in \mathfrak C_{\Pi, \D}$ takes one of the following forms: $(- \infty, \infty)$, $\langle dk, \infty)$, $(- \infty, dk \rangle$, $\langle dk, dk' \rangle$, where $k,k' \in \mathbb Z$. This can readily be done by induction on the construction of $\mathfrak C_{\Pi, \D}$. Indeed, when applied to a set of atoms of this form, the operator $\cl$ also results in a set of such atoms.
%
\end{proof}

Our aim now is to encode the structure of $\mathfrak C_{\Pi, \D}$ given by Lemma~\ref{l:zones} by means of an \LTL-formula $\varphi_{\Pi, \D}$ that is satisfiable over $(\mathbb Z,<)$ iff $\Pi$ and $\D$ are consistent.
%
The \LTL-formula $\varphi_{\Pi, \D}$ contains \emph{propositional variables} of the form $P^{\avec{c}}$, where $P$ is a predicate symbol from $\Pi$ and $\D$ of arity $m$ and $\avec{c}$ an $m$-tuple of individual constants from $\D$ and $\Pi$, as well as two additional propositional variables $\mathsf{odd}$ and $\mathsf{even}$. We define $\varphi_{\Pi, \D}$ as a conjunction of the following clauses, where $\nu$ is any assignment of the individual constants from $\D$ and $\Pi$ to the terms in $\Pi$, and $\Box \psi$ is a shorthand for $\Lbox \varphi \land \varphi \land \Rbox \varphi$:
%
\begin{itemize}
\item[--] $\mathsf{even} \land \Box(\mathsf{even} \leftrightarrow \Rnext \mathsf{odd}) \land \Box(\mathsf{odd} \leftrightarrow \Rnext \mathsf{even})$;

\item[--] $\displaystyle\Box (P^{\nu(\avec{\tau})} \leftarrow \bigwedge_{i\in I} P_i^{\nu(\avec{\tau}_i)})$, for every rule $P(\avec{\tau}) \leftarrow \bigwedge_{i\in I} P_i(\avec{\tau}_i)$ in $\Pi$;

\item[--] $\displaystyle\Box (\bot \leftarrow \bigwedge_{i\in I} P_i^{\nu(\avec{\tau}_i)})$, for every rule $\bot \leftarrow \bigwedge_{i\in I} P_i(\avec{\tau}_i)$ in $\Pi$;

\item[--] for every rule $P(\avec{\tau}) \leftarrow P_1(\avec{\tau}_1) \Si_\range P_2(\avec{\tau}_2)$ in $\Pi$ with $\range = [r,r]$, we require two clauses:
\begin{align*}
\Box \bigl( P^{\nu(\avec{\tau})} \leftarrow \mathsf{even} \land \nxt^{-2r/d} P_2^{\nu(\avec{\tau}_2)} \land \bigwedge_{-2r/d < j < 0} \nxt^j P_1^{\nu(\avec{\tau}_1)}\bigr),\\
\Box \bigl( P^{\nu(\avec{\tau})} \leftarrow \mathsf{odd} \land \nxt^{-2r/d} P_2^{\nu(\avec{\tau}_2)} \land \bigwedge_{-2r/d \leq j \leq 0} \nxt^j P_1^{\nu(\avec{\tau}_1)}\bigr),
\end{align*}
where $\nxt^n \varphi= \underbrace{\Rnext \dots \Rnext}_n \varphi$  if $n > 0$, $\nxt^0 \varphi = \varphi$, and $\nxt^n \varphi = \underbrace{\Lnext \dots \Lnext}_{|n|} \varphi$ if $n < 0$;

\item[--] for every rule $P(\avec{\tau}) \leftarrow P_1(\avec{\tau}_1) \Si_\range P_2(\avec{\tau}_2)$ in $\Pi$ with $\range = (r_1,r_2)$, we require four clauses:
\begin{align*}
\Box \bigl( P^{\nu(\avec{\tau})} \leftarrow \mathsf{even} \land %
\bigvee_{-2r_2/d < k < - 2r_1/d} (\nxt^k P_2^{\nu(\avec{\tau}_2)} \land \nxt^k \mathsf{even} \land \bigwedge_{k < j < 0} \nxt^j P_1^{\nu(\avec{\tau}_1)}\bigr),\\
%
\Box \bigl( P^{\nu(\avec{\tau})} \leftarrow \mathsf{even} \land %
\bigvee_{-2r_2/d < k < - 2r_1/d} (\nxt^k P_2^{\nu(\avec{\tau}_2)} \land \nxt^k \mathsf{odd} \land \bigwedge_{k \leq j < 0} \nxt^j P_1^{\nu(\avec{\tau}_1)}\bigr),\\
\Box \bigl( P^{\nu(\avec{\tau})} \leftarrow \mathsf{odd} \land %
\bigvee_{-2r_2/d \leq k \leq - 2r_1/d} (\nxt^k P_2^{\nu(\avec{\tau}_2)} \land \nxt^k \mathsf{even} \land \bigwedge_{k < j \leq 0} \nxt^j P_1^{\nu(\avec{\tau}_1)}\bigr),\\
\Box \bigl( P^{\nu(\avec{\tau})} \leftarrow \mathsf{odd} \land %
\bigvee_{-2r_2/d \leq k \leq - 2r_1/d} (\nxt^k P_2^{\nu(\avec{\tau}_2)} \land \nxt^k \mathsf{odd} \land \bigwedge_{k \leq j \leq 0} \nxt^j P_1^{\nu(\avec{\tau}_1)}\bigr);
\end{align*}

\item[--] for every rule $P(\avec{\tau}) \leftarrow P_1(\avec{\tau}_1) \Si_\range P_2(\avec{\tau}_2)$ in $\Pi$ with $\range = (r_1, \infty)$,
\begin{align*}
\Box \bigl( P^{\nu(\avec{\tau})} \leftarrow \mathsf{even}\land \bigwedge_{-2r_1/d \leq j < 0} \nxt^j P_1^{\nu(\avec{\tau}_1)}\land \nxt^{-2r_1/d} (&
 P_1^{\nu(\avec{\tau}_1)} \Si (\mathsf{even} \land P_2^{\nu(\avec{\tau}_2)})\lor{}\\
 &P_1^{\nu(\avec{\tau}_1)} \Si (\mathsf{odd} \land P_1^{\nu(\avec{\tau}_1)} \land P_2^{\nu(\avec{\tau}_2)}))\bigr),\\
\Box \bigl( P^{\nu(\avec{\tau})} \leftarrow \mathsf{odd}\land \bigwedge_{-2r_1/d \leq j \leq 0} \nxt^j P_1^{\nu(\avec{\tau}_1)}\land \nxt^{-2r_1/d} (&
 P_2^{\nu(\avec{\tau}_2)} \lor P_1^{\nu(\avec{\tau}_1)} \Si (\mathsf{even} \land P_2^{\nu(\avec{\tau}_2)})\lor{}\\
 &P_1^{\nu(\avec{\tau}_1)} \Si (\mathsf{odd} \land P_1^{\nu(\avec{\tau}_1)} \land P_2^{\nu(\avec{\tau}_2)}))\bigr)
 \end{align*}
(recall that $P \Si Q$ holds at $i$ iff there exists $k < i$, such that $Q$ holds at $k$ and $P$ holds at all $j$ with $k < j < i$);

\item[--] similar clauses for the rules of the form $P(\avec{\tau}) \leftarrow P_1(\avec{\tau}_1) \U_\range P_2(\avec{\tau}_2)$ (here we need the `until' operator $\U$), $P(\avec{\tau}) \leftarrow \boxminus_\range P_1(\avec{\tau}_1)$ and $P(\avec{\tau}) \leftarrow \boxplus_\range P_1(\avec{\tau}_1)$ in $\Pi$;

\item[--] for every fact $P(\avec{c})@\iota$ in $\D$, we need the clauses:
\begin{align*}
& \nxt^{2r/d} P^{\avec{c}}, && \text{ if }\iota = [r,r],\\
& \bigwedge_{2r_1/d < i < 2r_2/d} \nxt^{i} P^{\avec{c}},&& \text{ if }\iota = (r_1, r_2) \text{ and } r_1, r_2 \in \mathbb{Q}_2,\\
& \nxt^{2r_1/d} \Rbox P^{\avec{c}},&& \text{ if }\iota = (r_1, r_2), r_1 \in \mathbb{Q}_2 \text{ and } r_2 = \infty,\\
& \nxt^{2r_2/d} \Lbox P^{\avec{c}},&& \text{ if }\iota = (r_1, r_2), r_1 = -\infty \text{ and } r_2 \in \mathbb{Q}_2,\\
& \Rbox \Lbox P^{\avec{c}}, && \text{ if }\iota = (r_1, r_2), r_1 = -\infty \text{ and } r_2 = \infty.
 \end{align*}
  \end{itemize}
\begin{lemma}
$(\Pi, D)$ is consistent iff $\varphi_{\Pi, \D}$ is satisfiable.
\end{lemma}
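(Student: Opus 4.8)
The plan is to match models of $(\Pi,\D)$ with $\LTL$-models of $\varphi_{\Pi,\D}$ over $(\mathbb Z,<)$ through the section partition $\sect_{\Pi,\D}$: a position $i\in\mathbb Z$ stands for the section $\sigma_i$, the variable $P^{\avec c}$ for the truth of $P(\avec c)$ throughout $\sigma_i$, and $\mathsf{even}$/$\mathsf{odd}$ for the parity of $i$ (equivalently, for $\sigma_i$ being punctual or not). Recall that $\Pi$ is in normal form, that the intervals of $\D$ and the ranges of $\Pi$ have been preprocessed to the two shapes allowed earlier, and that $d=\gcd(\num(\Pi,\D))$ divides every number in $\Pi$ or $\D$, so that punctual endpoints fall on even sections. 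For the direction ``consistent $\Rightarrow$ satisfiable'' I would use the canonical model $\mathfrak C_{\Pi,\D}$ together with Lemmas~\ref{canonical} and~\ref{l:zones}; for the converse I would build a model of $(\Pi,\D)$ directly out of an $\LTL$-model of $\varphi_{\Pi,\D}$. In both directions the engine is a ``dictionary'' stating that, in a \emph{section-constant} interpretation, the truth of a rule body on a section $\sigma_i$ corresponds exactly to the truth of the matching $\LTL$-clause body at position $i$.

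$(\Rightarrow)$ Assume $(\Pi,\D)$ is consistent. By Lemma~\ref{canonical}(ii), $\bot@\iota\notin\mathfrak C_{\Pi,\D}$ for all $\iota$ and $\mathfrak C_{\Pi,\D}\models(\Pi,\D)$; by Lemma~\ref{l:zones}, $\mathfrak C_{\Pi,\D}$ is section constant. Define the $\LTL$-interpretation $\mathfrak N$ on $(\mathbb Z,<)$ by $\mathfrak N,i\models P^{\avec c}$ iff $\mathfrak C_{\Pi,\D},t\models P(\avec c)$ for $t\in\sigma_i$ (well defined), with $\mathsf{even}$, $\mathsf{odd}$ true exactly at the even, resp.\ odd, positions. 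I would then verify $\mathfrak N,0\models\varphi_{\Pi,\D}$ clause by clause: the clock conjunct holds by construction; each data clause holds because a fact $P(\avec c)@\iota$ in $\D$ has $\iota$ of one of the forms established in the proof of Lemma~\ref{l:zones} and hence decomposes exactly into the sections named in that clause; the Horn clauses follow verbatim from section constancy; and the temporal clauses follow from the ``dictionary'' applied as ``$\LTL$-body at $i$ $\Rightarrow$ $\dMTL$-body at some point of $\sigma_i$'', after which $\mathfrak C_{\Pi,\D}\models\Pi$ yields the $\dMTL$-head at that point and, by section constancy, on all of $\sigma_i$, i.e.\ the $\LTL$-head at $i$. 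The same implication applied to the $\bot$-rules, together with $\bot@\iota\notin\mathfrak C_{\Pi,\D}$, shows that no $\bot$-clause body ever holds in $\mathfrak N$.

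$(\Leftarrow)$ Suppose $\mathfrak N,j\models\varphi_{\Pi,\D}$ on $(\mathbb Z,<)$. Shifting the model by $j$, we may assume $j=0$. The clock conjunct forces $\mathsf{even}$ at $0$ and hence at every even position (and symmetrically $\mathsf{odd}$ at every odd position); since replacing the extensions of $\mathsf{even}$ and $\mathsf{odd}$ by exactly the even, resp.\ odd, positions only shrinks every clause body, the resulting interpretation still satisfies $\varphi_{\Pi,\D}$ at $0$, so we may assume $\mathsf{even}$, $\mathsf{odd}$ mark exactly the even and odd positions. Define $\mathfrak M$ with domain $\ind(\Pi)\cup\ind(\D)$ over the timeline by $\mathfrak M,t\models P(\avec c)$ iff $\mathfrak N,i\models P^{\avec c}$, where $\sigma_i\ni t$. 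Then $\mathfrak M$ is section constant and $\mathfrak M\models\D$ by the data clauses. To see $\mathfrak M\models\Pi$, run the ``dictionary'' in the opposite direction: if the $\dMTL$-body of a rule of $\Pi$ holds at a point $t\in\sigma_i$ under some assignment $\nu$ into the domain, then the $\LTL$-body of the corresponding clause holds at $i$ in $\mathfrak N$; for a Horn rule this makes $P^{\nu(\avec\tau)}$ true at $i$, hence $\mathfrak M,t\models P(\nu(\avec\tau))$; for a $\bot$-rule it would make $\bot$ true at $i$, which is impossible, so such a body never holds in $\mathfrak M$. Hence $\mathfrak M\models(\Pi,\D)$ and $(\Pi,\D)$ is consistent.

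The main obstacle is establishing the ``dictionary'' in both directions, i.e.\ that for a section-constant interpretation the $\LTL$-clauses of $\varphi_{\Pi,\D}$ express \emph{exactly} the behaviour of the constrained operators $\boxplus_\range$, $\boxminus_\range$, $\Si_\range$, $\U_\range$ on a section. This is a case analysis over the shape of $\range$ (punctual $[r,r]$, bounded $(r_1,r_2)$, unbounded $(r_1,\infty)$) and, within each, over the parity of the section $\sigma_i$ at which the operator is evaluated and of the section containing the witness $t'$: one must track precisely which sections the witness and the open interval $(t',t)$ (or $(t,t')$) cover relative to $\sigma_i$ and check that this is exactly what the disjunction over witness positions and the $\mathsf{even}/\mathsf{odd}$ guards in the clause encode. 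The delicate points are the strict-versus-non-strict bookkeeping at section boundaries---for instance, why the $\mathsf{even}$ and $\mathsf{odd}$ variants of a $\Si_{[r,r]}$-clause use $\bigwedge_{-2r/d<j<0}$ and $\bigwedge_{-2r/d\le j\le 0}$ respectively, reflecting whether $(t',t)$ meets only the interiors of the endpoint sections or all of them---and, for $\range=(r_1,\infty)$, checking that the $\LTL$ ``since''/``until'' operator appearing in the clause body performs a gap-free unbounded search for the witness across the alternating punctual and open sections (whence the extra $\mathsf{odd}\land P_1$ conjunct forcing $P_1$ even on the witness's own open section).
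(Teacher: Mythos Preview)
Your argument is correct. The $(\Rightarrow)$ direction coincides with the paper's: both pass through $\mathfrak C_{\Pi,\D}$ via Lemmas~\ref{canonical} and~\ref{l:zones} and read off an $\LTL$-model on $(\mathbb Z,<)$ section by section, with the ``dictionary'' verification left as the routine case analysis the paper also suppresses.

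The $(\Leftarrow)$ direction differs. The paper invokes the \emph{canonical model} of $\varphi_{\Pi,\D}$ (a Horn-$\LTL$ formula, so such a model exists by the results of Artale et al.) and argues that it coincides, position by position, with $\mathfrak C_{\Pi,\D}$; consistency then follows because the absence of $\bot$ in the $\LTL$ canonical model transfers to $\mathfrak C_{\Pi,\D}$, and Lemma~\ref{canonical}(ii) applies. You instead take an \emph{arbitrary} $\LTL$-model of $\varphi_{\Pi,\D}$, normalise $\mathsf{even}/\mathsf{odd}$ (your monotonicity observation that these occur only positively in clause bodies is what makes this step go through), and pull it back to a section-constant interpretation over $\mathbb R$ which you verify directly to be a model of $(\Pi,\D)$. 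Your route is more elementary---it avoids the external reference to Horn-$\LTL$ canonical models---and gives the statement exactly as needed. The paper's route yields slightly more, namely an explicit identification of the two canonical models, but that extra information is not used later. Your handling of the shift to $j=0$ and of the $\mathsf{even}/\mathsf{odd}$ normalisation is a point the paper leaves implicit by appealing to the canonical model at $0$.
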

\begin{proof}
$(\Rightarrow)$ If $\mathfrak C_{\Pi, \D}$ is a model of $(\Pi, \D)$, we define an $\LTL$-interpretation $\Mmf$ by taking
\begin{itemize}
\item[--] $\Mmf, i \models P^{\avec{c}}$ iff $\mathfrak C_{\Pi, \D}, t \models P(\avec{c})$, for all $t \in \sigma_i$ and $i\in \mathbb{Z}$, all tuples of individual constants $\avec{c}$, and predicates $P$;

\item[--] $\Mmf, i \models \mathsf{even}$, for even $i \in \mathbb Z$;

\item[--] $\Mmf, i \models \mathsf{odd}$, for odd $i \in \mathbb Z$.
\end{itemize}
It is routine to check that $\Mmf,0 \models \varphi_{\Pi, \D}$,
%
taking into account that $\mathfrak C_{\Pi, \D}, t \models P_1(\avec{c}_1) \Si_\varrho  P_2(\avec{c}_2)$ for some (= all) $t \in \sigma_i$ iff the following conditions hold:
\begin{description}
\item[\rm Case {$\varrho = [r,r]$}:]
$\mathfrak C_{\Pi, \D}, t' \models P_2(\avec{c}_2)$, for some $t' \in \sigma_{i - 2r/d}$, and $\mathfrak C_{\Pi, \D}, s \models P_1(\avec{c}_1)$ for all $s \in
\sigma_j$ such that
\begin{align*}
& i -2r/d < j < i, \qquad  \text{if $i$ is even,}\\
& i -2r/d \leq j \leq i, \qquad \text{if $i$  is odd;}
\end{align*}
\item[\rm Case {$\varrho = (r_1,r_2)$}:] there exists $\sigma_k$ with $\mathfrak C_{\Pi, \D}, t' \models P_2(\avec{c}_2)$, for some $t' \in \sigma_{k}$, and $\mathfrak C_{\Pi, \D}, s \models P_1(\avec{c}_1)$ for all $s \in
\sigma_j$ such that
\begin{align*}
 &i -2r_2/d < k < i - 2r_1/d,\quad k < j < i, \quad  \text{if $i$  is even and } k \text{ is even},\\
 &i -2r_2/d < k < i - 2r_1/d,\quad k \leq j < i, \quad  \text{if $i$ is even and } k \text{ is odd},\\
 &i -2r_2/d \leq k \leq i - 2r_1/d,\quad k < j \leq i, \quad  \text{if $i$ is odd and } k \text{ is even},\\
 &i -2r_2/d \leq k \leq i - 2r_1/d,\quad k \leq j \leq i, \quad  \text{if $i$ is odd and } k \text{ is odd};
 \end{align*}
\end{description}
and similarly for the other temporal operators in $\varphi_{\Pi, \D}$.

$(\Leftarrow)$ Suppose now $\varphi_{\Pi, \D}$ is satisfiable. Take the canonical  model $\mathfrak M$ of $\varphi_{\Pi, \D}$ with $\mathfrak M,0 \models \varphi_{\Pi, \D}$; see the work by \citeA{DBLP:conf/lpar/ArtaleKRZ13} for details. Using the observations above, it is not hard to check that $\Mmf, i \models P^{\avec{c}}$ iff $\mathfrak C_{\Pi, \D}, t \models P(\avec{c})$, for all $t \in \sigma_i$ and $i\in \mathbb{Z}$, all tuples of individual constants $\avec{c}$ and predicates $P$. Details are left to the reader.
\end{proof}

We are now in a position to prove our first complexity result:

\begin{theorem}\label{thm:dMTL}
Consistency checking for $\dMTL$ programs is \textsc{ExpSpace}-complete. The lower bound holds even for propositional $\dMTL$.
\end{theorem}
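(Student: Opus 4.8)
The plan is to prove the two bounds separately. The upper bound falls out of the reduction to $\LTL$ over $(\mathbb Z,<)$ established in the previous lemma, once the size of $\varphi_{\Pi,\D}$ is controlled; the lower bound requires a direct reduction from an \textsc{ExpSpace}-complete Turing-machine problem, and this is where the real work lies.

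For membership in \textsc{ExpSpace} I would first bound $|\varphi_{\Pi,\D}|$ by an exponential in $|\Pi|+|\D|$ and then invoke the fact that satisfiability of $\LTL$ with past over $(\mathbb Z,<)$ is decidable in \textsc{PSpace} in the size of the input formula (cf.\ the references used in the previous lemma). The blow-up has three sources, each at most exponential: (a) grounding a rule over $\ind(\Pi)\cup\ind(\D)$ produces at most $|\ind(\D)\cup\ind(\Pi)|^{v}$ clauses, where $v$, the maximal number of variables in a rule, is bounded by the input size; (b) every occurrence of $\nxt^{n}$ with $n=2r/d$ expands to a chain of length $n\le 2\cdot\max(\num(\Pi,\D))/d$, which is at most exponential because range and timestamp endpoints are written in binary and $d=\gcd(\num(\Pi,\D))$ is a power of two stored in polynomial space; (c) the disjunctions $\bigvee_{k}$ in the clauses for $\Si_{(r_1,r_2)}$ and $\U_{(r_1,r_2)}$ run over at most $2(r_2-r_1)/d$ indices. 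Hence $\varphi_{\Pi,\D}$ is computable in exponential space, and running the \textsc{PSpace} $\LTL$ decision procedure on it uses space polynomial in this exponential size, i.e.\ exponential space overall. By the lemma relating consistency of $(\Pi,\D)$ to satisfiability of $\varphi_{\Pi,\D}$, consistency checking is in \textsc{ExpSpace} (and so is answering $\dMTL$ queries, by Proposition~\ref{prop:red}).

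For \textsc{ExpSpace}-hardness I would reduce from a standard \textsc{ExpSpace}-complete problem, the acceptance problem for deterministic Turing machines running in space bounded by $2^{n}$, to \emph{inconsistency} of \emph{propositional} $\dMTL$ programs; since \textsc{ExpSpace} is closed under complementation, this suffices. Given $M$ and $w$ with $|w|=n$, I lay the (unique) run of $M$ on $w$ along $[0,\infty)$ so that configuration $c$ occupies the integer points $c\cdot2^{n},\dots,c\cdot2^{n}+2^{n}-1$, one tape cell per point. For each symbol $a$ of the fixed combined alphabet $\Gamma'=\Gamma\cup(\Gamma\times Q)$ take a $0$-ary predicate $\mathsf C_a$. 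A predicate $\mathsf{Cell}$, seeded by $\mathsf{Cell}@[0,0]$ and propagated by $\mathsf{Cell}\leftarrow\diamondminus_{[1,1]}\mathsf{Cell}$, marks all non-negative integers; a predicate $\mathsf A$, seeded by $\mathsf A@[0,0]$ and propagated by $\mathsf A\leftarrow\diamondminus_{[2^{n},2^{n}]}\mathsf A$ (note the binary-encoded constant), marks the left boundary of every configuration, so that the leftmost, rightmost and interior cells of a configuration are recognised by $\mathsf A$, by $\diamondplus_{[1,1]}\mathsf A$, and by a predicate $\mathsf{Int}$ with the rule $\mathsf{Int}\leftarrow\mathsf{Cell}\land\diamondminus_{[1,2^{n}-2]}\mathsf A$, respectively. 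The initial configuration is pinned down by the facts $\mathsf C_{(q_0,w_0)}@[0,0]$, $\mathsf C_{w_i}@[i,i]$ for $1\le i<n$, a single interval fact $\mathsf{InitBlank}@(n-1,2^{n})$, and the rule $\mathsf C_{\mathsf{blank}}\leftarrow\mathsf{Cell}\land\mathsf{InitBlank}$. Because $M$ is deterministic, the content of a cell of configuration $c+1$ is a \emph{function} $\delta$ of the three cells above it in configuration $c$, so the transition needs no negation and is captured by the constantly many Horn rules
\begin{equation*}
\mathsf C_{\delta(x_{-1},x_0,x_1)}\leftarrow \diamondminus_{[2^{n}+1,2^{n}+1]}\mathsf C_{x_{-1}}\land\diamondminus_{[2^{n},2^{n}]}\mathsf C_{x_0}\land\diamondminus_{[2^{n}-1,2^{n}-1]}\mathsf C_{x_1}\land\mathsf{Int}
\end{equation*}
over all triples $x_{-1},x_0,x_1\in\Gamma'$, plus analogous two-argument rules guarded by $\mathsf A$ and by $\diamondplus_{[1,1]}\mathsf A$ for the boundary cells. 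Finally I add $\bot\leftarrow\mathsf C_{(q_{\mathrm{acc}},a)}$ for each $a\in\Gamma$. One then checks that the canonical model $\mathfrak C_{\Pi,\D}$ faithfully encodes the run of $M$ on $w$, so that some $\bot@\iota$ lies in $\mathfrak C_{\Pi,\D}$ iff $M$ accepts $w$; by Lemma~\ref{canonical}, $(\Pi,\D)$ is consistent iff $M$ does not accept $w$. The program and data are propositional and of size polynomial in $|M|+n$, the only delicate point being that the constants $2^{n}$, $2^{n}\pm1$, $2^{n}-2$ are represented in binary.

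The bookkeeping for the upper bound is routine. The main obstacle is making the lower-bound construction rigorous: one must verify that, outside the intended integer grid and on the non-integer part of the line, the canonical model contains nothing spurious, that each cell point carries exactly one predicate $\mathsf C_a$, and that the boundary rules interact correctly with $\mathsf A$ and $\mathsf{Int}$, so that the rules really simulate $\delta$ step by step and $\bot$ is derived precisely in the case of acceptance.
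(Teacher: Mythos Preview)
Your proposal is correct and follows essentially the same approach as the paper: the upper bound via the exponential-size $\LTL$ translation plus the \textsc{PSpace} $\LTL$ decision procedure, and the lower bound via encoding a $2^{n}$-space deterministic Turing machine run along the integer timeline with punctual operators of range $2^{n}$ written in binary. The only differences are cosmetic: the paper uses $\boxplus_{[2^{n},2^{n}]}$ in rule heads where you use $\diamondminus_{[2^{n},2^{n}]}$ in bodies (equivalent for punctual ranges), it splits the cell alphabet into $H_{q,a}$/$N_{a}$ rather than your combined $\mathsf C_{a}$ over $\Gamma\cup(\Gamma\times Q)$, and it marks configuration boundaries with $\mathsf{first}$/$\mathsf{last}$ rather than your $\mathsf A$/$\mathsf{Int}$; your interval fact $\mathsf{InitBlank}@(n-1,2^{n})$ is in fact a cleaner device for seeding the blank tape than the paper's recursive rule with $\diamondplus_{(0,\infty)}$.
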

\begin{proof}
 We first show the upper bound. By the two lemmas above, a $\dMTL$ program $\Pi$ is consistent with a data instance $\D$ iff the \LTL{} formula $\varphi_{\Pi,\D}$ is satisfiable. Thus, a consistency checking  \textsc{ExpSpace} algorithm can first construct $\varphi_{\Pi,\D}$, which requires exponential time in the size of $\Pi$ and $\D$.
Indeed, the greatest common divisor of the set $\num(\Pi, \D)$ can be computed in polynomial time. The $\LTL$ formula $\varphi_{\Pi,\D}$ contains exponentially many clauses (as there are exponentially many assignments $\nu$) of at most exponential size (as they contain $2t/d$ conjuncts or disjuncts, where $t$ is a number from $\Pi$ or $\D$).
After that we can run a standard {\sc PSpace} satisfiability checking algorithm for \LTL; see, e.g., the work by \citeA{Sistla&Clarke85}.

 We establish the matching lower bound  by reduction of the non-halting problem for deterministic Turing machines with an exponential tape.
Let $M$ a deterministic Turing machine  that requires $2^{f(m)}$ cells of the tape given an input of length $m$, for some polynomial $f$. Let $n = f(m)$. Without loss of generality, we can assume that $M$ never runs outside the first $2^n$ cells.
Suppose $M= (Q, \Gamma, \#, \Sigma, \delta, q_0, q_h)$, where $Q$ is a finite set of states, $\Gamma$ a tape alphabet, $\# \in \Gamma$ the blank symbol, $\Sigma \subseteq
\Gamma$ a set of input symbols, $\delta\colon
(Q\setminus\{q_h\}) \times \Gamma \to Q \times \Gamma \times
\{L,R\}$ a transition function, and $q_0,q_h \in Q$ are the
initial and halting states, respectively. Let $\vec{a}=a_1\dots a_{m}$ be an input for $M$. We construct a propositional $\dMTL$ program $\Pi$ and a data instance $\D$ such that they are \emph{not} consistent iff $M$ accepts $\vec{a}$.
In our encoding, we employ the following propositional variables, where $a \in \Gamma$, $q \in Q$:
\begin{itemize}
      %
\item[--] $H_{q,a}$ indicating that a cell is read by the head, the current state of the machine is $q$, and the cell contains $a$;
\item[--] $N_a$ indicating that a cell is not read by the head and contains $a$,

\item[--] $\mathsf{first}$ and $\mathsf{last}$ marking the first and last cells of a configuration, respectively.
\end{itemize}
The program $\Pi$ consists of the following rules, for $a,a', a''\in \Gamma$, $q,q'\in Q$:
\begin{align*}
& \boxplus_{2^n+1}
  H_{q', a''} \leftarrow H_{q,a} \land \boxplus_{1} N_{a''},\quad \boxplus_{2^n} N_{a'} \leftarrow H_{q,a} , &&  \text{if } \delta(q,a) = (q',a',R), \\
& \boxplus_{2^n-1} H_{q', a''} \leftarrow H_{q,a} \land \boxminus_{1} N_{a''},\quad \boxplus_{2^n} N_{a'} \leftarrow H_{q,a} , && \text{if } \delta(q,a) = (q',a',L),
  \\
&  \boxplus_{2^n} N_a  \leftarrow \boxminus_1 N_{a'} \land  N_{a} \land \boxplus_{1} N_{a''},&& \\
& \boxplus_{2^n} N_a  \leftarrow \boxminus_1 H_{q, a'} \land  N_{a} \land \boxplus_{1} N_{a''},&&
  \text{if } \delta(q,a') \neq (r,b,R) \text{ for all }r,b,\\
&  \boxplus_{2^n} N_a  \leftarrow \boxminus_1 N_{a'} \land N_{a} \land \boxplus_{1} H_{q, a''},&&
  \text{if } \delta(q,a'') \neq (r,b,L) \text{ for all }r,b,\\
&  \boxplus_{2^n} N_a  \leftarrow N_{a} \land \mathsf{first} \land  \boxplus_{1}N_{a'},&&
 \\
&  \boxplus_{2^n} N_a  \leftarrow N_{a} \land \mathsf{first} \land \boxplus_{1} H_{q, a'},&&
  \text{if } \delta(q,a') \neq (r,b,L) \text{ for all }r,b,\\
%
&  \boxplus_{2^n} N_a  \leftarrow \boxminus_1 N_{a'} \land  N_a \land \mathsf{last},&&\\
&  \boxplus_{2^n} N_a  \leftarrow \boxminus_1 N_{q, a'} \land  N_a \land \mathsf{last},&&\text{if } \delta(q,a') \neq (r,b,R) \text{ for all }r,b,\\
&  \boxplus_{2^n} \mathsf{first}  \leftarrow \mathsf{first}, \\
& \boxplus_{2^n} \mathsf{last} \leftarrow  \mathsf{last},&&\\
&   \bot  \leftarrow H_{q_h, a},&\\
& \boxplus_{1} N_\# \leftarrow N_\# \land \diamondplus_{(0, \infty)} N_\#^<,&
\end{align*}
where $\boxplus_r$ is an abbreviation for $\boxplus_{[r,r]}$ and similarly for $\boxminus_r$.
Let $\D$ contain the following facts:
\begin{multline*}
 N_{a_i}@[i,i], \text{ for } 1 < i \leq m, \ \ N_{\#}@[m+1,m+1],\ \ N_{\#}^<@[2^n,2^n],\\
 H_{q_0, a_1}@[1,1], \ \ \mathsf{first}@[1,1], \quad \mathsf{last}@[2^n, 2^n].
\end{multline*}
The program represents the computation of $M$ on $\vec{a}$ as a sequence of configurations. The initial one is spread over the time instants  $1, \dots, 2^n$, from which the first $m$ instants represent $\vec{a}$ and the remaining ones are $\#$. The second configuration uses the next $2^n$ instants (i.e., $2^n+1, \dots, 2^n + 2^n$), etc.
It is routine to check that $M$ halts on $\vec{a}$ iff $\Pi$ and $\D$ are inconsistent.
\end{proof}

Note that $\dMTL$ allows \emph{punctual intervals} of the form $[r,r]$ as ranges of temporal operators, and that full propositional $\MTL$ with such intervals is undecidable~\cite{DBLP:journals/iandc/AlurH93}.


Now we turn to the data complexity of $\dMTL$ and show the following result:
\begin{theorem}\label{thm:phard}
    Consistency checking and answering propositional $\dMTL$ queries is \textsc{P}-hard for data complexity \textup{(}under \textsc{LogSpace} reductions\textup{)}.
\end{theorem}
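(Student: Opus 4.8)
The plan is to prove \textsc{P}-hardness by a \textsc{LogSpace} reduction from a \textsc{P}-complete problem; I would take the acceptance problem of a fixed deterministic Turing machine $M$ that decides such a problem in polynomial time $T(n)$ on space $S(n)$ (such an $M$ exists because every \textsc{P}-complete language lies in \textsc{P}), so that the reduction only has to produce a data instance, the program $\Pi$ and the goal being fixed. The idea is to use the real line as the address space of the space--time diagram of $M$'s run on the input $w$: configuration number $t$ is placed inside the unit interval $[t,t+1)$, and cell $p$ of that configuration at the dyadic time point $t+p\cdot 2^{-k}$, where $2^{k}\ge S(n)+1$. The crucial feature of this layout is that the ``vertical'' step from a cell of configuration $t{+}1$ to the cell directly above it in configuration $t$ is a jump of length exactly $1$ in real time --- independent of the input-dependent width $S(n)$ --- and is therefore realisable by the \emph{fixed} metric operators $\boxminus_{[1,1]}$ and $\boxplus_{[1,1]}$; a ``horizontal'' step to a neighbouring cell, on the other hand, is a jump of the data-dependent length $2^{-k}$ that no fixed metric operator can perform, and I would instead realise it by $\U$/$\Si$ with a bounded range, letting the program ``scan'' to the next (respectively, previous) cell past explicitly inserted filler markers (this is exactly the bounded-but-unbounded-reach navigation that $\Si$ and $\U$ afford; cf.\ Example~\ref{ex:dance}).

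In more detail: \textbf{(i)} the fixed program $\Pi$ contains propagation rules of the shape $\boxplus_{[1,1]}(\cdots)\leftarrow(\cdots)$ that compute, for every cell of configuration $t{+}1$, its symbol and whether it carries the head and in which state, from a bounded neighbourhood of the corresponding cell in configuration $t$ (away from the head a cell just copies its symbol; the head neighbourhood is updated according to $\delta$, which is why the $\U$/$\Si$ navigation rules for reading the left and right neighbours are needed); a rule $\bot\leftarrow\mathsf{Accept}$; and rules deriving $\mathsf{Accept}$ as soon as a cell carrying the accepting state appears --- this program can be put in normal form by the transformation of Section~\ref{sec:complexity}. \textbf{(ii)} The \textsc{LogSpace} map $w\mapsto\D_w$ writes out the polynomially many facts describing the initial configuration of $M$ on $w$ at the points $0,2^{-k},\dots,S\,2^{-k}$, together with the scaffolding ($\mathsf{Cell}$ at each cell, $\mathsf{Gap}$ on the open section between consecutive cells of the same configuration, configuration-boundary markers, and a sink marker); every timestamp is a dyadic rational of $O(\log n)$ bits and there are polynomially many facts, so $\D_w$ is indeed computable in logarithmic space. \textbf{(iii)} Correctness is proved by induction on $t$: using Lemma~\ref{canonical} one shows that the canonical model $\mathfrak C_{\Pi,\D_w}$, restricted to $[t,t+1)$, encodes exactly the $t$-th configuration of $M$ on $w$ and that no spurious atoms arise elsewhere in the chase; hence $\bot@\iota\in\mathfrak C_{\Pi,\D_w}$ for some $\iota$ iff $M$ accepts $w$, i.e.\ $(\Pi,\D_w)$ is inconsistent iff $M$ accepts $w$. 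Since the acceptance problem of $M$ is \textsc{P}-complete and \textsc{P} is closed under complement, consistency checking is \textsc{P}-hard; \textsc{P}-hardness of answering propositional queries then follows from Proposition~\ref{prop:red} (whose reduction is trivially \textsc{LogSpace}), or directly by deriving a designated propositional $Q$ in place of $\bot$.

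The main obstacle is the ``wiring'' issue just highlighted: a fixed propositional $\dMTL$ program cannot refer to data-dependent offsets, so every dependency of the simulation must be turned either into a fixed real-time jump or into a marker-guided scan, and the single genuinely new idea is to make the configuration transition a unit-length jump so that $\boxminus_{[1,1]}$/$\boxplus_{[1,1]}$ can implement it uniformly. The delicate part is then the section bookkeeping --- marking exactly the open sections strictly between consecutive cells with $\mathsf{Gap}$, so that $\mathsf{Gap}\,\U\,(\mathsf{Cell}\wedge\cdots)$ really selects the intended neighbour and never over-scans, while leaving the dead space $(t+S\,2^{-k},\,t+1)$ at the end of each configuration block unmarked so that it cannot interfere with the unit jumps --- together with checking that these two mechanisms jointly reproduce $\delta$ faithfully and that the transfinite chase of Section~\ref{sec:complexity} adds nothing beyond the intended encoding. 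Once the encoding is fixed, soundness and completeness are a routine induction over the stages $\cl^\xi(\D_w)$.
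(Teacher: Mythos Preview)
Your approach is sound and can be made to work, but it is not the paper's proof. The paper reduces from the \emph{monotone circuit value problem} rather than simulating a fixed polynomial-time Turing machine. It places gate $n$ (with inputs $m<k<n$) at the dyadic point $2n+n/N$ and its two input markers $I_0,I_1$ at $2n+m/N$ and $2n+k/N$; the fixed rule $V\leftarrow\diamondminus_{[2,2]}V$ propagates the value of each gate $m$ forward so that it reappears at every point $2n+m/N$ with $n>m$, and then a single $\diamondminus_{[0,1]}$ suffices to read both inputs from the point $2n+n/N$, since all of $m/N$, $k/N$, $n/N$ lie in $[0,1)$. The whole program uses only $\diamondminus$ with the two fixed ranges $[2,2]$ and $[0,1]$ --- no $\U$/$\Si$, no scaffolding, no gap markers.

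Compared with this, your construction buys generality (it would work for any fixed polytime machine) at the price of considerable bookkeeping: you must lay down $\mathsf{Cell}$/$\mathsf{Gap}$ markers over the entire $T(n)\times S(n)$ grid, handle boundary cells so that the $\Si$/$\U$ scans do not leak across configuration blocks, and argue that the chase produces nothing spurious in the dead space. None of this is wrong, but it is exactly the ``delicate part'' you flag, and the CVP encoding sidesteps it entirely because the circuit already has bounded fan-in and the wiring is carried by the data instance rather than reconstructed by the program. If you keep your route, the one point to nail down explicitly is that the range of your horizontal $\Si$/$\U$ is a \emph{fixed} constant (e.g.\ $(0,1)$) and that the absence of $\mathsf{Gap}$ in the dead space $(t+S\,2^{-k},\,t{+}1)$ is what blocks over-scanning --- you say this, but it is the crux.
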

\begin{proof}
We establish this lower bound by reduction of the monotone circuit value problem, which is known to be P-complete~\cite{Arora&Barak09}. Let $\boldsymbol{C}$ be a monotone circuit with input gates having fan-in 1 and all other gates fan-in 2. We assume that the gates are enumerated by consecutive positive integers, so that if there is an edge from $n$ to $m$ then $n < m$. Let $N = 2^k$, for some $k \in \mathbb{N}$, be the minimal number that is greater than or equal to the maximal gate number. We encode the computation of $\boldsymbol{C}$ on an input $\avec{\alpha}$ by a data instance $\D_{\boldsymbol{C}}$  with the following punctual facts, where $[n]$ stands for $[n,n]$:
\begin{itemize}\itemsep=0pt
\item[--] $V[2n+ n/N]$, if $n$ is an input  gate and $\avec{\alpha}(n) = V \in \{T,F\}$;
\item[--] $D[2n+n/N]$, if $n$ is an OR gate;
\item[--] $C[2n+n/N]$, if $n$ is an AND gate;
\item[--] $I_0[2n + m/N], I_1[2n + k/N]$, if $n$ is a gate with input gates $m$ and $k$.
\end{itemize}
Let $\Pi_{\boldsymbol{C}}$ be a $\dMTL$ program with the rules
\begin{align*}
T &\leftarrow \diamondminus_{[2,2]} T,\quad &&F \leftarrow \diamondminus_{[2,2]} F,\\
T & \leftarrow \diamondminus_{[0,1]}(I_0 \land T) \land D, &&  F  \leftarrow \diamondminus_{[0,1]}(I_0 \land F) \land C,\\
T & \leftarrow \diamondminus_{[0,1]}(I_1 \land T) \land D, && F \leftarrow \diamondminus_{[0,1]}(I_1 \land F) \land C,\\
F & \leftarrow \diamondminus_{[0,1]}(I_0 \land F) \land \diamondminus_{[0,1]}(I_1 \land F) \land D,\quad &&T \leftarrow \diamondminus_{[0,1]}(I_0 \land T) \land \diamondminus_{[0,1]}(I_1 \land T) \land C.
\end{align*}
Suppose $n$ is the output gate. Then it is straightforward to check that the value of $\mathbf{C}$ on $\avec{\alpha}$ is $T$ iff $(\Pi, \D) \models T[2n + n/N]$. This immediately implies the required hardness for the query answering problem. %
An example of a circuit $\boldsymbol{C}$ with an assignment $\avec{\alpha}$, and an initial part of the canonical model of $(\Pi_{\boldsymbol{C}}, \D_{\boldsymbol{C}})$ are shown below, with the black symbols above the timestamps indicating what is given in $\D_{\boldsymbol{C}}$ and the grey ones what is implied by $\Pi_{\boldsymbol{C}}$:
\\[5pt]
\begin{tikzpicture}
\begin{scope}[scale = 1.3, xshift = 160, circuit logic US, every circuit symbol/.style={thick},thick]
\node[and gate,inputs={nn}, point right,fill=gray!20,label={[xshift=.3cm, yshift = -.1cm]above left:{\scriptsize \smash{4}}}] (n1) at (0,0) {$\land$};

\node[or gate,inputs={nn}, point right,fill=gray!20,label={[yshift = -.05cm]above left:{\scriptsize \smash{3}}}] (n2) at (0,1) {$\lor$};

\node[and gate,inputs={nn}, point right,fill=gray!20,label={[xshift=.25cm, yshift = -.05cm] above left:{\scriptsize \smash{5}}}] (n3) at (1.3,0.5) {$\land$};

\draw (n1.input 2) -- +(-0.65,0);
\draw (n1.input 1) -| (-0.6,0.5);
\draw (n2.input 2) -| (-0.6,0.5);
\draw (n2.input 1) -- +(-0.7,0);
\draw (-0.6,0.5) -- (-1,0.5);
\node[rectangle, draw, label=160:{\scriptsize \!1}] at (-1.2,0.5) {\footnotesize $F$};
\node[rectangle, draw, label=160:{\scriptsize \!2}] at (-1.2,-0.1) {\footnotesize $T$};
\node[rectangle, draw, label=160:{\scriptsize \!0}] at (-1.2,1.1) {\footnotesize $T$};
\draw (n1.output) -- +(0.25,0) |- (n3.input 2);
\draw (n2.output) -- +(0.2,0) |- (n3.input 1);
\draw (n3.output) -- +(0.2,0);
\end{scope}
\begin{scope}[xscale = .9, yshift = -50, xshift = -50, point/.style={draw, thick, circle, inner sep=1.5, outer
		sep=2}]
    \node[point,  label=below:{\scriptsize $0$}, label=above:{\scriptsize $T$}] at (0,0) {};
    \node[point,  label=below:{\scriptsize $\frac{1}{8}$}] at (0.3,0) {};
    \node[point,  label=below:{\scriptsize $\frac{2}{8}$}] at (0.6,0) {};
    \node[point,  label=below:{\scriptsize $\frac{3}{8}$}] at (0.9,0) {};
    \node[point,  label=below:{\scriptsize $\frac{4}{8}$}] at (1.2,0) {};
    \node[point,  label=below:{\scriptsize $\frac{5}{8}$}] at (1.5,0) {};
    \node[point,  label=below:{\scriptsize $1$}] at (2.4,0) {};
    \node[point,  label=below:{\scriptsize $\frac{16}{8}$}, label={[yshift = .4cm]above: {\scriptsize $\textcolor{gray}{T}$}}] at (4.8,0) {};
    \node[point, label=above:{\scriptsize $F$}] at (5.1,0) {};
    \node[point
    ] at (5.4,0) {};
    \node[point,  label=below:{
    $\dots$}] at (5.7,0) {};
    \node[point
    ] at (6,0) {};
    \node[point,  label=below:{\scriptsize $\frac{21}{8}$}] at (6.3,0) {};
    \node[point,  label=below:{\scriptsize $3$}] at (7.2,0) {};
    \node[point,  label=below:{\scriptsize $\frac{32}{8}$}, label={[yshift = .4cm]above: {\scriptsize $\textcolor{gray}{T}$}}] at (9.6,0) {};
    \node[point, label={[yshift = .4cm]above: {\scriptsize $\textcolor{gray}{F}$}}] at (9.9,0) {};
    \node[point, label=above:{\scriptsize $T$}] at (10.2,0) {};
    \node[point,  label=below:{$\dots$}] at (10.5,0) {};
    \node[point] at (10.8,0) {};
    \node[point,  label=below:{\scriptsize $\frac{37}{8}$}] at (11.1,0) {};
    \node[point,  label=below:{\scriptsize $5$}] at (12.0,0) {};
    \node[point,  label=below:{\scriptsize $\frac{48}{8}$}, label=above:{\scriptsize $I_0$}, label={[yshift = .4cm]above: {\scriptsize $\textcolor{gray}{T}$}}] at (14.4,0) {};
    \node[point, label=above:{\scriptsize $I_1$}, label={[yshift = .4cm]above: {\scriptsize $\textcolor{gray}{F}$}}] at (14.7,0) {};
    \node[point, label={[yshift = .4cm]above: {\scriptsize $\textcolor{gray}{T}$}}] at (15.0,0) {};
    \node[point,  label=below:{ $\dots$}, label=above:{\scriptsize $D$}, label={[yshift = .4cm]above: {\scriptsize $\textcolor{gray}{T}$}}] at (15.3,0) {};
    \node[point] at (15.6,0) {};
    \node[point,  label=below:{\scriptsize $\frac{53}{8}$}] at (15.9,0) {};
    \node at (16.4,0) {$..$};
\end{scope}
\end{tikzpicture}\\[3pt]
To show P-hardness of the consistency problem, it suffices to add the fact $P[2n + n/N]$ to $\D_{\boldsymbol{C}}$, for a fresh $P$, and the axiom $\bot \leftarrow P \land T$ to $\Pi_{\boldsymbol{C}}$.
\end{proof}

The exact data complexity of answering propositional $\dMTL$ queries remains open. It is worth noting that answering ontology-mediated queries with propositional \LTL{} ontologies is \text{NC}$^1$-complete for data complexity~\cite{DBLP:conf/ijcai/ArtaleKKRWZ15}, while answering propositional datalog queries with the Halpern-Shoham operators is P-complete for data complexity~\cite{DBLP:conf/ijcai/KontchakovPPRZ16}.

The diamond operators $\diamondplus_\range$ and $\diamondminus_\range$ are disallowed in the head of $\dMTL$ rules. Denote by $\dMTL^\Diamond$ the extension of $\dMTL$ that allows both box and diamond operators in the head of rules. We show now that this language has much more expressive power and can encode 2-counter Minsky machines, which gives the following theorem; cf.\ the work by~\citeA{DBLP:journals/corr/abs-1305-6137}:

\begin{theorem}\label{thm:undec}
Consistency checking for propositional $\dMTL^{\!\Diamond}$ programs is undecidable.
\end{theorem}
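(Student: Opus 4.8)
The goal is to reduce the halting problem for $2$-counter Minsky machines to consistency checking for propositional $\dMTL^{\!\Diamond}$. A Minsky machine has a finite set of instructions, each either incrementing a counter and going to a next instruction, or testing a counter for zero and branching, or halting. I will encode a run of the machine on the dense timeline $\mathbb R$, using a $\dMTL^{\!\Diamond}$ program $\Pi$ and a data instance $\D$ that fixes the initial configuration, so that $\Pi$ and $\D$ are inconsistent iff the machine halts (the $\bot$-rule firing on the halting instruction).

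\textbf{Encoding the configurations.} I would lay out successive configurations in consecutive unit-length blocks $[j,j+1)$, $j \in \mathbb N$, marking the block boundaries with a propositional variable and using $\boxplus_{[1,1]}$-rules (or $\diamondplus_{[1,1]}$ in the head) to copy structural information from one block to the next. Inside a block, I will use the \emph{density} of the timeline to represent the (unbounded) counter values. The natural trick here is to represent the value of counter $C$ by the position of a single ``marker'' event inside the block: a proposition $C$ holding at one point $t\in(j,j+1)$, where larger counter values correspond to the marker being closer to the right endpoint. Incrementing then means the new marker sits strictly between the old one and the block's right end; a zero test checks whether the marker coincides with the left end. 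The punctual operator $\diamondplus_{[1,1]}$ in the head of a rule is exactly what lets one \emph{force the existence} of such a marker in the next block at the required offset, which plain $\dMTL$ (box-only heads) cannot do—this is why the extension is needed and why the theorem fails for $\dMTL$ itself.

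\textbf{Simulating the transitions.} For each instruction I would write rules that, firing at the current configuration's ``control point'', assert (via a $\diamondplus$-head) the control point and counter markers of the successor configuration one unit to the right, implementing increment/decrement by shifting the marker, and implementing the zero-test branch by a rule conditioned on the marker being (or not being) at the left endpoint—detected with $\diamondminus_{(0,\infty)}$ or an appropriate combination of punctual and non-punctual operators. One must also propagate the \emph{untouched} counter's marker faithfully, which is the routine but delicate bookkeeping: rules copying $C$ at the same relative offset into the next block, guarded so that exactly one marker survives per block. Finally, add $\bot \leftarrow (\text{halting instruction active})$, and put the initial instruction plus zero-valued counter markers into $\D$.

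\textbf{Main obstacle.} The hard part will be ensuring the encoding is \emph{faithful in both directions} despite the canonical (minimal-model) semantics: rules only \emph{add} atoms, so I cannot directly forbid spurious markers, and the dense timeline allows arbitrarily many candidate positions. I expect the fix to be a careful choice of representation that makes the relevant relations (marker equality, ``strictly to the right'', offset-by-a-fixed-amount) expressible by \emph{deterministic} consequences—using punctual $\diamondplus_{[r,r]}$ heads to pin exact positions and $\boxplus$ to propagate, so that the canonical model contains \emph{exactly} the atoms describing the genuine run. Showing that the canonical model of $(\Pi,\D)$ thus faithfully tracks the configuration sequence, and invoking Lemma~\ref{canonical}(ii) to conclude that $\bot$ enters the chase iff the machine halts, then completes the reduction.
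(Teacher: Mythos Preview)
Your reduction idea is on the right track, but the concrete encoding has a genuine gap. Representing a counter value by the \emph{position} of a single marker in a unit block cannot faithfully simulate a counter: if increment places the new marker ``somewhere strictly between the old one and the right end'' and decrement places it ``somewhere strictly to the left'', then \textsf{INC} followed by \textsf{DEC} need not return the marker to its original position, so a subsequent zero test is unreliable. No choice of Horn rules fixes this, because you would have to force the decremented marker to land at an \emph{exact} previously visited real number, and the language gives you no way to name that number. Relatedly, your suggestion to ``use punctual $\diamondplus_{[r,r]}$ heads to pin exact positions'' does not exploit the extension at all: $\diamondplus_{[r,r]}$ is equivalent to $\boxplus_{[r,r]}$, which is already available in plain $\dMTL$ (and that fragment is decidable by Theorem~\ref{thm:dMTL}). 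The genuinely new power of $\dMTL^{\!\Diamond}$ comes from \emph{non-punctual} diamonds in the head, and any undecidability proof must use them essentially. Finally, you cannot invoke Lemma~\ref{canonical}(ii): that lemma is about $\dMTL$ without diamond heads; once $\diamondplus_\varrho$ appears in a head the chase is no longer deterministic and there is no canonical model, so the argument must reason directly about the existence of \emph{some} model.

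The paper's fix is to encode the value $k$ of a counter not by a position but by the \emph{number} of time points in a designated open sub-interval at which a proposition $C$ holds (with a companion proposition $Z$ covering the rest of the sub-interval, and $\bot \leftarrow Z \land C$ to keep them disjoint). Copying and decrementing are then deterministic ($\boxplus_{[4,4]}$ shifts each $C$-point, and decrement simply fails to copy the last one), while increment uses the single non-punctual head rule $\diamondplus_{(0,1)} N \leftarrow \ldots$ to assert the existence of one fresh point $N$ after the last $C$; this is exactly where $\dMTL^{\!\Diamond}$ is needed. Correctness is argued by showing that every model must contain a faithful trace of the run (so halting forces $\bot$), and conversely that a non-halting run yields a model.
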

\begin{proof}
We use some ideas of~\citeA{DBLP:journals/corr/abs-1305-6137}, where  a non-Horn fragment of \MTL{} was shown to be undecidable. The proof is by reduction of the undecidable non-halting problem for Minksy machines: given a 2-counter Minsky machine, decide whether it \emph{does not halt} starting from $0$ in both counters.

Suppose we are given a Minsky machine with counters $C_1$ and $C_2$ that has $n-1$ instructions of the form
\begin{align*}
& \text{$i$: Increment($C_k$), goto $j$},\\
& \text{$i$: Decrement($C_k$), goto $j$}, \\
& \text{$i$: If $C_k = 0$ then $j_1$ else $j_2$},
\end{align*}
where $i$, $j$, $j_1$ and $j_2$ are instruction indexes, $k=1,2$, and the $n$-th instruction is
$$
\text{$n$: Halt}.
$$
We encode successive configurations of the machine using the sequence $[0,4),[4,8),[8,12),\dots$ of time  intervals. The current  instruction index is represented by a propositional variable $P_i$, for $1 \le i \le n$, that holds at the first point, say $4m$, of the  interval $[4m,4m+4)$. The current value, say $k_1$, of the counter $C_1$ is encoded by exactly $k_1$ moments of time in the interval $(4m +1,4m+2)$ where the propositional variable $C$ holds true. Similarly, the value $k_2$ of $C_2$ is encoded by exactly $k_2$ moments in the interval $(4m +3,4m+4)$ where the propositional variable $C$ holds true.

The initial configuration is encoded by the following data instance $\D$, where the variable $Z$ indicates that both counters are 0:
\begin{equation}
   P_1@[0,0], \quad Z@ (1,2), \quad Z@ (3,4).
\end{equation}
%
%
For every $i$ $(1 \leq i \leq n)$ we require the rules
\begin{align}
   %
   %
   \boxplus_{[0,1]} Z \leftarrow& P_i,\quad \boxplus_{[2,3]} Z \leftarrow P_i, \quad \bot \leftarrow Z \land C, \quad \bot \leftarrow Z \land N
\end{align}
saying, in particular, that $C$ cannot hold true outside the intended intervals (here $N$ is an auxiliary variable).
To simplify notation, we use the following abbreviations: $\Ci= \boxplus_{[1,1]}$, $\Cii = \boxplus_{[3,3]}$, and $\nxt = \boxplus_{[4,4]}$.
The machine instructions are encoded as follows (the instructions for $C_2$ are obtained by replacing $\Ci$ with  $\Cii$):
\begin{align*}
&\nxt P_{j_1} \leftarrow P_i \land \Ci \boxplus_{(0,1)} Z,&&\\
&\nxt P_{j_2} \leftarrow P_i \land \Ci \diamondplus_{(0,1)} C,&&\text{$i$: if $C_1 = 0$ then $j_1$} \\
& \Ci \boxplus_{(0,1)} \text{CP} \leftarrow P_i, \quad   \Cii \boxplus_{(0,1)} \text{CP} \leftarrow P_i,&& \hspace*{5em} \text{else $j_2$}\\
& \nxt P_j \leftarrow P_i, \quad \Ci \boxplus_{(0,1)} \text{IC} \leftarrow P_i,\quad \Cii \boxplus_{(0,1)} \text{CP} \leftarrow P_i, && \text{$i$: Inc($C_1$), goto $j$}\\
& \nxt P_j \leftarrow P_i, \quad \Ci \boxplus_{(0,1)} \text{DC} \leftarrow P_i,\quad \Cii \boxplus_{(0,1)} \text{CP} \leftarrow P_i, && \text{$i$: Dec($C_1$), goto $j$}.
\end{align*}
Here the variable CP means copying of the counter value, DC means  decrementing it by $1$, and IC incrementing it by $1$. To achieve this, we require the following rules:
\begin{align}
  \notag \nxt C &\leftarrow \text{CP} \land C,\quad
  \nxt Z  \leftarrow \text{CP} \land Z,\\
  \notag \nxt C &\leftarrow \text{DC} \land C \land \diamondplus_{(0,1)} C,\\
  \notag \nxt Z &\leftarrow \text{DC} \land Z \land \diamondplus_{(0,1)} C,\quad \nxt  \boxplus_{[0,1]} Z \leftarrow \text{DC} \land C \land \boxplus_{(0,1)} Z, \\
  %
\label{eq:undec-incr-1} \diamondplus_{(0,1)} N  & \leftarrow \boxplus_{(0,1)} \text{IC} \land \boxplus_{(0,1)} Z, \quad \diamondplus_{(0,1)} N   \leftarrow C \land \text{IC} \land \boxplus_{(0,1)} Z,\\
  \label{eq:undec-incr-2} \nxt C &\leftarrow \text{IC} \land C, \quad \nxt C \leftarrow \text{IC} \land N,\\
 \label{eq:undec-incr-3} \nxt Z &\leftarrow \text{IC} \land Z \land \diamondplus_{(0,1)} N, \quad \nxt  \boxplus_{(0,1)} Z \leftarrow \text{IC} \land N \land \boxplus_{(0,1)} Z,
\end{align}
We explain the intuition behind the most complex rules~\eqref{eq:undec-incr-1}--\eqref{eq:undec-incr-3} that are used to model the increment of the counters. The rules~\eqref{eq:undec-incr-1} mark a new time-point  with the variable $N$ in a block located after the last $C$-time-point in this block (or, according the first axiom, $N$ is placed anywhere in the block if the current value of a counter is $0$). The rules~\eqref{eq:undec-incr-2} insert   $C$ in the next block, where in the current block we have either $C$ or $N$. The rules~\eqref{eq:undec-incr-3} transfer $Z$ from the current block to the next one excluding the time-point where $N$ holds.
%
Finally, we add the rule
\begin{align*}
  &\bot \leftarrow P_n, && \text{n: Halt}.
\end{align*}
It is not hard to check that the program and data instance above are consistent iff the given 2-counter Minsky machine does not halt.
\end{proof}

The diamond operators in the head of rules can encode disjunction and thereby ruin `Horness'\!. Thus, the temporalised description logic $\mathcal{EL}$ with such rules is undecidable~\cite{DBLP:conf/time/LutzWZ08}; cf.\  also the work by~\citeA{GJK-IJCAI16}. The addition of diamonds in the heads to the Horn fragment of the propositional Halpern-Shoham logic $\mathcal{HS}$ can make a P-complete logic  undecidable~\cite{DBLP:journals/tocl/BresolinKMRSZ17}. A distinctive feature of these formalisms is their two-dimensionality~\cite{many_dimensional_modal_logics}, while propositional $\dMTL$ is one-dimensional.  Diamonds in the head of rules also ruin FO-rewritability  of answering ontology-mediated queries with temporalised \textsl{DL-Lite} ontologies by increasing their data complexity to \textsc{coNP}~\cite{ArtaleKWZ13}. The same construction actually shows that nonrecursive $\dMTL$ with binary predicates and diamonds in the heads is \textsc{coNP}-hard.


\section{Nonrecursive $\dMTL$}\label{nonrecursive}

As none of the $\dMTL$ programs required in our use cases is recursive, we now consider the class $\nrdMTL$ of nonrecursive $\dMTL$ programs. We first show that consistency checking (and so query answering) for $\nrdMTL$ programs is \textsc{PSpace}-complete for combined complexity. Then we regard a given $\nrdMTL$ program as fixed and reduce these problems to evaluating a (data-independent) FO$(<)$-formula over any given data, thereby establishing that $\nrdMTL$ is in $\text{AC}^0$ for data complexity.

More precisely, for a program $\Pi$, let $\lessdot$ be the dependence relation on the predicate symbols in $\Pi$: we have $P \lessdot Q$ iff $\Pi$ contains a clause with $P$ in the head and $Q$ in the body. $\Pi$ is called \emph{nonrecursive} if $P \lessdot^+ P$ does not hold for any predicate symbol $P$ in $\Pi$, where $\lessdot^+$ is the transitive closure of $\lessdot$. We denote by $\mathsf{depth}_\Pi(P)$ the maximal number $l$ such that  $P_0 \lessdot P_1 \lessdot \dots \lessdot P_l = P$. (Note that $\mathsf{depth}_\Pi(P) = 0$ iff either $P$ does not occur in $\Pi$ or $P$ occurs only in the body of some rules.) The maximal $\mathsf{depth}_\Pi(P)$ over all predicates $P$ is denoted by  $\mathsf{depth}(\Pi)$.
It should be clear that, for any nonrecursive $\Pi$ and any data instance $\D$, there exists some $n \in \mathbb{N}$ such that $\cl^{n+1}(\D) = \cl^n(\D) = \mathfrak C_{\Pi, \D}$. Therefore, $\mathfrak C_{\Pi, \D}$ is finite.

Denote by $\min \D$ and $\max \D$ the minimal and, respectively, maximal \emph{finite} numbers that occur in the intervals from $\D$.
Let $K$ be the largest number occurring in $\Pi$. We then set
\begin{equation*}
M_l = \min \D -  K \times \mathsf{depth}(\Pi)  \quad \text{ and } \quad  M_r = \max \D +  K \times \mathsf{depth}(\Pi).
\end{equation*}
Let $d = \gcd(\num(\Pi, \D))$. The next lemma will be required for our \textsc{PSpace} algorithm checking consistency of $\nrdMTL$ programs.
\begin{lemma}\label{l:zones-nr}
  Let $\Pi$ be a $\nrdMTL$ program. Then every interval $\iota$ such that  $P(\avec{c})@\iota \in \mathfrak C_{\Pi, \D}$ or $\bot(\avec{c})@\iota \in \mathfrak C_{\Pi, \D}$ takes one of the following forms: $(- \infty, \infty)$, $\langle dk, \infty)$, $(- \infty, dk \rangle$, $\langle dk, dk' \rangle$, where $k, k' \in \mathbb{Z}$ and  $M_l \leq dk \leq dk' \leq M_r$.
\end{lemma}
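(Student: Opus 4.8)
The plan is to prove this by transfinite induction on the construction of $\mathfrak C_{\Pi, \D}$, exactly as in Lemma~\ref{l:zones}, but now tracking two separate invariants simultaneously: (1) every interval $\iota$ with $P(\avec{c})@\iota \in \mathfrak C_{\Pi, \D}$ (or $\bot@\iota$) has endpoints that are integer multiples of $d$ (the ``alignment'' invariant, already handled by Lemma~\ref{l:zones}), and (2) every finite endpoint of such an interval lies in the closed interval $[M_l, M_r]$ (the ``boundedness'' invariant, which is new). For the base case, $\cl^0(\D) = \D \cup \{\top @ (-\infty,\infty)\}$; every finite endpoint occurring in $\D$ lies between $\min \D$ and $\max \D$, hence trivially in $[M_l, M_r]$ since $M_l \le \min\D$ and $\max\D \le M_r$ (as $K \ge 0$ and $\mathsf{depth}(\Pi) \ge 0$). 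The alignment invariant for the base case is also immediate. The real content is the inductive step.

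For the inductive step I would go through each chase rule. The rule (coal) only takes unions of existing intervals, so it cannot introduce new endpoints or move existing ones outward — both invariants are preserved trivially. The rule (horn) takes finite intersections, so again any endpoint of the result is an endpoint of some input interval, and both invariants carry over. The interesting rules are the temporal ones: $(\boxplus_\range)$, $(\boxminus_\range)$, $(\Si_\range)$, $(\U_\range)$. Here the endpoints genuinely shift by amounts drawn from $\range$, whose endpoints are among $\num(\Pi, \D)$ and hence divisible by $d$ — this preserves alignment. For boundedness, the key observation is that each application of a temporal rule shifts a finite endpoint by at most $K$ (the largest number in $\Pi$, which bounds all the range endpoints $r, r_1, r_2$ that are finite; the $\infty$ cases produce $\pm\infty$ endpoints which are not constrained). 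I would argue that to produce an atom whose predicate is $P$ with $\mathsf{depth}_\Pi(P) = l$, one needs at most $l$ successive applications of ``shifting'' rules starting from data atoms (atoms derived purely by (horn) and (coal) inherit their finite endpoints unchanged from the atoms they are built from, and do not increase the depth beyond the max of their inputs). Hence any finite endpoint of an atom with head predicate $P$ differs from some finite endpoint in $\D$ by at most $l \cdot K \le \mathsf{depth}(\Pi) \cdot K$, which places it in $[M_l, M_r]$.

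The main obstacle — the step requiring genuine care rather than routine bookkeeping — is making the ``depth controls the number of shifts'' argument precise in the presence of coalescing and transfinite recursion. Coalescing can merge infinitely many intervals, and an atom $P(\avec{c})@\iota$ may be re-derived at many ordinal stages; one must be sure that when a temporal rule fires on $P_1(\avec{c}_1)@\iota$, the interval $\iota$ has already been ``tightened'' by all available coalescings so that its endpoints are still among those reachable by $\le \mathsf{depth}_\Pi(P_1)$ shifts from the data. The clean way to handle this is to prove, by induction on $\mathsf{depth}_\Pi(P)$ (an ordinary finite induction, since $\Pi$ is nonrecursive), the statement: ``for every $P$ and every $\iota$ with $P(\avec{c})@\iota \in \mathfrak C_{\Pi, \D}$, every finite endpoint of $\iota$ is of the form $dk$ with $|dk - z| \le \mathsf{depth}_\Pi(P)\cdot K$ for some finite $z$ occurring in $\D$'' — noting that $\mathfrak C_{\Pi, \D}$ restricted to predicates of depth $\le l$ is already closed and finite, so this inner induction is well-founded and avoids the transfinite subtleties entirely. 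With that statement in hand, the bound $M_l \le dk \le dk' \le M_r$ for all atoms follows by taking the maximum over all $P$, and combining with the alignment already established in (the proof of) Lemma~\ref{l:zones} completes the argument. I would then remark that, since there are only finitely many $d$-aligned intervals with endpoints in $[M_l, M_r]$ (together with the five ``infinite'' shapes), this re-confirms that $\mathfrak C_{\Pi, \D}$ is finite, as noted just before the lemma.
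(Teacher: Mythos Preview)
Your proposal is correct and takes essentially the same approach as the paper: alignment is deferred to Lemma~\ref{l:zones}, and boundedness is established by showing that any finite endpoint of an atom with head predicate $P$ lies within $K\cdot\mathsf{depth}_\Pi(P)$ of some endpoint in $\D$, using that each temporal rule shifts by at most $K$ and strictly decreases depth. The paper organises this as induction on the chase stage $n$ with a depth-indexed invariant $\mathsf{hi}^n(P) \le \max\D + K\cdot\mathsf{depth}_\Pi(P)$ (and symmetrically for the lower bound), whereas you propose a direct finite induction on $\mathsf{depth}_\Pi(P)$; these are interchangeable formulations of the same argument, and your observation that nonrecursiveness collapses the transfinite construction to a finite one is exactly what makes either version work.
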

\begin{proof}
That every interval in $\mathfrak C_{\Pi, \D}$ is of the form $(- \infty, \infty)$, $\langle dk, \infty)$, $(- \infty, dk \rangle$, $\langle dk, dk' \rangle$, where $k, k' \in \mathbb{Z}$, was observed in the proof of Lemma~\ref{l:zones}. Thus, we only need to establish  the bounds on $dk$ and $dk'$. For each $P$, let $\mathsf{hi}(P)$ and $\mathsf{lo}(P)$ be the maximal and, respectively, minimal number $dk \in \mathbb{Q}$ such that $P(\avec{c})@\iota \in \mathfrak C_{\Pi, \D}$ and $dk$ is an end-point of $\iota$. Note that $\mathsf{hi}(P)$ and $\mathsf{lo}(P)$ can be undefined.  We are going to show that $\mathsf{hi}(P)$ is either undefined or $\mathsf{hi}(P) \leq \max \D + \mathsf{depth}_\Pi(P) K$. (That $\mathsf{lo}(P)$ is either undefined or $\mathsf{lo}(P) \geq \min \D - \mathsf{depth}_\Pi(P) K$ is left to the reader.) Clearly, this fact implies the required bounds on $dk$ and $dk'$.

The proof is by induction on the construction of $\mathfrak C_{\Pi, \D}$. Let $\mathsf{hi}^n(P)$ be the maximal $dk \in \mathbb{Q}_2$ such that $P(\avec{c})@\iota \in \cl^n(\D)$ and $dk$ is an end-point of $\iota$. We show by induction on $n$ that either $\mathsf{hi}^n(P)$ is  undefined or $\mathsf{hi}^n(P) \leq \max \D + K \mathsf{depth}_\Pi(P)$.

For the basis of induction, if $\mathsf{hi}^0(P)$ is defined and  $P(\avec{c})@\iota \in \cl^0(\D)$ is an atom mentioning $\mathsf{hi}^0(P)$, then $P(\avec{c})@\iota \in \D$ and $\mathsf{hi}^0(P) \leq \max \D$. Assume next that  $n = n' + 1$. Suppose $\mathsf{hi}^n(P)$ is defined and let $P(\avec{c})@\iota \in \cl^n(\D)$ be an atom mentioning $\mathsf{hi}^n(P)$. If $P(\avec{c})@\iota \in \cl^{n'}(\D)$, we are done by the induction hypothesis. Otherwise, we consider how $P(\avec{c})@\iota$ was obtained. Suppose it was obtained by {\rm (coal)} with $\iota = \bigcup_{i \in I} \iota_i$. By the induction hypothesis, $\mathsf{hi}^{n'}(P) \leq \max \D + K\mathsf{depth}_\Pi(P)$, and so every number mentioned in $\{ \iota_i \mid i \in I\}$ does not exceed $\max \D + K\mathsf{depth}_\Pi(P)$. Thus, we have $\mathsf{hi}^{n}(P) \leq \max \D + K\mathsf{depth}_\Pi(P)$. Now suppose that $P(\avec{c})@\iota$ was obtained by {\rm (horn)} from $P_i(\avec{c}_i)@\iota_i$, $i \in I$. Observe that $\mathsf{depth}_\Pi(P_i) < \mathsf{depth}_\Pi(P)$ and, by the induction hypothesis, $\mathsf{hi}^{n'}(P_i) \leq \max \D + K\mathsf{depth}_\Pi(P_i)$. Since $\iota = \bigcap_{i \in I} \iota_i$, the maximal number mentioned in $\iota$ cannot exceed $\max \D + K\mathsf{depth}_\Pi(P)$. Thus, $\mathsf{hi}^n(P) \leq \max \D + K\mathsf{depth}_\Pi(P)$.
Consider now the case when $P(\avec{c})@\iota$ was obtained by applying $(\Si_{\range})$ to $P_i(\avec{c}_i)@\iota_i$, $i \in \{1,2\}$. By the induction hypothesis, the largest number mentioned in $\iota_i$ does not exceed $\max \D + K\mathsf{depth}_\Pi(P_i)$. On the other hand, $\mathsf{depth}_\Pi(P_i) < \mathsf{depth}_\Pi(P)$ and the maximal number in $\iota$ cannot be larger that the maximal number in $\{\iota_i \mid i \in \{1,2\} \}$ plus $K$. Thus, the maximal number in $\iota$ does not exceed
$$
\max \D + K\mathsf{depth}_\Pi(P_i) + K \leq \max \D + K\mathsf{depth}_\Pi(P),
$$
and so $\mathsf{hi}^n(P) \leq \max \D + K \mathsf{depth}_\Pi(P)$. The remaining temporal rules are similar and left to the reader.
\end{proof}

Suppose we are given a $\nrdMTL$ program $\Pi$ and a data instance $\D$. If $\Pi$ and $\D$ are inconsistent then, by Lemmas~\ref{canonical} and~\ref{l:zones-nr}, we have $\bot@\iota \in \mathfrak C_{\Pi, \D}$, for some $\iota$ of the form $(- \infty, \infty)$, $\langle dk, \infty)$, $(- \infty, dk \rangle$, $\langle dk, dk' \rangle$, where $k, k' \in \mathbb{Z}$ and  $M_l \leq dk \leq dk' \leq M_r$. Thus, there is a \emph{derivation} of $\bot@\iota$ from $\Pi$ and $\D$, that is, a tree whose root is $\bot@\iota$, whose leaves are some atoms from $\D$, and whose every non-leaf vertex results from applying one of the rules (coal), (horn), $(\Si_{\range})$, $(\boxplus_{\range})$, $(\U_{\range})$, $(\boxminus_{\range})$ to the immediate predecessors of this vertex.

\begin{lemma}\label{derivation}
If $\bot@\iota \in \mathfrak C_{\Pi, \D}$ then there is a derivation  of $\bot@\iota$ from $\Pi$ and $\D$ such that
\begin{itemize}
\item[$(i)$] the length of any branch in the derivation does not exceed $2|\Pi|$;

\item[$(ii)$] for some polynomial $p$, every non-leaf vertex, corresponding to the application of \textup{(}coal\textup{)} in the derivation, has at most $2^{p(|\Pi|,|\D|)}$ immediate predecessors.
\end{itemize}
\end{lemma}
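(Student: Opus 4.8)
The plan is to build a derivation of $\bot@\iota$ with the required shape directly, rather than taking an arbitrary derivation and surgically repairing it; doing the two tasks at once is what forces the direct approach. Call an atom $\alpha\in\mathfrak C_{\Pi,\D}$ (of the form $P(\avec c)@\kappa$ or $\bot@\kappa$) \emph{directly derivable} if $\alpha\in\D$ or $\alpha$ has a derivation whose root step is one of the non‑coalescing rules \textup{(horn)}, $(\Si_\range)$, $(\U_\range)$, $(\boxplus_\range)$, $(\boxminus_\range)$. The first step is to prove, by induction on the least $n$ with $\alpha\in\cl^n(\D)$ (a natural number, since $\Pi$ is nonrecursive and so $\mathfrak C_{\Pi,\D}=\cl^n(\D)$ for some $n$), the following \emph{decomposition property}: for every $\alpha=P(\avec c)@\kappa$ in $\mathfrak C_{\Pi,\D}$ there are intervals $\kappa_1,\dots,\kappa_N\subseteq\kappa$ with $\kappa=\bigcup_{j=1}^{N}\kappa_j$, each $P(\avec c)@\kappa_j$ directly derivable, and $N$ at most the number of intervals that may occur in $\mathfrak C_{\Pi,\D}$ (and likewise for $\bot@\kappa$). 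The non‑coalescing cases of the induction are immediate, since such a rule produces a directly‑derivable atom; the case of \textup{(coal)} takes the decompositions handed up by the induction hypothesis for the coalesced atoms, unions them, and deletes repetitions. The crucial input here is Lemma~\ref{l:zones-nr}: only finitely many distinct intervals occur in $\mathfrak C_{\Pi,\D}$, so even a transfinite \textup{(coal)} step collapses to a union of a bounded number of directly‑derivable pieces.

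Next I would bound the number of admissible intervals. By Lemma~\ref{l:zones-nr} every such interval has endpoints among $\pm\infty$ and the multiples $dk$ of $d=\gcd(\num(\Pi,\D))$ lying in $[M_l,M_r]$. Since all numbers in $\Pi$ and $\D$ are written in binary, $|\max\D|$, $|\min\D|$ and $K$ are at most exponential in $|\Pi|+|\D|$, $\mathsf{depth}(\Pi)\le|\Pi|$, and $1/d$ is a power of $2$ bounded by $2^{|\Pi|+|\D|}$; hence $(M_r-M_l)/d$, and therefore the total number of admissible intervals, is at most $2^{p(|\Pi|,|\D|)}$ for a suitable polynomial $p$. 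Combined with the decomposition property this gives $N\le 2^{p(|\Pi|,|\D|)}$.

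The final step is to assemble the derivation. As $\bot$ occurs only in heads of \textup{(horn)} rules, $\bot@\iota$ is directly derivable, say via $\bot\leftarrow\bigwedge_i P_i(\avec c_i)$ with $\bigcap_i\iota_i=\iota$ and each $P_i(\avec c_i)@\iota_i\in\mathfrak C_{\Pi,\D}$; I would take this step as the root. To each immediate predecessor atom I would apply the decomposition property: if its decomposition is the single interval itself, recurse on it as a directly‑derivable atom (expanding its root step and recursing on the predecessors of that step); otherwise, insert a single \textup{(coal)} vertex whose at most $2^{p(|\Pi|,|\D|)}$ children are directly‑derivable atoms, and recurse on each of them. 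This yields a derivation in which \textup{(coal)} vertices never stack and each has at most $2^{p(|\Pi|,|\D|)}$ immediate predecessors, establishing $(ii)$. For $(i)$, observe that along any branch the non‑coalescing rule vertices have head predicates of strictly decreasing $\mathsf{depth}_\Pi$: if $A$ is such a head and $B$ a body predicate of the applied rule then $A\lessdot B$, hence $\mathsf{depth}_\Pi(A)>\mathsf{depth}_\Pi(B)$, and nonrecursiveness forbids repetitions. Thus a branch carries at most $\mathsf{depth}(\Pi)+1\le|\Pi|$ non‑coalescing vertices, and since at most one \textup{(coal)} vertex lies between two consecutive ones, every branch has length at most $2|\Pi|$.

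The main obstacle is the tension between $(i)$ and $(ii)$. Starting from an arbitrary derivation and shortening branches by collapsing maximal chains of \textup{(coal)} vertices would make the fan‑in of the merged \textup{(coal)} vertex blow up doubly exponentially in the worst case, which is why the derivation must instead be built from the decomposition property — one normalised \textup{(coal)} step per atom, with a single controlled union. The second, subtler point is that \textup{(coal)} is a priori an infinitary (transfinite) rule, and it is precisely the finiteness of the interval set supplied by Lemma~\ref{l:zones-nr} that makes the decomposition property, and hence the bounded fan‑in, available.
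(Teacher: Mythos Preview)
Your proof is correct and rests on the same two ingredients the paper uses --- nonrecursiveness of $\Pi$ for the branch bound, and Lemma~\ref{l:zones-nr} for the fan-in bound --- but you package them differently. The paper's proof is a two-line sketch: for $(i)$ one observes that no rule of $\Pi$ can be applied twice along a branch (by nonrecursiveness) and collapses consecutive applications of \textup{(coal)} into a single one; $(ii)$ is just a citation of Lemma~\ref{l:zones-nr}. You instead build the derivation from scratch via your decomposition property, which makes the construction and the interaction between $(i)$ and $(ii)$ fully explicit.

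One small point: your closing worry that the naive ``collapse the \textup{(coal)} chains'' approach would blow up the fan-in doubly exponentially is overstated. After collapsing a maximal \textup{(coal)}-subtree, the children of the merged vertex are all atoms $P(\avec{c})@\kappa\in\mathfrak C_{\Pi,\D}$ with a fixed $P(\avec{c})$, and Lemma~\ref{l:zones-nr} already bounds the number of distinct such $\kappa$ by $2^{p(|\Pi|,|\D|)}$; deduplicating children suffices. So the direct construction via the decomposition property is not forced --- it is simply cleaner and more explicit than the paper's sketch.
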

\begin{proof}
To show $(i)$, it suffices to recall that $\Pi$ is non-recursive (and so none of the rules in $\Pi$ can be applied twice in the same branch of the derivation) and observe that we can always replace multiple successive applications of the rule (coal) with a single application.

$(ii)$ follows from Lemma~\ref{l:zones-nr}.
\end{proof}

\begin{theorem}\label{thm:pspace}
Consistency checking for $\nrdMTL$ programs is \textsc{PSpace}-complete for combined complexity. The lower bound holds even for propositional $\nrdMTL$.
\end{theorem}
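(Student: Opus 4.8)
The plan is to get \textsc{PSpace} membership by a space-bounded search through ``derivations'' of $\bot$, and \textsc{PSpace}-hardness --- already for propositional programs --- by a reduction from the validity problem for quantified Boolean sentences (\textsc{TQBF}).

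\emph{Upper bound.} By Lemma~\ref{canonical}, $\Pi$ and $\D$ are inconsistent iff $\bot@\iota\in\mathfrak C_{\Pi,\D}$ for some $\iota$, which, unwinding the rule (horn), holds iff there are an instance $\bot\leftarrow\bigwedge_{i\in I}P_i(\avec{c}_i)$ of a rule of $\Pi$ and a $t\in\mathbb R$ with $\mathfrak C_{\Pi,\D},t\models P_i(\avec{c}_i)$ for all $i\in I$. By Lemmas~\ref{l:zones} and~\ref{l:zones-nr}, the truth of an atom in $\mathfrak C_{\Pi,\D}$ is constant on each section $\sigma\in\sect_{\Pi,\D}$, and only the (exponentially many) sections meeting $[M_l,M_r]$, together with the two unbounded tails, are relevant; crucially, each such section is named by polynomially many bits, since its endpoints are multiples of $d$ of magnitude at most $\max(|M_l|,|M_r|)$. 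So it suffices to decide, for a ground atom $P(\avec{c})$ and a section $\sigma$, whether $\mathfrak C_{\Pi,\D},t\models P(\avec{c})$ for $t\in\sigma$. I would do this by a recursive procedure $\mathsf{Holds}(P(\avec{c}),\sigma)$ that returns ``yes'' if $P(\avec{c})@\iota\in\D$ with $\sigma\subseteq\iota$, and otherwise existentially picks a rule of $\Pi$ with head predicate $P$ and an instantiation of its remaining body variables by constants from $\ind(\Pi)\cup\ind(\D)$, and checks that the resulting ground body ``fires at $\sigma$'': for a conjunctive body this is a conjunction of calls $\mathsf{Holds}(P_i(\avec{c}_i),\sigma)$; for $\boxplus_\range P_1$ (resp.\ $\boxminus_\range P_1$) a universal quantification over the sections covered by $\sigma\po\range$ (resp.\ $\sigma\mo\range$) of calls $\mathsf{Holds}(P_1(\avec{c}_1),\cdot)$; and for $P_1\Si_\range P_2$ (resp.\ $P_1\U_\range P_2$) an existential choice of a ``witness'' section for $P_2$ combined with a universal over the sections strictly between it and $\sigma$ for $P_1$. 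These section-level conditions are exactly the renderings of the chase rules, with the same even/odd and $<$-versus-$\le$ bookkeeping already used in the clauses defining $\varphi_{\Pi,\D}$, so I would simply re-use that case analysis. Since $\Pi$ is non-recursive (cf.\ Lemma~\ref{derivation}), every nested call strictly decreases $\mathsf{depth}_\Pi$ of the predicate, so the recursion depth is at most $\mathsf{depth}(\Pi)+1\le|\Pi|+1$; each activation record holds one ground atom, one section and a few loop counters --- all of polynomial size --- while the (exponentially many) iterations over rules, instantiations and sections reuse the same space. Hence $\mathsf{Holds}$, and the overall consistency test, runs in polynomial space; the existential guesses are removed by a standard determinisation (Savitch), or by noting that the algorithm is alternating and runs in polynomial time, so lies in $\mathrm{AP}=\textsc{PSpace}$.

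\emph{Lower bound.} I would reduce from validity of a quantified Boolean sentence $\Phi=Q_1x_1\cdots Q_nx_n\,\psi$ with $\psi$ in CNF, laying the $2^n$ truth assignments of $x_1,\dots,x_n$ along $[0,2^n)$, where a point $t$ is identified with the assignment given by the $n$ binary digits of $\lfloor t\rfloor$. The data instance has constant size: $\mathsf M_n@[0,0]$, $\mathsf M_n@[2^n,2^n]$, $\mathsf R@[0,2^n)$, $\mathsf Z@[0,0]$, and $\Pi$ is a strictly layered, non-recursive \emph{propositional} program of depth $O(n)$ with binary-encoded ranges: (i) for $j=n{-}1,\dots,0$, rules $\mathsf M_j\leftarrow\mathsf M_{j+1}$ and $\mathsf M_j\leftarrow\diamondminus_{[2^j,2^j]}\mathsf M_{j+1}$ make $\mathsf M_j$ hold exactly at the multiples of $2^j$ in $[0,2^n]$ (plus spurious points above $2^n$), and then $\mathsf{Bit}_j\leftarrow\diamondminus_{[2^j,2^{j+1})}\mathsf M_{j+1}$ and $\mathsf{NBit}_j\leftarrow\diamondminus_{[0,2^j)}\mathsf M_{j+1}$, which inside $[0,2^n)$ hold exactly where the $j$-th digit of $\lfloor t\rfloor$ is $1$, resp.\ $0$; (ii) for each clause $c$, rules $\mathsf{Cl}_c\leftarrow\mathsf{Bit}_j$ and $\mathsf{Cl}_c\leftarrow\mathsf{NBit}_j$ over its positive, resp.\ negative, literals, and $\mathsf T^{(n)}\leftarrow\mathsf R\land\bigwedge_c\mathsf{Cl}_c$; (iii) for $k=n{-}1,\dots,0$, rules making $\mathsf T^{(k)}$ constant on every length-$2^{n-k}$ dyadic block and true there iff $Q_{k+1}x_{k+1}\cdots\psi$ holds under the corresponding partial assignment --- for $Q_{k+1}=\exists$ these are $\mathsf T^{(k)}\leftarrow\mathsf T^{(k+1)}$, $\mathsf T^{(k)}\leftarrow\mathsf{NBit}_{n-k-1}\land\diamondplus_{[2^{n-k-1},2^{n-k-1}]}\mathsf T^{(k+1)}$, $\mathsf T^{(k)}\leftarrow\mathsf{Bit}_{n-k-1}\land\diamondminus_{[2^{n-k-1},2^{n-k-1}]}\mathsf T^{(k+1)}$ (the bit-guards ensure the $2^{n-k-1}$-shift always lands in the sibling sub-block, never in a neighbouring block), and for $Q_{k+1}=\forall$ the analogous rules with those shifts conjoined to $\mathsf T^{(k+1)}$ rather than used disjunctively. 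Then $\mathsf T^{(0)}$ is constant on $[0,2^n)$ and equals the truth value of $\Phi$, so adding the rule $\bot\leftarrow\mathsf T^{(0)}\land\mathsf Z$ makes $(\Pi,\D)$ inconsistent iff $\Phi$ is valid; the whole reduction is computable in logarithmic space. (By Proposition~\ref{prop:red} this also settles query answering.)

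\emph{Main difficulty.} I expect the delicate point to be the propositional lower bound: with non-recursive rules one cannot propagate information step by step, so bit-extraction and ``sibling navigation'' must be realised within a bounded number of layers. The key devices are the layered family $\mathsf M_j$ (each depending only on $\mathsf M_{j+1}$) and the half-open pattern $\diamondminus_{[2^j,2^{j+1})}$, together with the fact that every derived atom stays inside an exponentially long but bounded window (roughly $[-2^{n+1},2^{n+1}]$), so that Lemma~\ref{l:zones-nr}-style reasoning applies, $\mathfrak C_{\Pi,\D}$ is finite, and the chase terminates. Checking that the spurious facts produced outside $[0,2^n)$ never corrupt the intended ones inside requires a short invariant that I would carry through the layers by induction. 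On the upper-bound side, the only thing needing care --- that a single chase step, such as (coal) or a box operator, may depend on exponentially many smaller sections --- is handled exactly as above, by iterating over the polynomially nameable sections rather than materialising them, so that no space blow-up occurs.
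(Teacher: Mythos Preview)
Your proposal is correct, and both halves take genuinely different routes from the paper.

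For the upper bound, the paper relies on Lemma~\ref{derivation}: it bounds the branch-length and the (coal)-fan-in of a derivation tree for $\bot@\iota$, and then runs the standard \textsc{NPSpace} algorithm that nondeterministically guesses one branch at a time (keeping two branches in memory) and appeals to Savitch. Your section-based top-down procedure $\mathsf{Holds}$ instead evaluates membership in $\mathfrak C_{\Pi,\D}$ directly, using Lemmas~\ref{l:zones}/\ref{l:zones-nr} and bounding recursion depth by $\mathsf{depth}(\Pi)$; the alternating formulation $\mathrm{AP}=\textsc{PSpace}$ is neat and sidesteps the derivation-tree and (coal) machinery entirely. One point worth making explicit in a full write-up: why picking a \emph{single} rule instance suffices to witness $P(\avec c)$ on an entire section~$\sigma$. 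This holds because every non-(coal) chase step and every data fact yields an interval that is a union of sections (cf.\ the proof of Lemma~\ref{l:zones}), so if $P(\avec c)$ holds on $\sigma$ then some such step already covers all of~$\sigma$. The two unbounded tails also need a line of justification (e.g., $\sigma_{+\infty}\mo\range$ decomposes into finitely many ordinary sections plus $\sigma_{+\infty}$), but this is routine.

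For the lower bound, both proofs reduce from QBF but with different encodings. The paper seeds $2(n{+}1)$ data facts $P_i^i@[0,2^i)$, $\bar P_i^i@[2^i,2^{i+1})$ and propagates them by punctual boxes in the head ($\boxplus_{2^{j+1}}P_i^{j+1}\leftarrow P_i^j$) to lay out all $2^{n+1}$ assignments; the quantifier step uses $\boxplus_{[0,2^i]}F_{i+1}$/$\boxminus_{[0,2^i]}F_{i+1}$ in the head to spread $F_{i+1}$ over a whole block, with a final $\bot\leftarrow\boxplus_{[0,2^{n+1})}F_{n+1}$. Your encoding instead uses four data facts and layered markers $\mathsf M_j$ to extract bit-predicates via $\diamondminus$ in the body, then handles quantifiers by bit-guarded punctual shifts to the sibling sub-block. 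The invariant you flag---that spurious facts outside $[0,2^n)$ never leak back in---is exactly the delicate point; it goes through because $\mathsf{Bit}_j,\mathsf{NBit}_j$ are vacuously false for $t<0$, and for $t\in[0,2^n)$ the guard $\mathsf{NBit}_{n-k-1}$ (resp.\ $\mathsf{Bit}_{n-k-1}$) forces the $+2^{n-k-1}$ (resp.\ $-2^{n-k-1}$) shift to land in the sibling sub-block rather than the next (resp.\ previous) block. Your version buys a constant number of data facts; the paper's is a bit more transparent because its box-in-head rules make each $F_i$ block-constant directly, without needing separate bit guards.
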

\begin{proof}
The upper bound is established by a standard algorithm~\cite{Ladner77thecomputational,tobiespspace} using Lemma~\ref{derivation} and Savitch's theorem according to which $\textsc{NPSpace} = \textsc{PSpace}$. In essence, the $\textsc{NPSpace}$ algorithm guesses branches of the derivation one by one and keeps only last two branches in memory. By Lemma~\ref{derivation}~$(i)$, each branch contains $\le 2|\Pi|$ atoms of the form $P(\avec{c})@\iota$, where $\iota$ is as in Lemma~\ref{l:zones-nr}, and so is stored in polynomial space. In addition, we store the axioms in $\Pi$ that created these atoms, or (coal) if the atom was obtained by coalescing. In the latter case, we also need to guess a number $k$ indicating how many distinct intervals are coalesced to obtain $\iota$. By Lemma~\ref{derivation}~$(ii)$, $k \le 2^{p(|\Pi|,|\D|)}$, and so it can be stored in polynomial space.


The lower bound is proved by reduction of the satisfiability problem for quantified Boolean formulas (QBFs), which is known to be \textsc{PSpace}-complete. Let $\varphi = Q_n p_n \dots Q_0 p_0 \varphi_0$ be a QBF, where each $Q_i$ is either $\forall$ or $\exists$, and $\varphi_0 = c_0 \land \dots \land c_m$ is a propositional formula in CNF  with $c_i = l_0 \lor \dots \lor l_k$, with each $l_i$ being either a variable $p_j$ or its negation $\neg p_j$, for $0 \leq j \leq n$. In our $\nrdMTL$ program, we use the following propositional variables:
 \begin{itemize}
 \item[--] $P_0, \dots, P_n$ (to represent $p_0, \dots, p_n$ from $\varphi$);

 \item[--] $\bar P_0, \dots, \bar P_n$ (to represent $\neg p_0, \dots, \neg p_n$);

 \item[--] $P_{0}^0, \dots, P_0^n$ for $p_0$; $P_{1}^1, \dots, P_1^n$ for $p_1$, etc.; $P_{n}^n$ for $p_n$, and similarly for $\neg p_i$;

 \item[--] $F_0, \dots, F_{n+1}$;

 \item[--] $C_0, \dots, C_m$ (to represent $c_0,\dots,c_m$).
 \end{itemize}
We first take a data instance $\D$ with the following facts:
 \begin{align*}
   P_{i}^i@[0, 2^i), \ \bar P_{i}^i@[2^i, 2^{i+1}),\quad \text{ for }0 \leq i \leq n.
 \end{align*}
Starting from this data, we can generate all the truth-assignments for the variables $p_0, \dots, p_n$ using the following rules, where $0 \leq i \leq n$:
 \begin{align*}
   P_i \leftarrow P_{i}^n,&\quad \bar P_i \leftarrow \bar P_{i}^n, \\
    P_i^{j+1} \leftarrow P_i^j, \quad \boxplus_{2^{j+1}} P_i^{j+1} \leftarrow P_i^j,& \quad \bar P_i^{j+1} \leftarrow \bar P_i^j, \quad \boxplus_{2^{j+1}} \bar P_i^{j+1} \leftarrow \bar P_i^j, \quad i \leq j < n.
 \end{align*}
The canonical model for $\D$ and the rules above for the variables $p_0, p_1, p_2$ (thus, $n=2$) is shown in Fig.~\ref{fig:can-mod}.

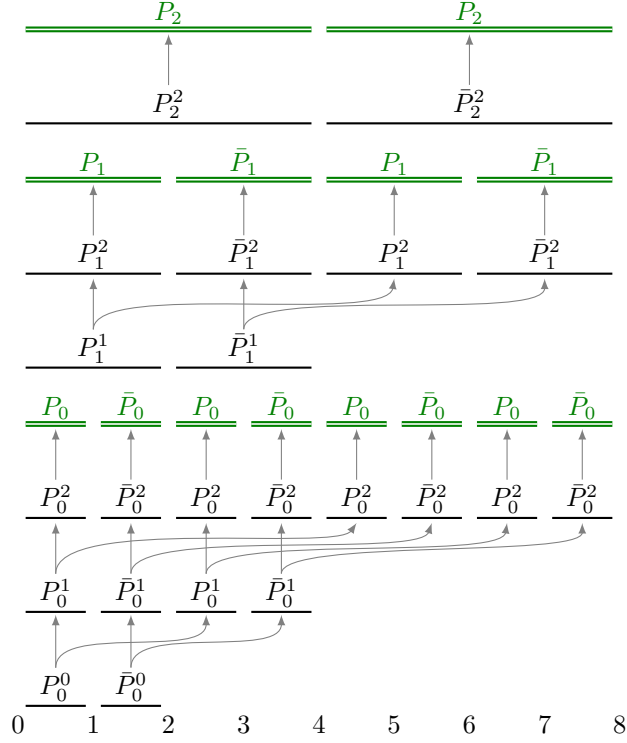
\begin{figure}[t]
\centering
\begin{tikzpicture}[yscale=2.5,
var/.style={minimum height=0.45cm,outer sep=-0.05cm},
>=latex,
]\small
\begin{scope}[yshift = -1cm]
 \node at (0,0.2) {$0$};%
 \node at (1,0.2) {$1$};%
 \node at (2,0.2) {$2$};%
 \node at (3,0.2) {$3$};%
 \node at (4,0.2) {$4$};%
 \node at (5,0.2) {$5$};%
 \node at (6,0.2) {$6$};%
 \node at (7,0.2) {$7$};%
 \node at (8,0.2) {$8$};%
\end{scope}

\begin{scope}[thick, yshift = -1cm]
 \draw (.1,.3) -- node[above,var] (p0) {$P_0^0$} ++(0.8,0);%
 \draw (1.1,.3) -- node[above,var] (p0b) {$\bar{P}_0^0$} ++(0.8,0);%

 \draw (.1,.8) -- node[above,var] (p11) {$P_0^1$} ++(0.8,0);%
 \draw (1.1,.8) -- node[above,var] (p12) {$\bar{P}_0^1$} ++(0.8,0);%
 \draw (2.1,.8) -- node[above,var] (p13) {$P_0^1$} ++(0.8,0);%
 \draw (3.1,.8) -- node[above,var] (p14) {$\bar{P}_0^1$} ++(0.8,0);%

 \draw (.1,1.3) -- node[above,var] (p21) {$P_0^2$} ++(0.8,0);%
 \draw (1.1,1.3) -- node[above,var] (p22) {$\bar{P}_0^2$} ++(0.8,0);%
 \draw (2.1,1.3) -- node[above,var] (p23) {$P_0^2$} ++(0.8,0);%
 \draw (3.1,1.3) -- node[above,var] (p24) {$\bar{P}_0^2$} ++(0.8,0);%
 \draw (4.1,1.3) -- node[above,var] (p25) {$P_0^2$} ++(0.8,0);%
 \draw (5.1,1.3) -- node[above,var] (p26) {$\bar{P}_0^2$} ++(0.8,0);%
 \draw (6.1,1.3) -- node[above,var] (p27) {$P_0^2$} ++(0.8,0);%
 \draw (7.1,1.3) -- node[above,var] (p28) {$\bar{P}_0^2$} ++(0.8,0);%

  \draw[double, darkgreen] (.1,1.8) -- node[above,var] (p1) {$P_0$} ++(0.8,0);%
 \draw[double, darkgreen] (1.1,1.8) -- node[above,var] (p2) {$\bar{P}_0$} ++(0.8,0);%
 \draw[double, darkgreen] (2.1,1.8) -- node[above,var] (p3) {$P_0$} ++(0.8,0);%
 \draw[double, darkgreen] (3.1,1.8) -- node[above,var] (p4) {$\bar{P}_0$} ++(0.8,0);%
 \draw[double, darkgreen] (4.1,1.8) -- node[above,var] (p5) {$P_0$} ++(0.8,0);%
 \draw[double, darkgreen] (5.1,1.8) -- node[above,var] (p6) {$\bar{P}_0$} ++(0.8,0);%
 \draw[double, darkgreen] (6.1,1.8) -- node[above,var] (p7) {$P_0$} ++(0.8,0);%
 \draw[double, darkgreen] (7.1,1.8) -- node[above,var] (p8) {$\bar{P}_0$} ++(0.8,0);%

\end{scope}

 \node[fit=(p11),inner sep=0.03cm] (pp11) {};
 \node[fit=(p12),inner sep=0.03cm] (pp12) {};
 \node[fit=(p13),inner sep=0.03cm] (pp13) {};
 \node[fit=(p14),inner sep=0.03cm] (pp14) {};
 \node[fit=(p21),inner sep=0.06cm] (pp21) {};
 \node[fit=(p22),inner sep=0.06cm] (pp22) {};
 \node[fit=(p23),inner sep=0.06cm] (pp23) {};
 \node[fit=(p24),inner sep=0.06cm] (pp24) {};
 \node[fit=(p25),inner sep=0.06cm] (pp25) {};
 \node[fit=(p26),inner sep=0.06cm] (pp26) {};
 \node[fit=(p27),inner sep=0.06cm] (pp27) {};
 \node[fit=(p28),inner sep=0.06cm] (pp28) {};
 \node[fit=(p1),inner sep=0.06cm] (pp1) {};
 \node[fit=(p2),inner sep=0.06cm] (pp2) {};
 \node[fit=(p3),inner sep=0.06cm] (pp3) {};
 \node[fit=(p4),inner sep=0.06cm] (pp4) {};
 \node[fit=(p5),inner sep=0.06cm] (pp5) {};
 \node[fit=(p6),inner sep=0.06cm] (pp6) {};
 \node[fit=(p7),inner sep=0.06cm] (pp7) {};
 \node[fit=(p8),inner sep=0.06cm] (pp8) {};

\begin{scope}[gray]
\tiny
 \draw[->] (p0) node[above left,yshift=0.1cm] {} -- (pp11);
 \draw[->] (p0) to[out=90,in=270,looseness=0.3] (pp13.south);
 \draw[->] (p0b) node[above left,yshift=0.1cm] {} -- (pp12);
 \draw[->] (p0b) to[out=90,in=270,looseness=0.3] (pp14.south);
 \draw[->] (p11) node[above left,yshift=0.1cm] {} -- (pp21);
 \draw[->] (p11) to[out=90,in=210,looseness=0.2] (pp25.south);
 \draw[->] (p12) node[above left,yshift=0.1cm] {} -- (pp22);
 \draw[->] (p12) to[out=90,in=230,looseness=0.16] (pp26.south);
 \draw[->] (p13) node[above left,yshift=0.1cm] {}-- (pp23);
 \draw[->] (p13) to[out=90,in=250,looseness=0.14] (pp27.south);
 \draw[->] (p14) node[above left,yshift=0.07cm] {} -- (pp24);
 \draw[->] (p14) to[out=90,in=270,looseness=0.12] (pp28.south);
 \draw[->] (p21) node[above left,yshift=0.1cm] {} -- (pp1);
 \draw[->] (p22) node[above left,yshift=0.1cm] {} -- (pp2);
 \draw[->] (p23) node[above left,yshift=0.1cm] {} -- (pp3);
 \draw[->] (p24) node[above left,yshift=0.1cm] {} -- (pp4);
 \draw[->] (p25) node[above left,yshift=0.1cm] {} -- (pp5);
 \draw[->] (p26) node[above left,yshift=0.1cm] {} -- (pp6);
 \draw[->] (p27) node[above left,yshift=0.1cm] {} -- (pp7);
 \draw[->] (p28) node[above left,yshift=0.1cm] {} -- (pp8);
\end{scope}

\begin{scope}[yshift = -.5cm, thick]
 \draw (.1,1.6) -- node[above,var] (q0) {$P_1^1$} ++(1.8,0);%
 \draw (2.1,1.6) -- node[above, var] (q0b) {$\bar{P}_1^1$} ++(1.8,0);%

 \draw (.1,2.1) -- node[above,var] (q21) {$P_1^2$} ++(1.8,0);%
 \draw (2.1,2.1) -- node[above,var] (q22) {$\bar{P}_1^2$} ++(1.8,0);%
 \draw (4.1,2.1) -- node[above,var] (q23) {$P_1^2$} ++(1.8,0);%
 \draw (6.1,2.1) -- node[above,var] (q24) {$\bar{P}_1^2$} ++(1.8,0);%

 \draw[double, darkgreen] (.1,2.6) -- node[above,var] (q1) {$P_1$} ++(1.8,0);%
 \draw[double, darkgreen] (2.1,2.6) -- node[above,var] (q2) {$\bar{P}_1$} ++(1.8,0);%
 \draw[double, darkgreen] (4.1,2.6) -- node[above,var] (q3) {$P_1$} ++(1.8,0);%
 \draw[double, darkgreen] (6.1,2.6) -- node[above,var] (q4) {$\bar{P}_1$} ++(1.8,0);%
\end{scope}

 \node[fit=(q21),inner sep=0.06cm] (qq21) {};
 \node[fit=(q22),inner sep=0.06cm] (qq22) {};
 \node[fit=(q23),inner sep=0.06cm] (qq23) {};
 \node[fit=(q24),inner sep=0.06cm] (qq24) {};
 \node[fit=(q1),inner sep=0.06cm] (qq1) {};
 \node[fit=(q2),inner sep=0.06cm] (qq2) {};
 \node[fit=(q3),inner sep=0.06cm] (qq3) {};
 \node[fit=(q4),inner sep=0.06cm] (qq4) {};

\begin{scope}[gray]
\tiny
 \draw[->] (q0) node[above left,yshift=0.1cm] {} -- (qq21.south);
 \draw[->] (q0) to[out=90,in=250,looseness=0.17] (qq23.south);
 \draw[->] (q0b) node[above left,yshift=0.1cm] {} -- (qq22);
 \draw[->] (q0b) to[out=90,in=270,looseness=0.15] (qq24.south);
  \draw[->] (q21) node[above left,yshift=0.1cm] {} -- (qq1);
  \draw[->] (q22) node[above left,yshift=0.1cm] {} -- (qq2);
  \draw[->] (q23) node[above left,yshift=0.1cm] {} -- (qq3);
  \draw[->] (q24) node[above left,yshift=0.1cm] {} -- (qq4);
\end{scope}

\begin{scope}[thick]
 \draw (.1,2.4) -- node[above,var] (r0) {$P_2^2$} ++(3.8,0);%
 \draw (4.1,2.4) -- node[above,var] (r0b) {$\bar{P}_2^2$} ++(3.8,0);%

 \draw[double, darkgreen] (.1,2.9) -- node[above,var] (r1) {$P_2$} ++(3.8,0);%
 \draw[double, darkgreen] (4.1,2.9) -- node[above,var] (r2) {$\bar{P}_2$} ++(3.8,0);%
\end{scope}

\node[fit=(r1),inner sep=0.06cm] (rr1) {};
 \node[fit=(r2),inner sep=0.06cm] (rr2) {};

\begin{scope}[gray]\tiny
 \draw[->] (r0) node[above left,yshift=0.1cm] {} -- (rr1.south);
\draw[->] (r0b) node[above left,yshift=0.1cm] {} -- (rr2.south);
\end{scope}
\end{tikzpicture}
\caption{The canonical model for the proof of Theorem~\ref{thm:pspace}.}
\label{fig:can-mod}
\end{figure}
We then need the rules:
\begin{align}
  C_i \leftarrow& P_j, \quad p_j \text{ occurs in } c_i,\label{eq:cnf-start}\\
  C_i \leftarrow& \bar {P_j}, \quad \neg p_j \text{ occurs in } c_i,\\
  F_0 \leftarrow& \bigwedge_{0 \leq i \leq m}C_i, \label{eq:cnf-end}
\end{align}
for $0 \leq i \leq m$, $0 \leq j \leq n$. Note that $F_0$ will hold at the moments of time corresponding to the assignments that make $\varphi_0$ true. Further, we consider the formula $\varphi_{i} = Q_{i-1} p_{i-1} \dots Q_0 p_0 \varphi_0$, for $1 \leq i \leq n+1$ (note that $\varphi_{n+1} = \varphi$), and provide rules that make $F_i$ true precisely at the moments of time corresponding to the assignments that make $\varphi_i$ true. We take
\begin{align}
  \boxplus_{[0,2^i]}F_{i+1} \leftarrow F_i \land P_i, \quad \boxminus_{[0,2^i]}F_{i+1} \leftarrow F_i \land \bar P_i, \quad & \text{if } Q_i = \exists,\label{eq:exists}\\
  \boxplus_{[0, 2^{i+1})} F_{i+1} \leftarrow \boxplus_{[0,2^i)}P_i \land \boxplus_{[0, 2^{i+1})} F_i, \quad & \text{if } Q_i = \forall, \label{eq:forall}
\end{align}
for $0 \leq i \leq n$, and, finally,
\begin{align*}
  \bot \leftarrow \boxplus_{[0, 2^{n+1})} F_{n+1}.
\end{align*}
All the rules above form the required $\nrdMTL$ program $\Pi$. We now prove that $\Pi$ is consistent with $\D$ iff $\varphi$ is not satisfiable. By Lemma~\ref{canonical}, it  suffices to show that $F_{n+1}@[0, 2^{n+1}) \in \mathfrak C_{\Pi, \D}$ iff $\varphi$ is satisfiable. For $(\Rightarrow)$, suppose $F_{n+1}@[0, 2^{n+1}) \in \mathfrak C_{\Pi, \D}$. If $Q_n = \exists$ then, in view of~\eqref{eq:exists}, either $F_{n}@[0, 2^{n}), P_{n}@[0, 2^{n}) \in \mathfrak C_{\Pi, \D}$  or $F_{n}@[2^n, 2^{n+1}), \bar P_{n}@[2^n, 2^{n+1}) \in \mathfrak C_{\Pi, \D}$. If the first option holds, we show that $\varphi_n$ is satisfiable when $p_n$ is true; if the second option holds, we show that $\varphi_n$ is satisfiable when $p_n$ is false. Similarly, if $Q_n = \forall$, then by~\eqref{eq:forall}, we have $F_{n}@[0, 2^{n}), P_{n}@[0, 2^{n}) \in \mathfrak C_{\Pi, \D}$ and  $F_{n}@[2^n, 2^{n+1}), \bar P_{n}@[2^n, 2^{n+1}) \in \mathfrak C_{\Pi, \D}$. In this case, we show that $\varphi_n$ is satisfiable when $p_n$ can be both false and true. To show that $F_{n}@[0, 2^{n}), P_{n}@[0, 2^{n}) \in \mathfrak C_{\Pi, \D}$ implies that $\varphi_n$ is satisfiable when $p_n$ is true (the other case is analogous and left to the reader), suppose $Q_{n-1} = \exists$. By~\eqref{eq:exists}, either $F_{n-1}@[0, 2^{n-1}), P_{n-1}@[0, 2^{n-1}) \in \mathfrak C_{\Pi, \D}$  or $F_{n}@[2^{n-1}, 2^{n}), \bar P_{n-1}@[2^{n-1}, 2^{n}) \in \mathfrak C_{\Pi, \D}$. (If $Q_{n-1} = \forall$, by~\eqref{eq:exists} both of these options hold.) Therefore, to show that $\varphi$ is satisfiable, it now suffices to show that (\emph{i}) $F_{n-1}@[0, 2^{n-1}), P_{n-1}@[0, 2^{n-1}) \in \mathfrak C_{\Pi, \D}$ implies that $\varphi_{n-1}$ is satisfiable when $p_n$ is true and $p_{n-1}$ is true; (\emph{ii}) $F_{n-1}@[2^{n-1}, 2^{n}), \bar P_{n-1}@[2^{n-1}, 2^{n}) \in \mathfrak C_{\Pi, \D}$ implies that $\varphi_{n-1}$ is satisfiable when $p_n$ is true and $p_{n-1}$ is false. We only consider (\emph{i}), leaving (\emph{ii})
to the reader, and after applying the argument above $n$ times, will need to show that (\emph{i}) $F_{0}@[0, 1), P_{0}@[0, 1) \in \mathfrak C_{\Pi, \D}$ implies that $\varphi_{0}$ is satisfiable when $p_n, \dots, p_{1}$ and $p_{0}$ are all true; (\emph{ii}) $F_0@[1, 2), \bar P_0@[1, 2) \in \mathfrak C_{\Pi, \D}$ implies that $\varphi_{0}$ is satisfiable when $p_n, \dots, p_{1}$ are true while $p_{0}$ is false. That (\emph{i}) holds follows from~\eqref{eq:cnf-start}--\eqref{eq:cnf-end}, and similarly for (\emph{ii}). This concludes the proof of $(\Rightarrow)$; the other direction is proved analogously.
\end{proof}

Using the techniques of~\citeA{DBLP:journals/tocl/ArtaleKRZ14}, it can be shown that nonrecursive Horn fragment of $\LTL$ is P-complete. The same complexity can be derived from the work by~\citeA{DBLP:journals/tocl/BresolinKMRSZ17} for the nonrecursive Horn fragment of the Halpern-Shoham logic $\mathcal{HS}$.

As we have just seen, the combined complexity of query answering  drops from \textsc{ExpSpace} for $\dMTL$ to \textsc{PSpace} for $\nrdMTL$. We now show that the data complexity drops to AC$^0$, which is important for practical query answering using standard database systems. Note that this result is non-trivial in view of Theorem~\ref{thm:phard}. The crux of the proof is encoding coalescing by FO-formulas with $\forall$ (which is typically not needed for rewriting atemporal ontology-mediated queries).

\begin{theorem}\label{thm:aczero}
 Consistency checking and answering $\nrdMTL$ queries is in $\text{AC}^0$ for data complexity.
\end{theorem}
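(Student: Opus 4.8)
The plan is to fix the $\nrdMTL$ program $\Pi$ and the query once and for all, and to exhibit a single first-order formula, with the usual built-in arithmetic on the binary representations of the timestamps, that decides the problem on any input data instance $\D$; this places the problem in (uniform) $\text{AC}^0$ by the standard descriptive-complexity characterisation. By Proposition~\ref{prop:red} it is enough to do this for consistency checking: query answering reduces to consistency of a pair $(\Pi',\D')$ obtained by adding one rule and one fact that depend in a fixed, uniform way on the candidate answer (and the added rule, having $\bot$ in the head, keeps $\Pi$ nonrecursive), so the construction below applies with the endpoints of the interval in the answer treated as extra inputs. Throughout I may assume $\Pi$ is in normal form, so that all intervals are either open $(t_1,t_2)$ or punctual $[t,t]$ and all temporal operators are among $\boxminus_\range,\boxplus_\range,\Si_\range,\U_\range$.

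The first step is to isolate a small set of relevant ``break points''. Inspecting the proof of Lemma~\ref{l:zones-nr} shows that every endpoint of every interval occurring in $\mathfrak C_{\Pi,\D}$ is either $\pm\infty$ or of the form $a+s$, where $a\in\num(\Pi,\D)$ and $s$ ranges over the set $S_\Pi$ of all signed sums of at most $\mathsf{depth}(\Pi)$ endpoints of ranges appearing in $\Pi$; since $\Pi$ is fixed, $S_\Pi$ is a \emph{fixed, finite} set. Hence $\mathbb Q_2$ is partitioned by a family of $O(|\D|)$ points and open intervals --- the \emph{zones} --- on each of which the truth value of any atom $P(\avec{c})$ in $\mathfrak C_{\Pi,\D}$ is constant, by Lemma~\ref{l:zones-nr}. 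A zone can be named by a pair consisting of a fact of $\D$ (which fixes the relevant number $a$) and an element of the fixed set $S_\Pi$; an FO formula over the structure $\D$ can quantify over such pairs, and computing $a+s$ from $a$ for a constant $s$, as well as comparing the resulting numbers, are $\text{AC}^0$ operations on the binary encodings.

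Next I would define, by recursion on $\mathsf{depth}_\Pi(P)$, a first-order formula $\mathrm{H}_P(\avec{c}, z)$ expressing that $\mathfrak C_{\Pi,\D}$ makes $P(\avec{c})$ true on the zone $z$, and analogously for $\bot$. Since $\mathfrak C_{\Pi,\D}$ is the canonical (minimal) model by Lemma~\ref{canonical}, $P(\avec{c})$ is true on $z$ exactly when either $z$ is contained in some interval $\iota$ with $P(\avec{c})@\iota\in\D$, or some rule of $\Pi$ with $P$ (or $\bot$) in the head has its body satisfied on $z$; the body mentions only predicates of strictly smaller depth, so $\mathrm{H}_P$ is a disjunction of the (constantly many) corresponding subformulas built from the $\mathrm{H}_{P_i}$ and from zone arithmetic: a conjunction for a {\rm (horn)}-rule; for a $\boxminus_\range$- or $\boxplus_\range$-rule, a \emph{universal} quantifier asserting that $P_1$ is true on every zone falling in the window obtained by shifting $z$ by $\range$; and for a $\Si_\range$- or $\U_\range$-rule, an existential quantifier choosing a zone $z'$ witnessing $P_2$ together with a universal quantifier asserting $P_1$ on all zones strictly between $z$ and $z'$. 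The recursion bottoms out at depth $0$ in the facts of $\D$, its depth is the constant $\mathsf{depth}(\Pi)$, and the resulting formula therefore has size depending only on $\Pi$. Finally, $(\Pi,\D)$ is consistent iff $\neg\exists z\,\mathrm{H}_\bot(z)$, a first-order sentence over $\D$, so consistency checking is in $\text{AC}^0$; and $(\avec{c},\iota)$ is a certain answer to $(\Pi,Q(\avec{\tau})@x)$ iff $(\Pi,\D)$ is inconsistent or $\mathrm{H}_Q(\avec{c}, z)$ holds for every zone $z$ meeting $\iota$, which is again first-order over $\D$, $\avec{c}$ and $\iota$.

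I expect the main obstacle --- the crux already flagged before the statement --- to be the coalescing rule. Coalescing merges unboundedly (indeed possibly infinitely) many intervals into a single one that can be far longer than any merged piece, so it cannot be simulated by explicitly materialising intervals inside a bounded-depth first-order formula. The way around it is to reason \emph{zonewise} rather than about intervals: the truth of $P(\avec{c})$ on a zone lying inside a coalesced interval is inherited from one of the merged pieces, which was produced at an earlier stage of the chase, so coalescing never has to be performed --- it is transparent to the zonewise definition --- and its only visible effect is that a box, since, or until operator must be checked against the \emph{entire} extent over which the relevant predicate holds, which is precisely what the universally quantified subformulas over zones accomplish. The remaining work is the bookkeeping in the second paragraph: verifying that the zone set has size linear in $|\D|$ with the constant absorbed into $\Pi$, and that the only arithmetic needed (adding the fixed offsets of $\Pi$ to numbers from $\D$ and comparing binary fractions) stays within $\text{AC}^0$, so that the whole construction remains first-order.
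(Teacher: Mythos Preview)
Your approach is correct and takes a genuinely different route from the paper's. The paper works with \emph{intervals}: it defines formulas $\varphi_P^{\langle m,n\rangle}(x,y)$ expressing that $P@\langle\bar x+m,\bar y+n\rangle\in\mathfrak C_{\Pi,\D}$, where $m,n$ range over finite offset sets $\mathsf{le}(P),\mathsf{ri}(P)$ (the paper's analogue of your $S_\Pi$; the endpoint claim you extract from Lemma~\ref{l:zones-nr} is stated separately in the paper as Lemma~\ref{le-ri}). Because it tracks intervals, the paper must encode coalescing explicitly: $\varphi_P$ splits as $\psi_P\lor\mathsf{coal}_P$, and $\mathsf{coal}_P^{\langle m,n\rangle}(x,y)$ uses a universal quantifier over a probe point $z$ together with an intricate five-case $\mathsf{nogap}$ subformula to assert that $\langle\bar x+m,\bar y+n\rangle$ is fully covered by shorter $P$-intervals produced by $\psi_P$.

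Your zonewise decomposition is conceptually cleaner: since you only ever ask whether $P$ holds on a given zone, coalescing becomes invisible, and the inevitable universal quantifier migrates from the coalescing clause into the clauses for $\boxminus_\range,\boxplus_\range,\Si_\range,\U_\range$, where it ranges over zones in a window. What you gain is the avoidance of the $\mathsf{nogap}$ case analysis; what you must pay for is the extra argument that the \emph{body} of each rule---not just each head predicate---is constant on your zones (otherwise ``body satisfied on $z$'' is ill-defined). This does hold, because the truth boundaries of $\boxminus_\range P_1$ are the $P_1$-boundaries shifted by the endpoints of $\range$, adding one more summand, and $\mathsf{depth}_\Pi(P_1)+1\le\mathsf{depth}(\Pi)$ keeps you inside $S_\Pi$; you should make this step explicit. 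You should also be a bit more careful with $\Si_\range/\U_\range$: beyond ``$P_1$ on all zones strictly between $z'$ and $z$'' you will in general need $H_{P_1}(z')$ and $H_{P_1}(z)$ when those zones are open (the witnessing interval $(t',t)$ bites into them), together with the range constraint tying $z'$ to $z$. These are bookkeeping details; the strategy is sound, and the residual arithmetic (encoding a zone by a data-timestamp plus a fixed offset from $S_\Pi$, and comparing such sums via bit predicates) is exactly what the paper does as well.
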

\begin{proof}
We only consider a \emph{propositional} $\nrdMTL$ program $\Pi$. The proof can be straightforwardly adapted to the case of arity $\geq 1$ by adding more (object) variables to the predicates used below. Let $N$ be a set of comprising numbers or $\infty, -\infty$. We use $N + r$ as a shorthand for $\{ t + r \mid t\in N\}$ and similarly for $N - r$ (we assume that $t + \infty = \infty$ and $t - \infty = -\infty$).
For a propositional variable $P$ in $\Pi$, we define two sets $\mathsf{le}(P)$ and $\mathsf{ri}(P)$ as follows:
\begin{itemize}
\item[--] $\mathsf{le}(P) = \mathsf{ri}(P) = \{0\}$ if there is no $P'$ such that $P \lessdot P'$;

\item[--] otherwise, $\mathsf{le}(P)$ is the union of:
\begin{itemize}
\item $\bigcup_{i \in I} \mathsf{le}(P_i)$, for each $P \leftarrow \bigwedge_{i \in I} P_i$ in $\Pi$,

\item $\mathsf{le}(P_2) + r_1 \cup \mathsf{ri}(P_1)$, for each $P \leftarrow P_1 \Si_{\langle r_1, r_2 \rangle} P_2$ in $\Pi$,

\item $\mathsf{le}(P_2) - r_2 \cup \mathsf{le}(P_1)$, for each $P \leftarrow P_1 \U_{\langle r_1, r_2 \rangle} P_2$ in $\Pi$,

\item $\mathsf{le}(P_1) + r_2$, for each $P \leftarrow \boxminus_{\langle r_1, r_2 \rangle} P_1$ in $\Pi$,

\item $\mathsf{le}(P_1) - r_1$, for each $P \leftarrow \boxplus_{\langle r_1, r_2 \rangle} P_1$ in $\Pi$,
\end{itemize}
and $\mathsf{ri}(P)$ is the union of:
\begin{itemize}
\item $\bigcup_{i \in I} \mathsf{ri}(P_i)$, for each $P(\tau) \leftarrow \bigwedge_{i \in I} P_i$ in $\Pi$,

\item $\mathsf{ri}(P_2) + r_2 \cup \mathsf{ri}(P_1)$, for each $P \leftarrow P_1 \Si_{\langle r_1, r_2 \rangle} P_2$ in $\Pi$,

\item $\mathsf{ri}(P_2) - r_1 \cup \mathsf{le}(P_1)$, for each $P \leftarrow P_1 \U_{\langle r_1, r_2 \rangle} P_2$ in $\Pi$,

\item $\mathsf{ri}(P_1) + r_1$, for each $P \leftarrow \boxminus_{\langle r_1, r_2 \rangle} P_1$ in $\Pi$,

\item $\mathsf{ri}(P_1) - r_2$, for each $P \leftarrow \boxplus_{\langle r_1, r_2 \rangle} P_1$ in $\Pi$.
\end{itemize}
\end{itemize}
Using an argument that is similar to the proof of Lemma~\ref{l:zones-nr}, one can show the following:

\begin{lemma}\label{le-ri}
For any $\nrdMTL$ program $\Pi$, any data instance $\D$, and any  $P@\langle t_1, t_2 \rangle \in \mathfrak C_{\Pi, \D}$,
\begin{itemize}
\item[--] $t_1 = t_1' + n_1$, for some $n_1 \in \mathsf{le}(P)$ and some $t_1'$ such that $P'[t_1', t_1'] \in \D$ or $P'(t_1', s_2) \in \D$,

\item[--] $t_2 = t_2' + n_2$, for some $n_2 \in \mathsf{ri}(P)$ and some $t_2'$ such that $P'[t_2', t_2'] \in \D$ or $P'(s_1, t_2') \in \D$.
\end{itemize}
\end{lemma}

In view of Lemma~\ref{le-ri}, we can prove Theorem~\ref{thm:aczero} by constructing FO-formulas $\varphi_P^{\langle m, n \rangle}(x,y)$ with $m \in \mathsf{le}(P)$ and $n \in \mathsf{ri}(P)$ such that, for any data instance $\D$,
\begin{equation}\label{eq:sound-compl}
P@\langle t_1+m, t_2+n \rangle \in \mathfrak C_{\Pi, \D} \qquad \text{iff} \qquad \mathfrak A_\D \models \varphi_P^{\langle m, n \rangle}(t_1, t_2),
\end{equation}
where $\mathfrak A_\D$ is the FO-structure defined below.
To slightly simplify presentation (and without much loss of generality), we assume that all numbers in $\num(\D)$ are positive, and set
$$
\mathfrak A_\D = \bigl( \Delta, <, P_1^{[]}, P_1^{()}, \dots, P_l^{[]}, P_l^{()},  \mathsf{bit}^{\it in}, \mathsf{bit}^{\it fr}\bigr),
$$
where
\begin{itemize}
\item[--] $\Delta$ is a set of $(\ell+1)$-many elements strictly linearly ordered by $<$,   $\ell$ is the maximum of the number of distinct timestamps in $\D$ and the number of bits in the longest binary fraction in $\D$ (excluding the binary point); for simplicity, we assume that $\Delta = \{0,\dots,\ell\}$, $<$ is the natural order, and denote by $\bar n$ the $n$th fraction in $(\num(\D),<)$, counting from 0;

\item[--] $P_i^{[]} (n,n)$ holds in $\mathfrak A_\D$ iff $P_i@[\bar n,\bar n] \in \D$ and $P_i^{()} (n,m)$ holds in $\mathfrak A_\D$ iff $P_i@(\bar n,\bar m) \in \D$, for any $P_i$ occurring in $\D$;

\item[--] for $\bar n \ne \infty$, $\mathsf{bit}^{\it in}(n,i,0)$ ($\mathsf{bit}^{\it fr}(n,i,0)$) holds in $\mathfrak A_\D$ iff the $i$th bit of the integer (respectively, fractional) part of $\bar n$ is $0$, and $\mathsf{bit}^{\it in}(n,i,1)$ ($\mathsf{bit}^{\it fr}(n,i,1)$), for $i \in \Delta$, holds in $\mathfrak A_\D$ iff the $i$th bit of the integer (respectively, fractional) part of $\bar n$ is $1$ (as usual, we start counting bits from the least significant one);


\item[--] for $\bar n = \infty$, $\mathsf{bit}^{\it in}(n,i,1)$ and $\mathsf{bit}^{\it fr}(n,i,1)$ for all $i \in \Delta$.
\end{itemize}
For example, the data instance $\D = \{P[110.001, 110.001],P(10000, \infty)\}$ is given as the FO structure
$$
\mathfrak A_\D = \bigl( \Delta, <, P^{[]},P^{()}, \mathsf{bit}^{\it in}, \mathsf{bit}^{\it fr}\bigr),
$$
where $\Delta = \{0,\dots,6 \}$, $P^{[]} = \{(0, 0)\}$, $P^{()} = \{ (1,2) \}$, and
\begin{align*}
\mathsf{bit}^{\it in} = & \{(0, 0, 0), (0, 1, 1), (0, 2, 1)\} \cup \{(0, i, 0) \mid 3 \leq i \leq 6\} \cup{}\\
&\{(1, i, 0) \mid 0 \leq i \leq 3\} \cup \{(1,4,1)\} \cup \{(1,5,0)\} \cup \{(1,6,0)\} \cup{}\\
&\{(2, i, 1) \mid 0 \leq i \leq 6\}.\\
\mathsf{bit}^{\it fr} = & \{(0, 4, 1) \} \cup \{(0, i, 0) \mid 0 \leq i \leq 6,\, i \neq 4\} \cup{}\\
&\{(1, i, 0) \mid 0 \leq i \leq 6 \} \cup{}\\
&\{(2, i, 1) \mid 0 \leq i \leq 6 \}.
\end{align*}
To construct the required $\varphi_P^{\langle m, n \rangle}(x,y)$,
suppose that we have FO-formulas
\begin{itemize}
\item[--] $\mathsf{coal}_P^{\langle m, n \rangle} (x, y)$ saying that $P@\langle x+m, y+n \rangle$ is added to $\mathfrak C_{\Pi, \D}$ by an application of the rule (coal);

\item[--] $\psi_P^{\langle m, n \rangle}(x,y)$ saying that
\begin{itemize}
\item[] either $P@\langle x+m, y+n \rangle$ is added to $\mathfrak C_{\Pi, \D}$ because it belongs to the given data instance (in which case we can assume that $m=n=0$, and $\langle \rangle$ is either $()$ or $[]$),

\item[] or $P@\langle x+m, y+n \rangle$ is added to $\mathfrak C_{\Pi, \D}$ as a result of an application of one of the `logical' rules.
\end{itemize}
\end{itemize}
In this case we can set
\begin{equation*}
\varphi_P^{\langle m, n \rangle}(x,y) ~=~ \psi_P^{\langle m, n \rangle}(x,y) \lor \mathsf{coal}_P^{\langle m, n \rangle} (x, y).
\end{equation*}
Using the predicate $\mathsf{is}_{a,b}$, which is $\top$ if $a = b$ and $\bot$ otherwise, we can define $\psi_P^{\langle m, n \rangle}(x,y)$ as a disjunction of the following formulas:
\begin{itemize}
\item[--] $\mathsf{is}_{\langle, [} \land \mathsf{is}_{\rangle, ]}\land \mathsf{is}_{m, 0} \land \mathsf{is}_{n, 0} \land P^{[]}(x,y)$;

\item[--] $\mathsf{is}_{\langle, (} \land \mathsf{is}_{\rangle, )}\land \mathsf{is}_{m, 0} \land \mathsf{is}_{n, 0} \land P^{()}(x,y)$;

\item[--] for every $\displaystyle P \leftarrow  \bigwedge_{1 \le i \le k}P_i$ in $\Pi$,
\end{itemize}
\begin{multline*}
\exists x_1, y_1, \dots, x_k, y_k \hspace*{-5mm} \bigvee_{\begin{subarray}{c}
    m_1 \in \mathsf{le}(P_1)\\
    n_1 \in \mathsf{ri}(P_1)\\
    \lceil_1 \in \{ [, ( \},\
    \rceil_1 \in \{ ], ) \}
    \end{subarray}}
  \hspace*{-5mm}   \Big(\varphi_{P_1}^{\lceil_1 m_1, n_1 \rceil_1} (x_1, y_1) \land{}  \dots \land{} \hspace*{-5mm}
    \bigvee_{\begin{subarray}{c}
    m_k \in \mathsf{le}(P_k)\\
    n_k \in \mathsf{ri}(P_k)\\
    \lceil_k\in \{ [, ( \},\
    \rceil_k \in \{ ], ) \}
    \end{subarray}}
\hspace*{-5mm}     \big(\varphi_{P_k}^{\lceil_k m_k, n_k \rceil_k} (x_k, y_k) \land{}\\
     \mathsf{inter}^{\langle m, n \rangle}_{\lceil_1 m_1, n_1 \rceil_1, \dots, \lceil_k m_k, n_k \rceil_k}(x, y, x_1, y_1, \dots, x_k, y_k) \big) \dots \Big),
\end{multline*}
\begin{itemize}
\item[] where $\mathsf{inter}^{\langle m, n \rangle}_{\lceil_1 m_1, n_1 \rceil_1, \dots, \lceil_k m_k, n_k \rceil_k}(x, y, x_1, y_1, \dots, x_k, y_k)$ says that $\langle x+m, y+n \rangle$ is an intersection of $\lceil_1 x_1 + m_1, y_1 + n_1 \rceil_1 ,\dots, \lceil_k x_k + m_k, y_k + n_k \rceil_k$ (this formula can easily be defined in terms of the predicates $x+m = y+n$ and $x+m < y+n$ given below);

\item[--] for every $P \leftarrow P_1 \Si_{\range} P_2$ in $\Pi$, the formula $\sigma_{\range, P,P_1, P_2}^{\langle m, n \rangle}(x,y)$ saying that $\langle x+m, y+n \rangle$ is  $((\iota^c_1 \cap \iota_2) \mo \range) \cap \iota^c_1$ for some $\iota_1$ and $\iota_2$, where $P_1$ and $P_2$ hold, respectively
(we give a definition of $\sigma_{\range, P,P_1, P_2}^{\langle m, n \rangle}(x,y)$ in the appendix);

\item[--] analogous formulas encoding the relevant operations on intervals for the other temporal operators.

\end{itemize}
The formula $\mathsf{coal}_P^{\langle m, n \rangle} (x, y)$ is defined as follows:
\begin{equation}\label{eq:coal}
\mathsf{coal}_P^{\langle m, n \rangle} (x, y) ~=~ \forall z \hspace*{-3mm}\displaystyle\bigwedge_{\begin{subarray}{c}l \in \mathsf{le}(P) \cup \mathsf{ri}(P)
    \end{subarray}} \hspace*{-3mm} \bigl( (x + m \leq z + l) \land (z + l \leq y + m) \to \mathsf{nogap}^l_{P,\langle m, n \rangle} (z, x, y) \bigr),
\end{equation}
where $\mathsf{nogap}^l_{P,\langle m, n \rangle} (z, x, y)$ is the formula
\begin{align}
\notag &
\exists x_1, y_1, x_2, y_2, x_3, y_3    \hspace{-.3cm} \bigvee_{\begin{subarray}{c}
    m_1 \in \mathsf{le}(P)\\
    n_1 \in \mathsf{ri}(P)\\
    \lceil_1 \in \{ [, ( \},\
    \rceil_1 \in \{ ], ) \}
    \end{subarray}}
\hspace*{-3mm}
    \bigg( \psi_{P}^{\lceil_1 m_1, n_1 \rceil_1} (x_1, y_1) \land \mathsf{sub}^{\lceil_1 m_1, n_1 \rceil_1}_{\langle m, n \rangle} (x_1, y_1, x, y) \land{} \\
\notag & \hspace*{3mm}\bigvee_{\begin{subarray}{c}
    m_2 \in \mathsf{le}(P)\\
    n_2 \in \mathsf{ri}(P)\\
    \lceil_2 \in \{ [, ( \},\
    \rceil_2 \in \{ ], ) \}
    \end{subarray}} \hspace*{-3mm} \Big( \psi_{P}^{\lceil_2 m_2, n_2 \rceil_2} (x_2, y_2) \land \mathsf{sub}^{\lceil_2 m_2, n_2 \rceil_2}_{\langle m, n \rangle} (x_2, y_2, x, y) \land{} \\
\label{eq:coal-1}&  \hspace*{1.7cm}  \Big((x_1 + m_1 < z + l < y_1 + n_1) \lor{} \\
\label{eq:coal-2}&   \hspace*{1.9cm} (x_1 + m_1 < y_1 + l_1 = z + l = x_2 + m_2 < y_2 + n_2) \land \mathsf{is}_{\rceil_1, ]} \lor \mathsf{is}_{\lceil_2, [}) \lor{}\\
\notag & \hspace*{1.3cm}  \bigvee_{\begin{subarray}{c}
    m_3 \in \mathsf{le}(P)\\
    n_3 \in \mathsf{ri}(P)
    \end{subarray}} \big( \psi_{P}^{[ m_3, n_3 ]} (x_3, y_3) \land \mathsf{sub}^{[ m_3, n_3 ]}_{\langle m, n \rangle} (x_3, y_3, x, y) \land{}\\
\label{eq:coal-3} & \hspace*{3cm}    \big[(x_3 + m_3= y_3+n_3 = z + l = x+m = x_1 + m_1 < y_1 + n_1) \lor{} \\
\label{eq:coal-4} &  \hspace*{3.2cm}  (x_1 + m_1 < y_1 + n_1 = z+l = y+ n = x_3 + m_3= y_3+n_3) \lor{} \\
\label{eq:coal-5} &  \hspace*{1.2cm}  (x_1 + m_1 < y_1 + l_1 = z + l = x_3 + m_3= y_3+n_3 = x_2 + m_2 < y_2 + n_2)\big] \big) \Big)\bigg)
\end{align}
and $\mathsf{sub}^{\lceil m', n' \rceil}_{\langle m, n \rangle}(x', y', x, y)$ says that $\lceil x' + m', y' + n' \rceil$ is a subinterval of $\langle x+m, y+n \rangle$. Intuitively, $\mathsf{nogap}^l_{P,\langle m, n \rangle} (z, x, y)$ says that around the time instant $z+l$ (that is, to the left and right of it as well as at $z+l$ itself),  their is no subinterval of $\langle x+m, y+n \rangle$ that is not covered by $P$. The five cases considered in the formula $\mathsf{nogap}^l_{P,\langle m, n \rangle} (z, x, y)$ are illustrated in Fig.~\ref{fig:coal}.

\begin{figure}[h]
\centering
\begin{tikzpicture}[point/.style={draw, thick, circle, inner sep=1.5, outer
		sep=2}]

  \node at (-4, 0) {Case~\eqref{eq:coal-1}};

  \coordinate (xpm) at (0,0);
  \node[anchor = south] at (xpm) {$x+m$};

  \coordinate (ypn) at (8,0);
  \node[anchor = south] at (ypn) {$y+n$};

  \node at (xpm) [yshift=-9cm] {} edge[thin, dashed] (xpm);
  \node at (ypn) [yshift=-9cm] {} edge[thin, dashed] (ypn);

  \node[yshift=-.5cm] at (2, 0) {\color{darkgreen}$x_1+m_1$};
  \node[yshift=-.5cm] at (6, 0) {\color{darkgreen}$y_1+n_1$};

  \node at (2,0) {\color{darkgreen}$\langle$};
  \node at (6,0) {\color{darkgreen}$\rangle$};
  \draw[darkgreen] (2,0) -- (6,0);

  \node[point, yshift=.2cm, red, label=above:{\color{red}$z+l$}] at (4,0) {};


  \node at (-4, -2) {Case~\eqref{eq:coal-2}};

  \node[label = below:{\color{darkgreen}$x_1+m_1$}] at (2,-2) {\color{darkgreen}$\color{darkgreen}\langle$};
  \node[label = below:{\color{darkgreen}$y_1+n_1$}] at (3.8,-2) {\color{darkgreen}$)$};
  \draw[darkgreen] (2,-2) -- (3.8,-2);

  \node[label = {[yshift=-.5cm]below:{\color{blue}$x_2+m_2$}}] at (4.2,-2) {\color{blue}$[$};
  \node[label = {[yshift=-.5cm]below:{\color{blue}$y_2+n_2$}}] at (6,-2) {\color{blue}$\rangle$};
  \draw[blue] (4.2,-2) -- (6,-2);

  \node[point, yshift=.2cm, red, label=above:{\color{red}$z+l$}] at (4,-2) {};


    \node at (-4, -4) {Case~\eqref{eq:coal-3}};

  \node[label = below right:{\color{darkgreen}$x_1+m_1$}] at (0.2,-4) {\color{darkgreen}$\langle$};
  \node[label = {[yshift=.15cm]below:{\color{darkgreen}$y_1+n_1$}}] at (3.8,-4) {\color{darkgreen}$\rangle$};
  \draw[darkgreen] (0.2,-4) -- (3.8,-4);

  \node[point, yshift=.2cm, red, label=above left:{\color{red}$z+l$}] at (0,-4) {};

  \node[label = below:{\color{orange}$x_3+m_3 \ \ \ y_3+ n_3$}] at (0,-4.4) {\color{orange}$[\ ]$};


    \node at (-4, -6) {Case~\eqref{eq:coal-4}};

  \node[label = {[yshift=.15cm]below:{\color{darkgreen}$x_1+m_1$}}] at (4.2,-6) {\color{darkgreen}$\langle$};
  \node[label = below left:{\color{darkgreen}$y_1+n_1$}] at (7.8,-6) {\color{darkgreen}$\rangle$};
  \draw[darkgreen] (4.2,-6) -- (7.8,-6);

  \node[point, yshift=.2cm, red, label=above left:{\color{red}$z+l$}] at (8,-6) {};

  \node[label = below:{\color{orange}$x_3+m_3 \ \ \ y_3+ n_3$}] at (8,-6.4) {\color{orange}$[\ ]$};


    \node at (-4, -8) {Case~\eqref{eq:coal-5}};

   \node[label = {[yshift=-1.5cm]below:{\color{darkgreen}$x_1+m_1$}}] at (2,-8) {\color{darkgreen}$\langle$};
  \node[label = {[yshift=-1.5cm]below:{\color{darkgreen}$y_1+n_1$}}] at (3.8,-8) {\color{darkgreen}$)$};
  \draw[darkgreen] (2,-8) -- (3.8,-8);

  \node[label = {[yshift=-1cm]below:{\color{blue}$x_2+m_2$}}] at (4.2,-8) {\color{blue}$($};
  \node[label = {[yshift=-1cm]below:{\color{blue}$y_2+n_2$}}] at (6,-8) {\color{blue}$\rangle$};
  \draw[blue] (4.2,-8) -- (6,-8);

  \node[point, yshift=.2cm, red, label=above:{\color{red}$z+l$}] at (4,-8) {};

  \node[label = below:{\color{orange}$x_3+m_3 \ \ \ y_3+ n_3$}] at (4,-8.4) {\color{orange}$[\ ]$};
\end{tikzpicture}
\caption{Five cases of the formula $\mathsf{nogap}^l_{P,\langle m, n \rangle} (z, x, y)$.}
\label{fig:coal}
\end{figure}
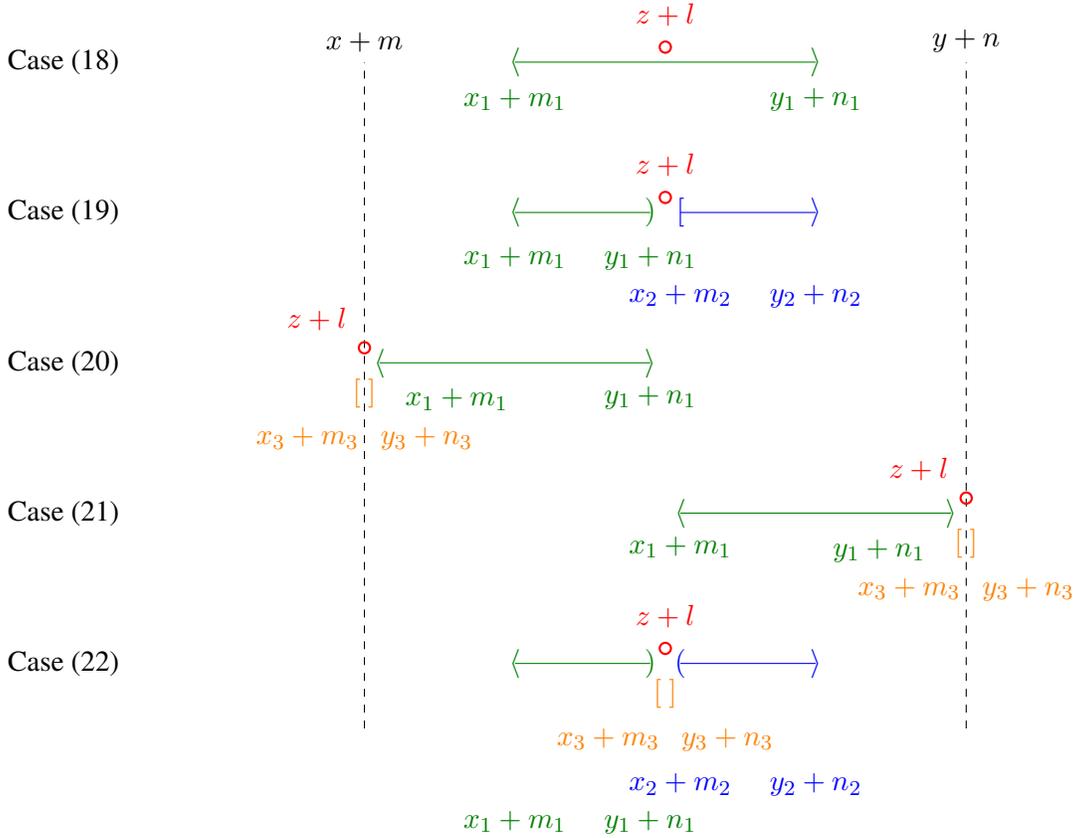

When evaluating $\varphi^{\langle m, n \rangle}(x,y)$ over $\mathfrak A_\D$, we need to compute the truth-values of $x+m = y+n$ and $x+m < y+n$ (for fixed $m$ and $n$). We regard the former as a formula with the predicates $\mathsf{bit}^{in}$, $\mathsf{bit}^{fr}$ and $<$ that is true  just in case $x = y + (n-m)$ if $n \geq m$, and $y = x+(m-n)$ otherwise. We provide a definition of $x = y + c$, for a positive $c$, in the appendix. A formula expressing $x+m < y+n$ is constructed similarly and left to the reader.

Finally, we show how the formulas $\varphi_P^{\langle m, n \rangle}(x,y)$ defined above can be used to check whether an interval $\iota = \langle \iota_b, \iota_e \rangle$ is a certain answer to $(\Pi, P@x)$ over $\D$. As follows from Lemma~\ref{le-ri}, if $\bot@ \lceil t_1, t_2 \rceil \in \mathfrak C_{\Pi, \D}$ then, for some $m \in \mathsf{le}(\bot)$, $n \in \mathsf{ri}(\bot)$ and some numbers $t_1', t_2' \in \mathsf{num}(\D)$ such that $t_1'$ ($t_2'$) occurs as the left (right) end of some interval, we have $t_1 = t_1' + m$ and $t_2 = t_2' + n$.
Take the structure $\mathfrak A_\D^{\iota}$ that extends $\mathfrak A_\D$ with the numbers $\iota_b$ and $\iota_e$. By~\eqref{eq:sound-compl}, $\iota$ is a certain answer to $(\Pi, P@x)$ over $\D$ iff the formula
\begin{multline}\label{rewr}
\exists x, y \bigvee_{\begin{subarray}{c}
    m \in \mathsf{le}(\bot)\\
    n \in \mathsf{ri}(\bot)
    \end{subarray}} \varphi_\bot^{\lceil m, n \rceil}(x,y) \lor{}\\
     \exists x, y, x_1, y_1 \hspace*{-7mm}\bigvee_{\begin{subarray}{c}
    m_1 \in \mathsf{le}(\bot)\\
    n_1 \in \mathsf{ri}(\bot)\\
    \lceil_1 \in \{ [, ( \},\
    \rceil_1 \in \{ ], )\}
    \end{subarray}} \hspace*{-7mm} \big(  \varphi_P^{\lceil_1 m, n \rceil_1}(x_1, y_1) \land \mathsf{sub}^{\langle 0, 0 \rangle}_{\lceil_1 m, n \rceil_1}(x,y, x_1, y_1) \land (x = \iota_b) \land (y = \iota_e) \big)
\end{multline}
holds true in $\mathfrak A_\D^{\iota}$.
\end{proof}

\section{Implementing $\nrdMTL$}\label{implementing}

Unfortunately, the (data independent) FO-rewriting~\eqref{rewr} turns
out to be impractical because of the universal quantifier used for
coalescing in~\eqref{eq:coal}.  It is well known that $\forall$ is implemented in SQL as $\neg \exists \neg$ resulting in suboptimal performance in general.
Having experimented with a few different approaches, we decided to use a materialisation (bottom-up) technique. In this section, we first present a bottom-up algorithm whose worst-case running time is linear in the number of intervals of an input data instance $\D$, under a practically motivated assumption that the order of occurrence of the intervals in $\D$ coincides with the natural temporal order on those intervals. Then we describe how our algorithm can be implemented in SQL (with views). In particular, we consider two alternative implementations of coalescing in SQL.

\subsection{Bottom-up algorithm}

We first introduce some notation and obtain a few results about \emph{temporal tables} $T$ with column names $\mathsf{attr}_1, \dots, \mathsf{attr}_m, \mathsf{lpar}, \mathsf{ledge}, \mathsf{redge}, \mathsf{rpar}$. A temporal table with $m=0$ will be called \emph{purely temporal}.
%
%
We refer to the $i$-th row of $T$ as $T[i]$, to the value of the column $\mathsf{attr}_j$ in the $i$-th row as $T[i, \mathsf{attr}_j]$, and set $T[i, \mathsf{attr}_j,\dots, \mathsf{attr}_k] = (T[i, \mathsf{attr}_j],\dots, T[i, \mathsf{attr}_k])$. We assume that the columns $\mathsf{ledge}$ and $\mathsf{redge}$ store timestamps or special values for $\infty, -\infty$, $\mathsf{lpar}$ stores $[$ or $($, and $\mathsf{rpar}$ stores $]$ or $)$. Define an order $\prec$ on intervals by taking $\langle t_1, t_2 \rangle \prec \lceil s_1, s_2 \rceil$ iff one of the following conditions holds:
\begin{itemize}
\item[--] $t_1 < s_1$;

\item[--] $t_1 = s_1$, $\langle$ is $[$, and $\lceil$ is $($;

\item[--] $t_1 = s_1$, $\langle$ and $\lceil$ are the same, and $t_2 < s_2$;

\item[--] $t_1 = s_1$, $\langle$ and $\lceil$ are the same, $t_2 = s_2$, $\rangle$ is $)$, and $\lceil$ is $]$.
\end{itemize}
It should be clear that $\prec$ is a strict linear order on the set of all intervals. For example, we have $[3, 8) \prec [4, 7) \prec (4,6) \prec (4,7) \prec (4,7]$. (In fact, the results of this section will work with any other linear order over intervals.) We write $T[i, \mathsf{lpar}, \mathsf{ledge}, \mathsf{redge}, \mathsf{rpar}] \prec T'[j, \mathsf{lpar}, \mathsf{ledge}, \mathsf{redge}, \mathsf{rpar}]$ to say that the interval defined by the $i$th row of a temporal table $T$ $\prec$-precedes the interval given by the $j$th row of a temporal table $T'$.

We make the following \emph{temporal ordering assumption} (or TOA), for any temporal table $T$ with $m$ attributes:
\begin{multline*}
\text{if $T[i, \mathsf{attr}_1, \dots, \mathsf{attr}_m] = T[j, \mathsf{attr}_1, \dots, \mathsf{attr}_m]$ and $i < j$},\\
\text{then $T[i, \mathsf{lpar}, \mathsf{ledge}, \mathsf{redge}, \mathsf{rpar}] \preceq T[j, \mathsf{lpar}, \mathsf{ledge}, \mathsf{redge}, \mathsf{rpar}]$}.
\end{multline*}
For a purely temporal table $T$, this assumption means that the rows of $T$ respect $\preceq$.


Let $T[\mathsf{attr}_j, \dots, \mathsf{attr}_k]$ be the  \emph{projection} of $T$ on the columns $\mathsf{attr}_j, \dots, \mathsf{attr}_k$ that keeps only distinct tuples. We define $|T|_{o}$ to be the cardinality of $T[\mathsf{attr}_1,\dots, \mathsf{attr}_m]$ and $|T|_t$ to be the cardinality of $T[\mathsf{lpar}, \mathsf{ledge},\mathsf{redge}, \mathsf{rpar}]$. The first measure estimates how large the table is in terms of individual constants, while the second measure concerns the number of timepoints. For the tables of extensional predicates in our use-cases, $|T|_{o}$ is much smaller than $|T|_t$.

We say that a table $T$ is \emph{coalesced} if it does not contain distinct tuples $(c_1, \dots, c_m, \langle, t_1, t_2, \rangle)$ and $(c_1, \dots, c_m, \lceil, t_1', t_2', \rceil)$ such that $\langle t_1, t_2 \rangle \cap \lceil t_1', t_2' \rceil \neq \emptyset$. For a tuple of individual constants $(c_1, \dots, c_m)$, let $T_{c_1, \dots, c_m}$ be the set of all intervals $\langle t_1, t_2 \rangle$ such that $(c_1, \dots, c_m, \langle, t_1, t_2, \rangle)$ occurs in $T$. For a set $\mathcal I$ of intervals, we then denote by $\mathsf{coalesce}(\mathcal I)$ the (minimal) set of intervals that results from coalescing $\mathcal I$. Finally, a \emph{coalescing} of $T$ is a minimal table, $T^*$, with the same columns as $T$ such that the following condition holds:
\begin{description}
\item[(coalesce)] for any $(c_1, \dots, c_m)$ in $T[\mathsf{attr}_1,\dots, \mathsf{attr}_m]$ and $\langle t_1, t_2 \rangle$ in $\mathsf{coalesce}(T_{c_1, \dots, c_m})$, there exists $(c_1, \dots, c_m, \langle t_1, t_2 \rangle)$ in $T^*$.
\end{description}
Clearly, $T^*$ is a coalesced table.

\begin{lemma}\label{lemma:coal}
Suppose a table $T$ satisfies TOA. Then its coalescing $T^*$ satisfying TOA and such that $|T^*|_o = |T|_o$ and $|T^*|_t \leq |T|_t$ can be computed in time $O(|T|_{o}^2 \times |T|_{t})$.
\end{lemma}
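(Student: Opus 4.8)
The plan is to exhibit an explicit interval-merging procedure and then verify in turn that its output is a coalescing, that it respects TOA, that the two size inequalities hold, and that it runs within the stated time bound. The procedure is the textbook ``merge overlapping intervals'' sweep, applied separately to each block of rows that share an attribute tuple. Concretely: first make one pass over $T$ to list the distinct tuples of $T[\mathsf{attr}_1,\dots,\mathsf{attr}_m]$; there are $|T|_o$ of them. Then, for each such tuple $\avec c$, make one more pass over $T$ collecting, in order of occurrence, the intervals of the rows whose attribute part is $\avec c$; by TOA this list $I_1,I_2,\dots$ is already $\preceq$-sorted. Run the sweep on it: keep a ``current'' interval $J$ initialised to $I_1$, and for each subsequent $I_k$ either set $J:=J\cup I_k$ when $J\cap I_k\neq\emptyset$ (their union is then again a single interval, since $J\preceq I_k$), or else emit the row $(\avec c,J)$ and restart with $J:=I_k$; emit the last $J$. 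Let $T^*$ be the concatenation of the blocks produced for the successive attribute tuples.

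Correctness of the merge is the usual sweep invariant: over a $\preceq$-sorted list the sweep emits exactly the maximal runs of pairwise-intersecting intervals, each run replaced by its union, which is precisely $\mathsf{coalesce}$ of the underlying set. Hence $T^*$ consists of exactly the rows $(\avec c,J)$ with $\avec c$ in $T[\mathsf{attr}_1,\dots,\mathsf{attr}_m]$ and $J\in\mathsf{coalesce}(T_{\avec c})$, so it is the minimal table satisfying the (coalesce) condition, i.e.\ a coalescing of $T$. For TOA note that the assumption constrains only rows with identical attribute parts, hence only rows inside a single block, and inside a block the sweep emits pairwise-disjoint intervals from left to right, i.e.\ in strictly increasing $\prec$ order; so each block, and therefore $T^*$, satisfies TOA.

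For the size bounds, $|T^*|_o=|T|_o$ holds because every attribute tuple of $T$ has a nonempty $T_{\avec c}$, hence a nonempty $\mathsf{coalesce}(T_{\avec c})$, and still occurs in $T^*$, while no new tuple is created. For $|T^*|_t\le|T|_t$ one uses that every interval emitted into $T^*$ is the union of a $\prec$-consecutive run of intervals occurring in $T$, with its left endpoint inherited from the $\prec$-least interval of the run and its right endpoint from some interval of the run; a short bookkeeping over these runs --- together with the fact that coalescing never increases the number of intervals within a block --- bounds the number of distinct intervals of $T^*$ by that of $T$. I expect this to be the step requiring the most care: it has to be spelled out over all four parenthesis types and the fine structure of $\prec$, and it is the only part that is not completely mechanical.

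For the running time: the first pass is $O(|T|)$, and for each of the $|T|_o$ attribute tuples the collecting pass is $O(|T|)$ and the sweep is $O(|T_{\avec c}|)=O(|T|)$. Since a row of $T$ is determined by its attribute tuple together with its interval, $T$ has at most $|T|_o\cdot|T|_t$ rows, so $|T|=O(|T|_o\cdot|T|_t)$ and the total cost is $O(|T|_o)\cdot O(|T|_o\cdot|T|_t)=O(|T|_o^2\cdot|T|_t)$, as required. The design point worth stressing --- and the reason no logarithmic factor appears --- is that TOA lets us skip sorting each block and merge it with a single linear scan; in the purely temporal case $m=0$ there is a single block and the procedure collapses to the linear-time coalescing of a data instance advertised above.
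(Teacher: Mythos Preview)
Your algorithm and running-time analysis are essentially identical to the paper's: partition $T$ into purely temporal subtables $T_{\avec c}$ (one per attribute tuple), run the linear sweep on each, and reassemble; the paper calls the sweep the ``imperative coalescing algorithm'' and derives the same $O(|T|_o^2\cdot|T|_t)$ bound via $|T|\le|T|_o\cdot|T|_t$. Your treatment of TOA preservation and of $|T^*|_o=|T|_o$ is more explicit than the paper's but entirely in the same spirit.

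There is, however, a genuine problem at exactly the point you flag: the inequality $|T^*|_t\le|T|_t$ cannot be ``spelled out over all four parenthesis types,'' because it is false in general. Take
\[
T=\{\,(a,[0,3)),\ (a,(2,5]),\ (b,[0,3)),\ (b,(1,4]),\ (c,(2,5]),\ (d,(1,4])\,\},
\]
which satisfies TOA. Here $|T|_t=3$ (the distinct intervals are $[0,3)$, $(2,5]$, $(1,4]$), but coalescing yields
\[
T^*=\{\,(a,[0,5]),\ (b,[0,4]),\ (c,(2,5]),\ (d,(1,4])\,\}
\]
with $|T^*|_t=4$. The paper's own proof does not address this inequality at all---it establishes only the time bound and leaves the size claims implicit---so your instinct that this is the fragile step is correct, but your plan to carry it through would fail. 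The bound that \emph{does} hold, and that suffices for the downstream complexity analysis, is $|T^*|_t\le|T|_o\cdot|T|_t$, since $T^*$ has no more rows than $T$.
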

\begin{proof}
Consider first a purely temporal table $S$ that satisfies temporal ordering. There is a simple linear-time algorithm to produce a coalesced table $S^*$ that also satisfies temporal ordering. Indeed, initially we set $\langle b, e \rangle = S[0, \mathsf{lpar}, \mathsf{ledge}, \mathsf{redge}, \mathsf{rpar}]$. In a loop, we take each $\lceil t_1, t_2 \rceil = S[i, \mathsf{lpar}, \mathsf{ledge}, \mathsf{redge}, \mathsf{rpar}]$ (clearly, $\langle b, e \rangle \prec \lceil t_1, t_2 \rceil$). If $\lceil t_1, t_2 \rceil$ and $\langle b, e \rangle$ are disjoint, we add $\langle b, e \rangle$ to $S^*$ and set $\langle b, e \rangle = \lceil t_1, t_2 \rceil$. If they are not disjoint, we set $\langle b, e \rangle=\lceil t_1, t_2 \rceil \cup \langle b, e \rangle$ and move on. It is easily checked that the resulting table $S^*$ is as required. Below, we refer to this algorithm as an \emph{imperative} coalescing algorithm.

It only remains to explain how the algorithm above can be applied to
$T$ in order to obtain the required complexity. Note that
$|T| \leq |T|_{o} \times |T|_{t}$ and we can construct $|T|_{o}$-many
separate tables $T_{c_1, \dots, c_m}$, for each $(c_1, \dots, c_m)$,
in time $|T| \times |T|_{o}$. Then, we can apply the algorithm
described above to each $T_{c_1, \dots, c_m}$ in time $|T|_{t}$ and
merge the results. Therefore, the overall running time is
$|T| \times |T|_{o} + |T|_{t} \times |T|_o = O(|T|_{o}^2 \times
|T|_{t})$.
\end{proof}

Before presenting our query answering algorithm, we determine the
complexity of computing \emph{temporal joins}. Let $T$ be a table with
attributes $\mathsf{attr}_1, \dots, $
$\mathsf{attr}_m, \mathsf{lpar}, \mathsf{ledge}, \mathsf{redge},
\mathsf{rpar}$ and let $T'$ be a table with attributes $\mathsf{attr}_1', \dots, $
$\mathsf{attr}_n', \mathsf{lpar}, \mathsf{ledge}, \mathsf{redge},
\mathsf{rpar}$. A \emph{temporal join} of $T$ and $T'$ is a table
$T''$ with attributes
$\mathsf{attr}_1'', \dots, \mathsf{attr}_k'', \mathsf{ledge},
\mathsf{redge}, \mathsf{rpar}$ such that
$$
\{\mathsf{attr}_1'', \dots, \mathsf{attr}_k''\} = \{ \mathsf{attr}_1,
\dots, \mathsf{attr}_m\} \cup \{ \mathsf{attr}_1', \dots,
\mathsf{attr}_n' \}
$$
and
$(c_1'', \dots, c_k'', \langle, t_1'', t_2'', \rangle)$ is in $T''$
iff there exist two tuples $(c_1, \dots, c_m, \lceil, t_1, t_2, \rceil)$ from $T$ and $(c_1', \dots, c_n', \lfloor, t_1', t_2', \rfloor)$ from $T'$ satisfying
the following conditions:
\begin{itemize}
\item[--] $c_i'' = c_j$, for all $i,j$ such that $\mathsf{attr}_i'' =\mathsf{attr}_j$;
\item[--] $c_i'' = c_j'$, for all $i,j$ such that $\mathsf{attr}_i'' =\mathsf{attr}_j'$;
\item[--] $\lceil t_1, t_2 \rceil \cap \lfloor t_1', t_2' \rfloor \neq \emptyset$ and $\langle t_1'', t_2'' \rangle = \lceil t_1, t_2 \rceil \cap \lfloor t_1', t_2' \rfloor$.
\end{itemize}

\begin{lemma}
  If $T$, $T'$ satisfy TOA, then a temporal join $T''$ of $T$ and $T'$
  satisfying TOA and such that $|T''|_o \leq |T|_o \times |T'|_o$,
  $|T''|_t \leq |T|_t + |T'|_t$ can be computed in time
  $O(|T|_o^2 \times |T'|_o^2 \times (|T|_t + |T'|_t))$.
\end{lemma}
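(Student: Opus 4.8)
The plan is to reduce the general temporal join to the purely temporal case, where it is just a sorted-list merge, exactly in the spirit of the proof of Lemma~\ref{lemma:coal}. First I would apply Lemma~\ref{lemma:coal} to both inputs, replacing $T$ and $T'$ by coalescings that satisfy TOA; this leaves $|T|_o$ and $|T'|_o$ unchanged, does not increase $|T|_t$ or $|T'|_t$, and costs only $O(|T|_o^2 \times |T|_t) + O(|T'|_o^2 \times |T'|_t)$, which is swallowed by the final bound. Hence from now on I may assume that, for each fixed tuple $\avec{c}$ of individual constants, the intervals of $T$ (and of $T'$) attached to $\avec{c}$ are pairwise disjoint and already listed in $\prec$-order.

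For two purely temporal coalesced tables $S$ and $S'$ respecting $\preceq$ I would compute the temporal join by a two-pointer sweep: maintain a pointer into $S$ and a pointer into $S'$; at each step take the two currently pointed intervals $\lceil t_1, t_2 \rceil$ and $\lfloor s_1, s_2 \rfloor$, append $\lceil t_1, t_2 \rceil \cap \lfloor s_1, s_2 \rfloor$ to the output whenever it is nonempty, and then advance whichever pointer is at the interval with the $\prec$-smaller right end (the `touching at a point' situations being resolved by the tie-breaking already built into $\prec$). Since every step consumes at least one of the two current intervals, the sweep halts after $O(|S|_t + |S'|_t)$ steps and emits at most $|S|_t + |S'|_t$ intervals; the left ends of the emitted intervals are nondecreasing, so (after discarding duplicates) the output is $\prec$-sorted, hence satisfies TOA, and $|S''|_t \le |S|_t + |S'|_t$.

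For the general case I would partition $T$ into the at most $|T|_o$ purely temporal tables $T_{\avec{c}}$, one for each $\avec{c} \in T[\mathsf{attr}_1, \dots, \mathsf{attr}_m]$, and $T'$ likewise; this grouping takes $O(|T| \times |T|_o) = O(|T|_o^2 \times |T|_t)$ time since $|T| \le |T|_o \times |T|_t$. Then, for every pair $(\avec{c}, \avec{c}')$ that agrees on the attribute names common to $T$ and $T'$, I would run the purely temporal merge on $T_{\avec{c}}$ and $T'_{\avec{c}'}$ and prefix the combined tuple of constants to each output row. There are at most $|T|_o \times |T'|_o$ such pairs and each is processed in $O(|T|_t + |T'|_t)$ time, so this phase runs in $O(|T|_o \times |T'|_o \times (|T|_t + |T'|_t))$; the resulting table has at most $|T|_o \times |T'|_o$ distinct constant tuples, and every interval it contains is produced by one of the per-pair merges, which is where the bounds $|T''|_o \le |T|_o \times |T'|_o$ and $|T''|_t \le |T|_t + |T'|_t$ come from, while TOA holds because each constant group is $\prec$-sorted. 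If one additionally wants the temporal join to be coalesced, a final application of Lemma~\ref{lemma:coal} to $T''$ costs $O(|T''|_o^2 \times |T''|_t) = O(|T|_o^2 \times |T'|_o^2 \times (|T|_t + |T'|_t))$, which dominates every earlier term and matches the claimed running time exactly.

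The only genuinely delicate parts are the bookkeeping --- in particular realising that the square $|T|_o^2 \times |T'|_o^2$ is forced by re-coalescing an output whose object size is the product $|T|_o \times |T'|_o$ --- and the case analysis for intersecting intervals with open/closed endpoints and with $\pm\infty$; the latter is tedious rather than hard, and is the step I would be most careful about.
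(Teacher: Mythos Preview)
Your proposal is correct and follows essentially the same approach as the paper: coalesce, partition each table by its constant tuples, run a two-pointer merge on every compatible pair of purely temporal tables, and reassemble. The only cosmetic differences are that the paper advances the pointer whose \emph{next} interval is $\prec$-smaller (rather than the one with the smaller right end, as you do---both work on coalesced input), and the paper simply asserts the $O(|T|_o^2 \times |T'|_o^2 \times (|T|_t + |T'|_t))$ bound without your explicit observation that it is the optional output-coalescing step that forces the squared product; in the paper the stated bound is just a comfortable over-estimate of the partition-plus-pairwise-merge cost.
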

\begin{proof}
  We first give an algorithm for computing the temporal join of purely
  temporal tables $S$ and $S'$. We assume that these tables are
  coalesced (which can be done in time $O(|S|)$ and $O(|S'|)$). The
  algorithm works starting from the first tuples $S[i]$ and $S'[i']$ of the tables. If
  $S[i] \cap S'[i'] \neq \emptyset$, we write $S[i] \cap S'[i']$ to
  the output table $S''$. Then, if $S[i+1] \succeq S'[i'+1]$, we set
  $i' := i' + 1$ (without changing $i$); otherwise, $i := i+1$. We
  iterate until we have considered all the tuples in both
  tables. Clearly, computing the full $S''$ requires time
  $O(|S|+|S'|)$.

  The complete algorithm for the tables $T$ and $T'$ will first, similarly to the argument of Lemma~\ref{lemma:coal}, produce $|T|_o$-many
  purely temporal tables $T_{c_1, \dots, c_m}$, for each
  $(c_1, \dots, c_m)$ occurring in $T$. Note that
  $|T_{c_1, \dots, c_m}| \leq |T|_t$ for each of those
  tables. In the same way, we produce $|T'|_o$ purely temporal tables $T'_{c_1', \dots, c_n'}$, for
  each $(c_1', \dots, c'_n)$ occurring in $T'$. It remains to apply
  the temporal join algorithm described above to all pairs of tables
  $T_{c_1, \dots, c_m}$ and $T'_{c_1', \dots, c_n'}$, which can be
  done in the required time.
\end{proof}

Another operation on temporal tables we need is projection. Let $T$ be
a table with column names as above and let
$\{\mathsf{attr}_1', \dots, $
$\mathsf{attr}_n'\} \subseteq \{\mathsf{attr}_1, \dots, $
$\mathsf{attr}_m\}$. A \emph{projection} of $T$ on
$\mathsf{attr}_1', \dots, $ $\mathsf{attr}_n'$ is a table with columns
$\mathsf{attr}_1', \dots, $
$\mathsf{attr}_n', \mathsf{lpar}, \mathsf{ledge}, \mathsf{redge},
\mathsf{rpar}$ containing all
$(c_1', \dots, c_n', \langle t_1, t_2 \rangle)$ such that some
$(c_1, \dots, c_m, \langle t_1, t_2 \rangle)$ is in $T$ and $c_i'=c_j$
whenever $\mathsf{attr}_i' = \mathsf{attr}_j$.
As we have to preserve the temporal order, our algorithm for computing projections requires some attention. To show that a na\"ive projection does not preserve the temporal order, consider a table $T$ with two tuples $(a, [,1,1,])$ and $(b, [,0,0,])$, which satisfies our temporal order assumption. The projection of $T$ that removes the first column results is the table with two tuples $([,1,1,])$ and $([,0,0,])$, which is not ordered.
\begin{lemma}\label{lemma:proj}
If $T$ satisfies TOA, then a projection of $T$ satisfying TOA can be computed in time $O(|T|_o^2 \times |T|_t)$.
\end{lemma}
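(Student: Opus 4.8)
The plan is to follow the same template as in Lemma~\ref{lemma:coal} and the temporal-join lemma above: slice $T$ along its object columns, operate on each purely temporal slice (which is already $\preceq$-sorted thanks to TOA), and reassemble. The only new phenomenon compared with coalescing is that distinct object tuples of $T$ may become equal after projection, so the corresponding slices must be \emph{merged}, and the merge has to respect $\preceq$.

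Concretely, first I would build the $\le |T|_o$ purely temporal tables $T_{c_1, \dots, c_m}$, one for each object tuple $(c_1, \dots, c_m)$ occurring in $T[\mathsf{attr}_1, \dots, \mathsf{attr}_m]$; by TOA each such $T_{c_1, \dots, c_m}$ already respects $\preceq$, and this step costs $O(|T| \times |T|_o) = O(|T|_o^2 \times |T|_t)$ because $|T| \le |T|_o \times |T|_t$. Next, for each projected object tuple $(c_1', \dots, c_n')$ I would gather all slices $T_{c_1, \dots, c_m}$ whose restriction to $\mathsf{attr}_1', \dots, \mathsf{attr}_n'$ equals $(c_1', \dots, c_n')$ and perform a $k$-way merge of these $\preceq$-sorted lists, dropping duplicate intervals on the fly (which is trivial, since equal intervals become adjacent in the merge). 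Prefixing the resulting $\prec$-sorted purely temporal table with $(c_1', \dots, c_n')$ and concatenating over all projected object tuples yields the projection; the example after the statement shows precisely why the na\"ive, unsorted projection would instead violate TOA.

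For the running time, the slices fed into the merges together contain at most $|T|_o \times |T|_t$ interval rows (they are indexed by the \emph{distinct} object tuples of $T$, each contributing at most $|T|_t$ intervals), and each of the $\le |T|_o$ merge groups involves at most $|T|_o$ lists; a straightforward linear-scan $k$-way merge therefore costs $O(|T|_o^2 \times |T|_t)$ overall (a heap-based merge would give $O(|T|_o \times |T|_t \log |T|_o)$), which together with the slicing cost gives the claimed bound. Finally one verifies TOA for the output: within each group the merged, deduplicated list is $\prec$-sorted, while rows in different groups carry different object tuples, so the TOA condition holds regardless of the order in which the groups are concatenated. The one point that needs care — and really the only obstacle — is the observation that the total size of all slices entering the merges is $O(|T|_o \times |T|_t)$ rather than $O(|T|_o^2 \times |T|_t)$, so that the merging stage does not overshoot the target complexity; this holds exactly because the slices range over \emph{distinct} object tuples of $T$.
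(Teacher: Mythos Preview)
Your proposal is correct and follows essentially the same approach as the paper: partition $T$ into purely temporal slices $T_{c_1,\dots,c_m}$, group the slices by their projected object tuple, merge the $\preceq$-sorted lists within each group, and reattach the projected tuple. The only cosmetic difference is that the paper merges the lists in a group by iterated pairwise merging (cost $O(k\cdot N_{\text{group}})$), whereas you describe a single $k$-way linear-scan merge with the same asymptotic cost; your accounting of the total merge cost via $\sum_g k_g N_g \le |T|_o \cdot |T|_o|T|_t$ is in fact slightly more explicit than the paper's.
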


Now, consider the \emph{union} operation on pairs of tables $T$ and $T'$ with the same columns that returns a table with all the tuples from the set $T \cup T'$.
\begin{lemma}\label{lemma:union}
For any pair of tables $T$ and $T'$ satisfying TOA, their union table also satisfying TOA can be computed in time $O((|T|_o^2 + |T'|_o^2) \times (|T|_t + |T'|_t))$.
\end{lemma}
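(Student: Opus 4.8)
The plan is to mirror the structure of the proofs of Lemmas~\ref{lemma:coal} and~\ref{lemma:proj}: first dispose of the purely temporal case by a linear-time merge, and then lift it to arbitrary temporal tables by bucketing on tuples of individual constants.

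For the purely temporal case, suppose $S$ and $S'$ are purely temporal tables satisfying TOA, so each lists its intervals in $\preceq$-increasing order. I would run the standard merge step of merge-sort: keep pointers into $S$ and $S'$ and repeatedly output whichever of the two currently-pointed intervals is $\preceq$-smaller (advancing the corresponding pointer), skipping a row whenever it equals the last row written so that the output stays duplicate-free; once one table is exhausted, copy the rest of the other. Since $\preceq$ is a strict linear order on intervals, the output lists exactly the intervals of the set $S\cup S'$ in $\preceq$-order, hence satisfies TOA, and the procedure runs in time $O(|S|+|S'|)$.

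For the general case I would, as in the proof of Lemma~\ref{lemma:coal}, split $T$ into the $|T|_o$ purely temporal tables $T_{c_1,\dots,c_m}$, one for each tuple $(c_1,\dots,c_m)$ in $T[\mathsf{attr}_1,\dots,\mathsf{attr}_m]$, and similarly split $T'$ into $|T'|_o$ purely temporal tables $T'_{c_1,\dots,c_n}$; since $T,T'$ satisfy TOA, each such sub-table already satisfies TOA. For every constant tuple $c$ occurring in $T$ or in $T'$, I apply the purely-temporal merge above to $T_c$ and $T'_c$ (taking just the one that exists when $c$ occurs in only one table), re-attach the attribute columns for $c$, and finally concatenate all the resulting per-$c$ blocks in any fixed order of constant tuples. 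The output contains precisely the tuples of $T\cup T'$, has $|T\cup T'|_o\le |T|_o+|T'|_o$ and $|T\cup T'|_t\le |T|_t+|T'|_t$, and satisfies TOA, since within each block the intervals are $\preceq$-sorted and distinct blocks carry distinct constant tuples — so rows with equal constant tuples only ever arise within one block.

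For the running time, use $|T|\le |T|_o\times |T|_t$ and $|T'|\le |T'|_o\times |T'|_t$: the two splitting steps cost $O(|T|\times |T|_o)=O(|T|_o^2\times |T|_t)$ and $O(|T'|_o^2\times |T'|_t)$ (scan the table and dispatch each row to its bucket); the total cost of all per-$c$ merges is $O\!\bigl(\sum_c(|T_c|+|T'_c|)\bigr)=O(|T|+|T'|)$; and the concatenation is linear in the output size $O(|T|+|T'|)$. Adding these gives the claimed bound $O\bigl((|T|_o^2+|T'|_o^2)\times(|T|_t+|T'|_t)\bigr)$. I do not expect any genuine obstacle here; the only point needing care — exactly as for projection — is that the naive implementation (list the rows of $T$ and then the rows of $T'$) can violate TOA, which is why the bucket-and-merge detour is necessary.
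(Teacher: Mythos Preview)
Your proposal is correct and follows essentially the same approach as the paper: partition each table into purely temporal sub-tables indexed by constant tuples, merge the matching sub-tables using the linear-time merge of two $\preceq$-sorted lists, and concatenate the resulting blocks in some fixed order on constant tuples. The paper's proof is terser but identical in substance, and your time analysis matches.
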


The proofs of Lemmas~\ref{lemma:proj} and~\ref{lemma:union} can be found in the  appendix.

We are now in a position to describe the bottom-up query answering algorithm. Suppose we are given a program $\Pi$ in normal form. Suppose also that each extensional predicate $P$ is given by a table $T_P$ satisfying TOA. (This assumption can be made in all of our use-cases. Indeed, both tables \TB{} and  $\mathsf{Weather}$ are naturally ordered by the timestamp, and our mappings (see Section~\ref{sec:uc}) can be easily written in a way to take advantage of this order and produce tables $T$ satisfying TOA.) Thus, we can assume that the given data instance $\D$ is represented by a set of $T_P$, where each $T_P$ contains all the tuples $(c_1, \dots, c_m, \langle, t_1, t_2, \rangle)$ such that $P(c_1, \dots, c_m)@ \langle t_1, t_2 \rangle \in \D$.

Consider a predicate $P$ and suppose that we have computed temporal tables $T_{P_i}$ satisfying TOA, for each $P_i$ with $P \lessdot P_i$ (see Section~\ref{nonrecursive}). We assume that the $T_{P_i}$ have (non-temporal) columns $(P_i, 1),  \dots, (P_i, m)$. For each rule $\alpha$ in $\Pi$ with $P$ in the head, we compute a table $T_P^\alpha$ satisfying TOA. If $\alpha$ is of the form~\eqref{eq:nf-horn}, we first compute the temporal join $T$ of $T_{P_1}, \dots, T_{P_I}$ (we change the names so that $T_{P_i}$ has columns $(P_i,\tau_1, 1), \dots, (P_i,\tau_m, m)$, where $\avec{\tau}_i = (\tau_1, \dots, \tau_m)$, and so all the tables $T_{P_i}$ have distinct column names). Then we select from $T$ only those tuples $(c_1, \dots, c_n, \langle, t_1, t_2, \rangle)$ for which $c_i = c_j$ in case the column names for $c_i$ and $c_j$ mention the same variable $x$, and the tuples for which $c_i = a$ in case the column name for $c_i$ mentions the constant $a$. These two steps can be done in time $O(\prod_{i}  |T_{P_i}|_o^2 \times \sum_{i} |T_{P_i}|_t)$, and the size of the resulting table does not exceed $\prod_{i}  |T_{P_i}|_o \times \sum_{i} |T_{P_i}|_t$. It remains to perform projection in the following way. Suppose  $P(\avec{\tau})$ with $\avec{\tau} = (x_1, \dots x_m)$ is the head of $\alpha$ (if $\avec{\tau}$ also contains constants, the procedure below can be easily modified). Then we keep only one column among all the columns named $(P_i,x_j, k)$, for each variable $x_j$. It remains to rename the remaining $(P_i,x_j, k)$ to $(P,j)$, for each $j$. The total time required to compute $T_P^\alpha$ is $O(\prod_{i}  |T_{P_i}|_o^2 \times \sum_{i} |T_{P_i}|_t)$.

If $\alpha$ is of the form~\eqref{eq:nf-from-box}, provided that $T_{P_1}$ is coalesced, computing $T_P^\alpha$ reduces to using arithmetic operations for $\iota \pc \range$, $\iota \mc \range$, and $\range \sqsubseteq \iota$ as in the rules $(\boxplus_{\range})$/$(\boxminus_{\range})$, and projection. Therefore, $T_P^\alpha$ satisfying TOA can be computed in time $|T_{P_1}|_o^2 \times |T_{P_1}|_t$. Computing $T_P^\alpha$ for rules of the form~\eqref{eq:nf-to-box} can be done in time $O(|T_{P_1}|_o \times |T_{P_2}|_o \times (|T_{P_1}|_t + T_{P_2}|_t))$. Indeed, to construct $T_P^\alpha$ for a rule $\alpha$ of the form $P(\avec{\tau}) \leftarrow P_1 (\avec{\tau}_1)\Si_\range P_2(\avec{\tau}_2))$, we follow the rule $(\Si_{\range})$ and first produce a table $T_{P_1}^c$ with the same columns as $T_{P_1}$, where for each tuple of $T_{P_1}$, we apply the operation $\cdot^c$ to its interval. We then compute the temporal join $T$ of $T_{P_1}^c$ and $T_{P_2}$ after applying the renaming described above. Then we compute $T^{+^o \varrho}$ by applying the operation $+^o \varrho$ to the interval columns of each tuple in $T$, after which we compute the temporal join of $T^{+^o \varrho}$ and $T_{P_1}^c$ (with renaming applied to the columns of $T_{P_1}^c$). To produce $T_P^\alpha$, it remains to perform projection and renaming as described above. Finally, to compute $T_P$, it is sufficient to compute the union of all $T_P^\alpha$ satisfying TOA. Thus, we obtain the following, where the \emph{degree} of the  rule~\eqref{eq:nf-horn} is $|I|$, of~\eqref{eq:nf-to-box} is $2$, and of~\eqref{eq:nf-from-box} is $1$:
\begin{lemma}
  Let $\Pi$ be a program and $P$ a predicate in it such that $K$-many  rules have $P$ in the head, with $R$ being the maximal degree of those rules, $m$ the maximum of $|T_{P'}|_t$ among $P'$ such that $P \lessdot P'$, and $n$ the maximum of $|T_{P'}|_o$ among those $P'$. Then $T_P$ is of size at most $n^{R}mRK$ and can be computed in time $O(n^{2R}mRK)$.
\end{lemma}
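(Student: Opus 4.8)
The plan is to extract the bounds directly from the per-rule construction that precedes the lemma, substituting the uniform quantities $n$, $m$, $R$, $K$ into the generic estimates given there, and then to assemble the $K$ per-rule tables into $T_P$ by iterated use of Lemma~\ref{lemma:union} (equivalently, a single $K$-way merge). A standing point to keep in mind is that every intermediate table is TOA-compliant: Lemma~\ref{lemma:coal}, the temporal-join lemma, Lemma~\ref{lemma:proj} and Lemma~\ref{lemma:union} are all stated so as to \emph{preserve} TOA, which is precisely what lets us chain joins, projections and coalescings without ever re-sorting.

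First I would run through the three admissible shapes of a rule $\alpha$ with $P$ in the head. If $\alpha$ has the form~\eqref{eq:nf-horn} with body predicates $P_i$, $i\in I$ and $|I|\le R$, then computing $T_P^\alpha$ is a temporal join of the $|I|$ tables $T_{P_i}$, followed by the selections enforcing repeated variables and constants, followed by a projection; by the temporal-join lemma and Lemma~\ref{lemma:proj}, using $|T_{P_i}|_o\le n$ and $|T_{P_i}|_t\le m$, this runs in time $O(\prod_i|T_{P_i}|_o^2\times\sum_i|T_{P_i}|_t)=O(n^{2R}Rm)$ and yields a table with $|T_P^\alpha|_o\le n^R$, $|T_P^\alpha|_t\le Rm$, hence of size $\le n^R Rm$. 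If $\alpha$ has the form~\eqref{eq:nf-from-box}, the work is just the interval arithmetic prescribed by $(\boxplus_{\range})$/$(\boxminus_{\range})$ — which presupposes $T_{P_1}$ coalesced, obtained via Lemma~\ref{lemma:coal} — plus a projection, costing $O(n^2m)$ and giving size $\le nm$. If $\alpha$ has the form~\eqref{eq:nf-to-box}, one mimics $(\Si_{\range})$/$(\U_{\range})$: form closures, temporally join $T_{P_1}^c$ with $T_{P_2}$, apply $\po\range$ (or $\mo\range$), temporally join the result with $T_{P_1}^c$ a second time, and project; here the key observation is that the second join is against a table over columns already present in the intermediate result, so it acts as a natural join on shared columns and does not inflate the object dimension beyond $n^2$, landing at time $O(n^2m)$ and size $\le n^2m$. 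Since the degree of~\eqref{eq:nf-to-box} is $2$ and of~\eqref{eq:nf-from-box} is $1$, all three cases are subsumed by the uniform estimate: time $O(n^{2R}Rm)$ and size $O(n^R Rm)$ per rule, with~\eqref{eq:nf-horn} supplying the dominant term.

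Finally I would form $T_P=\bigcup_\alpha T_P^\alpha$ over the $K$ rules. Since each $T_P^\alpha$ already respects $\preceq$ within every object group, the union can be produced by a single $K$-way merge, linear in the total input size, so it costs $O(K\cdot n^R Rm)$, safely inside the target bound; and the sizes add, giving $|T_P|\le K\cdot n^R Rm=n^R mRK$. Altogether the computation consists of $K$ rule evaluations at $O(n^{2R}Rm)$ each plus the merge, i.e.\ $O(n^{2R}mRK)$. The only genuinely fiddly point — everything else being substitution into already-proved lemmas — is the bookkeeping of the $|\cdot|_o$ and $|\cdot|_t$ dimensions through the chained operations: one must check that no intermediate table in the $\Si/\U$ case pushes $|\cdot|_o$ past $n^R=n^2$ (this is exactly the natural-join observation above) and that projection does not scramble the row order — the pathology flagged just before Lemma~\ref{lemma:proj} — so that TOA really is maintained end to end and the concluding linear-time merge is legitimate.
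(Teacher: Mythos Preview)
Your proposal is correct and follows essentially the same approach as the paper: the paper does not give a standalone proof of this lemma but simply states it as the upshot of the per-rule time and size estimates established in the preceding paragraphs, together with the observation that $T_P$ is obtained as the union of the $T_P^\alpha$. You have faithfully substituted the uniform parameters $n$, $m$, $R$ into those estimates and spelled out the union step via the $K$-way merge, which is exactly the intended argument; your additional remarks on TOA preservation and on the object dimension in the $\Si/\U$ case are careful bookkeeping that the paper leaves implicit.
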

To compute the table for the goal $Q$, we iterate the described
procedure as many times as the length of the longest chain of
predicates in the dependence relation $\lessdot$ for $\Pi$. Thus, we obtain:
\begin{theorem}
  Let $m$ be the maximum of $|T_{P}|_t$ among the extensional predicates $P$, and $n$ the maximum of $|T_{P}|_o$ among those $P$. The overall time required to compute the goal predicate $Q$ of $\Pi$ is exponential in the size of $\Pi$, polynomial in $n$, and linear in $m$.
\end{theorem}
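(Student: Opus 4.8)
The plan is to obtain the theorem as a direct consequence of the preceding (per-predicate) lemma by iterating the bottom-up construction along the dependence relation $\lessdot$ and tracking how the two size measures $|\cdot|_t$ and $|\cdot|_o$ propagate. Since $\Pi$ is nonrecursive, $\lessdot$ is acyclic, so I would process the predicates of $\Pi$ in a $\lessdot$-compatible order: the extensional predicates $P$ come first, with $|T_P|_t \le m$ and $|T_P|_o \le n$ by assumption, and every intensional predicate is handled only after all predicates below it. There are at most $|\Pi|$ predicates, and the number of rounds needed to reach the goal $Q$ is at most $\mathsf{depth}(\Pi) \le |\Pi|$.

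Next I would prove, by induction on $\mathsf{depth}_\Pi(P)$, two bounds. For the temporal size: inspecting the construction of $T_P$, each table operation it uses --- coalescing (Lemma~\ref{lemma:coal}), temporal join, projection (Lemma~\ref{lemma:proj}), union (Lemma~\ref{lemma:union}), and the interval arithmetic for $\iota \pc \range$, $\iota \mc \range$, $\cdot^c$ --- yields a table whose $|\cdot|_t$ is bounded by the \emph{sum} of the $|\cdot|_t$'s of its inputs (no operation multiplies temporal sizes). Hence, if $m_P$ denotes the maximum of $|T_{P'}|_t$ over the immediate $\lessdot$-predecessors $P'$ of $P$, then $|T_P|_t \le KR\,m_P$, where $K$ is the number of rules with $P$ in the head and $R$ the maximal degree among them; unrolling the induction gives $|T_Q|_t \le |\Pi|^{O(\mathsf{depth}(\Pi))} m$, which is linear in $m$ (with a multiplicative factor depending on $\Pi$ alone). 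For the object size: a temporal join of the body tables of a rule of degree $R$ multiplies their $|\cdot|_o$'s, while projection does not increase $|\cdot|_o$ and union adds them, so $|T_P|_o \le K\,n_P^{R}$ with $n_P$ the maximum of $|T_{P'}|_o$ over the immediate predecessors; unrolling gives $|T_Q|_o \le (Kn)^{R^{\mathsf{depth}(\Pi)}}$, a polynomial in $n$ of degree depending only on $\Pi$ --- crucially, this bound involves no timestamps, so it is independent of $m$.

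Finally, I would assemble the running-time bound. By the preceding lemma, computing $T_P$ from the already-computed tables of its immediate predecessors takes time $O(n_P^{2R} m_P R K)$. Substituting the inductive bounds $m_P \le |\Pi|^{O(\mathsf{depth}(\Pi))} m$ and $n_P \le (Kn)^{R^{\mathsf{depth}(\Pi)}}$ shows that each such step costs a quantity that is linear in $m$, polynomial in $n$ (the exponent being a function of $\Pi$ only), and exponential in $|\Pi|$. Summing these costs over the at most $|\Pi|$ intensional predicates, processed in $\le \mathsf{depth}(\Pi)$ rounds, multiplies the bound by a factor polynomial in $|\Pi|$, which preserves all three asymptotics; this gives the stated bound on the time to compute the table for $Q$.

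The main obstacle I expect is making the ``linear in $m$'' claim airtight. One must verify that \emph{no} operation occurring in the bottom-up procedure ever multiplies two temporal sizes together --- only object sizes are multiplied, and only by temporal join --- and, in particular, that coalescing (which is what guarantees $|T^*|_t \le |T|_t$ rather than a quadratic blow-up) is invoked exactly where rules of the form~\eqref{eq:nf-to-box} and~\eqref{eq:nf-from-box} require a coalesced input, as stipulated in the description of the algorithm. Once this structural observation is secured, the precise polynomial dependence on $n$ and the precise exponential dependence on $|\Pi|$ of the accumulated constants are routine to extract from the unrolling above, so I would not dwell on the exact exponents.
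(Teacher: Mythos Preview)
Your proposal is correct and takes essentially the same approach as the paper: the paper's own argument is a one-sentence remark that one simply iterates the preceding per-predicate lemma along the longest $\lessdot$-chain in $\Pi$, and your write-up is a careful unrolling of exactly that iteration, tracking $|\cdot|_t$ and $|\cdot|_o$ separately using the bounds from the temporal-join, coalescing, projection, and union lemmas. The paper does not spell out the inductive bookkeeping you provide, so your version is more detailed but not methodologically different.
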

Note that if all $T_P$ are extracted from one table $\mathcal{R}$, as in our use-cases, then $n$ corresponds to the number of individual tuples in $\mathcal{R}$, whereas $m$ to the number of temporal intervals. It is to be emphasised that, in practice, programs tend to be small, and the number of individual constants is also small compared to the number of temporal intervals. The theorem above explains the linear patterns in
our experiments below, where the size of individual tuples is
fixed.


\subsection{Implementation in SQL}
\label{subsec:imp-in-sql}

Now, we show how to rewrite a given $\nrdMTL$ query $(\Pi, Q(\avec{\tau})@x)$ with $\Pi$ in normal form
\eqref{eq:nf-horn}--\eqref{eq:nf-from-box} to an SQL query computing
the certain answers $(\avec{c},\iota)$ to the query with
\emph{maximal} intervals $\iota$. We illustrate the idea by a (relatively) simple example.

Consider the $\nrdMTL$ query $(\Pi, \mathsf{HeatAffectedCounty}(\mathsf{county})@x)$, where
\begin{align*}
  \Pi = \{ &     \boxminus_{[0,24h]}  \mathsf{ExcessiveHeat}(v) \leftarrow  \boxminus_{[0,24h]} \mathsf{TempAbove24}(v) \land    \diamondminus_{[0,24h]} \mathsf{TempAbove41}(v),
\\
&   \mathsf{HeatAffectedCounty}(v) \leftarrow \mathsf{LocatedInCounty}(u,v) \land  \mathsf{ExcessiveHeat}(u) \}
\end{align*}
is part of the meteorological ontology from Section~\ref{sec:uc}. First, we transform $\Pi$ to normal form:
\begin{align*}
   \Pi = \{ & \mathsf{ExcessiveHeat}(v) \leftarrow \diamondplus_{[0,24h]} \mathsf{X} (v), \ \     \mathsf{X}(v) \leftarrow \mathsf{Y}(v) \land \mathsf{Z}(v),
   \\
    & \mathsf{Y}(v) \leftarrow \boxminus_{[0,24h]} \mathsf{TempAbove24}(v),\ \ \mathsf{Z}(v) \leftarrow \diamondminus_{[0,24h]} \mathsf{TempAbove41}(v),
\\
&   \mathsf{HeatAffectedCounty}(v) \leftarrow \mathsf{LocatedInCounty}(u,v) \land  \mathsf{ExcessiveHeat}(u) \}.
\end{align*}
%
%
We regard $\mathsf{TempAbove24}$, $\mathsf{TempAbove41}$, $\mathsf{LocatedInCounty}$ as extensional predicates given by the tables $T_{\mathsf{TempAbove24}}$, $T_{\mathsf{TempAbove41}}$, $T_{\mathsf{LocatedInCounty}}$. The first two of these tables have  columns $\mathsf{station\_id}, \mathsf{ledge}, \mathsf{redge}$, and the third one $\mathsf{station\_id}, \mathsf{county}, \mathsf{ledge}, \mathsf{redge}$. To simplify presentation, we omit the columns $\mathsf{lpar}$ and $\mathsf{rpar}$ used in the previous section and assume that all the temporal intervals take the form $(t,t']$; see Section~\ref{sec:uc}.

For each predicate $P$ in $\Pi$, we also create a view (temporary
table) $V_P^*$
with the same columns as $T_P$. We set
$V_P^* = \mathsf{coalesce}(T_P)$, where $\mathsf{coalesce}$ is a query
that implements coalescing in SQL\footnote{It should not be confused with the standard
  coalesce function in SQL that returns the first of its
  arguments that is not null, or null if all of the arguments are null.} We explain the idea behind this query for a temporal table $T$ (as
mentioned above, we omit columns $\mathtt{lpar}, \mathtt{rpar}$). For
a moment of time $t$ occurring in $T$, we denote by $b^\geq(T,t)$ the
number of $i$ such that $T[i, \mathtt{ledge}] \geq t$, and by
$e^\geq(T,t)$ the number of $i$ such that
$T[i, \mathtt{redge}] \geq t$; the numbers $b^\leq(T,t)$ and
$e^\leq(T,t)$ are defined analogously. It can be readily seen that
every $t$ in $T[\mathtt{ledge}]$ such that $b^\geq(T,t) = e^\geq(T,t)$
is the beginning of some interval in the coalesced table
$T^*$. Similarly, every $t'$ in $T[\mathtt{redge}]$ such that
$b^\leq(T,t') = e^\leq(T,t')$ is the end of some interval in
$T^*$. The coalesced intervals of $T^*$ can be then obtained as pairs
$(t,t'']$, where $t$ is as above and $t''$ is the minimum over those
$t'$ defined above that are $\ge t$. Thus, to coalesce
$T_{\mathsf{TempAbove24}}$ we first use the query
\begin{align*}
  V_l = ~& \mathtt{SELECT} \ T.\mathsf{station\_id} \ \mathtt{AS}\ \mathsf{station\_id},\ T.\mathsf{ledge} \ \mathtt{AS}\ \mathsf{ledge}\  \mathtt{FROM}\ T_{\mathsf{TempAbove24}}\ T\ \mathtt{WHERE}\\
& (\mathtt{SELECT\ COUNT(*)\ from} \ T_{\mathsf{TempAbove24}}\ S \ \mathtt{WHERE} \ S.\mathsf{ledge} \geq \ T.\mathsf{ledge} \ \mathtt{AND}\\
 & \quad \quad S.\mathsf{station\_id} = T.\mathsf{station\_id}) =\\
& (\mathtt{SELECT\ COUNT(*)\ from} \ T_{\mathsf{TempAbove24}}\ S \ \mathtt{WHERE} \ S.\mathsf{redge} \geq \ T.\mathsf{ledge} \ \mathtt{AND}\\
& \quad \quad S.\mathsf{station\_id} = T.\mathsf{station\_id}),
\end{align*}
which extracts the pairs $(n,t)$, where $t$ is as described above and $\mathsf{station\_id} = n$. An analogous query can be used to produce $V_r$, a table of pairs $(n,t')$, where $t'$ is as described above and $\mathsf{station\_id} = n$. Finally, we set
\begin{align*}
   V_{\mathsf{TempAbove24}}^* =~  &\mathtt{SELECT} \ V_l.\mathsf{station\_id} \ \mathtt{AS}\ \mathsf{station\_id},\ V_l.\mathsf{ledge} \ \mathtt{AS}\ \mathsf{ledge}, \\
   & \quad \quad (\mathtt{SELECT} \ \mathtt{MIN} \ (V_r.\mathsf{redge}) \ \mathtt{FROM}\  V_r \ \mathtt{WHERE} \ V_r.\mathsf{redge} \geq V_l.\mathsf{ledge}\ \mathtt{AND} \\
   & \quad \quad \quad \quad \ V_l.\mathsf{station\_id} = V_r.\mathsf{station\_id})\ \mathtt{AS}\ \mathsf{redge} \\
   &\mathtt{FROM}\ V_l.
\end{align*}
A more efficient variant of this algorithm that uses window functions with sorting and partitioning allows us to avoid joins used, e.g., in the query $V_l$~\cite{DBLP:conf/dexa/ZhouWZ06}. We will refer to this algorithm in Section~\ref{sec:eval} as a \emph{standard SQL} algorithm. In contrast to the imperative algorithm described in the proof of Lemma~\ref{lemma:coal}, this algorithm can be implemented using standard SQL operators.

  In addition, for each intensional predicate $P$ of $\Pi$, we create a view $V_P$ defined by an SQL query that reflects the definitions of $P$  in $\Pi$. For example, we set
\begin{align*}
  V_{\mathsf{Y}} = \ &\mathsf{SELECT}\ V_{\mathsf{TempAbove24}}^*.\mathsf{station\_id} \
  \mathsf{AS} \  \mathsf{station\_id} , \\
  &\qquad V_{\mathsf{TempAbove24}}^*.\mathsf{ledge}+24h \ \mathsf{AS} \ \mathsf{ledge},
   \ V_{\mathsf{TempAbove24}}^*.\mathsf{redge} \ \mathsf{AS} \ \mathsf{redge}, \\
  & \mathsf{FROM} \ V_{\mathsf{TempAbove24}}^* \ \mathsf{WHERE}\  V_{\mathsf{TempAbove24}}^*.\mathsf{redge} - V_{\mathsf{TempAbove24}}^*.\mathsf{ledge} \geq 24h.
\end{align*}
This query implements the $\iota +^c \range$ operation for $\range = [0, 24h]$, and  the $\mathsf{WHERE}$ clause checks whether $\range \sqsubseteq \iota$ holds, where $\iota = ( V_{\mathsf{TempAbove24}}^*.\mathsf{ledge},V_{\mathsf{TempAbove24}}^*.\mathsf{redge}]$. We then set  $V_{\mathsf{Y}}^* = \mathsf{coalesce}(V_{\mathsf{Y}})$ and note that the query
\begin{align} \label{eq:select-for-y}
  \mathsf{SELECT}\
  \mathsf{station\_id} , \
  \mathsf{ledge},  \mathsf{redge} \
    \mathsf{FROM} \ V_{\mathsf{Y}}^*,
\end{align}
when evaluated over the tables $T_{\mathsf{TempAbove24}}$, $T_{\mathsf{TempAbove41}}$  and  $T_{\mathsf{LocatedInCounty}}$, would produce the answers to the query $(\Pi, \mathsf{Y}(\mathsf{station\_id}, \mathsf{county})@ x)$ with \emph{maximal} intervals $\iota = (\iota_b, \iota_e]$, where $\iota_b$ corresponds to $\mathsf{ledge}$, and $\iota_e$ to $\mathsf{redge}$.

\begin{figure}
 \centering
\begin{lstlisting}[keywords={foreach,for,each,if,in,return,else,function, SELECT, FROM, WHERE, AND, AS, UNION},keywordstyle=\textbf,basicstyle=\ttfamily\small,frame=tb]
function ans($q(\avec{v}\underline{,x}) =Q(\avec{\tau})\underline{@x}$,  $\Pi$, $\mathcal{M}$, $\D$):
  $\mathcal{V}$ = views($\Pi$, $\mathcal{M}$, $Q$)
  $ans$ = SELECT projects($\avec{v},\avec{\tau}$), $\underline{\avec{\iota} \text{ AS } x}$ FROM $V_{Q}$
  return eval($ans \land \mathcal{V}$, $\D$)

function views($\Pi$, $\mathcal{M}$, $Q$):
  $\mathcal{V} = \emptyset$
  for each predicate $P$ defined by $\mathcal{M}$:
    $\mathcal{V}$ = $\mathcal{V}\cup \big\{V_P$=$\underline{coalesce}$(UNION($\{$SELECT projects$(\avec{\tau},\avec{\tau}),$  $\underline{T_1.\avec{\iota} \text{ AS } \avec{\iota} }$ FROM $sql $ AS $ T_1$
                                                    $\mid P(\avec{\tau})\underline{@\avec{\iota}}{\leftarrow}sql\in\mathcal{M}\}$))$\big\}$
  let $\lessdot$ be the dependence relation on the predicates $\text{in}$ $\Pi$
  for each intensional predicate $P$ with $Q\lessdot P$ or $Q=P$:
    $\mathcal{V}$ = $\mathcal{V}$ $ \cup$ $\{ V_P$=$\underline{coalesce}$(UNION($\{\text{view}(\rul,P) \mid \rul \in \Pi_P\}$))$\}$
  return $\mathcal{V}$

function view($\rul, P$):
  if $\rul =  P(\avec{\tau})\leftarrow \boxplus_{\range} P_1(\avec{\tau}_1)$:
    $V_P^{\rul}$ = SELECT projects($\avec{\tau}, \avec{\tau_1}$), $\underline{T_1.\avec{\iota} \mc \range \text{ AS } \avec{\iota}}$
          FROM $V_{P_1}$ AS $T_1$ WHERE join-cond($\avec{\tau_1})$  $\underline{\textbf{AND } \range \sqsubseteq {T_1}.\avec{\iota}}$
  else if $\rul =  P(\avec{\tau})\leftarrow \boxminus_{\range} P_1(\avec{\tau}_1)$:
    $V_P^{\rul}$ = SELECT projects($\avec{\tau}, \avec{\tau_1}$), $\underline{T_1.\avec{\iota} \pc \range \text{ AS }\avec{\iota}}$
          FROM $V_{P_1}$ AS $T_1$ WHERE join-cond($\avec{\tau_1})$  $\underline{\textbf{AND } \range \sqsubseteq {T_1}.\avec{\iota}}$
  else if $\rul = P(\avec{\tau}) \leftarrow P_1(\avec{\tau}_1) \Si_\range P_2(\avec{\tau}_2)$:
    $V_P^{\rul}$ = SELECT projects($\avec{\tau}$), $\underline{((T_1.\avec{\iota}^c \cap T_2.\avec{\iota}) \po \range) \cap T_1.\avec{\iota}^c \text{ AS } \avec{\iota}}$
         FROM $V_{P_1}$ AS $T_1$, $V_{P_2}$ AS $T_2$
         WHERE join-cond($\avec{\tau_1}, \avec{\tau_2})$) $\underline{\textbf{AND } (({T_1}.\avec{\iota}^c \cap {T_2}.\avec{\iota}) \po \range) \cap {T_1}.\avec{\iota}^c \neq \emptyset}$
  else if $\rul = P(\avec{\tau}) \leftarrow P_1(\avec{\tau}_1) \U_\range P_2(\avec{\tau}_2)$:
    $V_P^{\rul}$ = SELECT projects$(\avec{\tau},\avec{\tau_1},\avec{\tau_2})$, $\underline{((T_1.\avec{\iota}^c \cap T_2.\avec{\iota}) \mo \range) \cap T_1.\avec{\iota}^c \text{ AS } \avec{\iota}}$
         FROM $V_{P_1}$ AS $T_1$, $V_{P_2}$ AS $T_2$
         WHERE join-cond($\avec{\tau_1}, \avec{\tau_2}$) $\underline{\textbf{AND } ((T_1.\avec{\iota}^c \cap T_2.\avec{\iota}) \mo \range) \cap T_1.\avec{\iota}^c \neq \emptyset}$
  else if $\rul = P(\avec{\tau}) \leftarrow P_1(\avec{\tau_1}), \ldots, P_n(\avec{\tau_n})$:
    $V_P^{\rul}$ = SELECT projects($\avec{\tau}, \avec{\tau_1}, \ldots, \avec{\tau_n}$), $\underline{T_1.\avec{\iota}\cap \ldots \cap T_n.\avec{\iota} \text{ AS }\avec{\iota}}$
         FROM $V_{P_1}$ AS $T_1$, $\cdots$, $V_{P_n}$ AS $T_n$
         WHERE join-cond($\avec{\tau_1}, \ldots, \avec{\tau_n}$) $\underline{\textbf{AND } {T_1}.\avec{\iota}\cap \ldots T_n.\avec{\iota} \neq \emptyset}$
  return $V_P^{\rul}$

function projects($\avec{\tau}, \avec{\tau_1}, \ldots, \avec{\tau_n}$):
  columns = {}
  for each $v_i \in \avec{\tau} = v_1, ..., v_m$:
    let $k,j$ be a pair of integers such that $\avec{\tau_k}[j] = v_i$
    columns.add($T_k$.$attr_j$ AS $attr_i$)
  return columns

function join-cond($\avec{\tau_1}, \ldots, \avec{\tau_n}$):
  cond = true
  for each pair of different positions $\avec{\tau_l}[i]$ and $\avec{\tau_r}[j]$ such that $\avec{\tau_l}[i] = \avec{\tau_r}[j]$
    cond = cond AND $(T_{l}.attr_{i} = T_r.attr_{j})$
  return cond
\end{lstlisting}
 \caption{The algorithm for evaluating $\nrdMTL$ queries in SQL.}
 \label{fig:sql-nrdmtl}
\end{figure}

We now explain how to construct queries for the concepts whose  definitions involve $\land$ using the example of $\mathsf{HeatAffectedCounty}$:
\begin{align*}
V_{\mathtt{HeatAffectedCounty}} = \ &
\mathtt{SELECT} \  V_{\mathtt{LocatedInCounty}}^*.\mathtt{county} \ \mathtt{AS} \  \mathtt{county}, \\
  &\qquad \mathtt{MX}(V_{\mathtt{LocatedInCounty}}^*.\mathtt{ledge},V_{\mathtt{ExcessiveHeat}}^*.\mathtt{ledge}) \ \mathtt{AS} \ \mathtt{ledge},\\
  &\qquad \mathtt{MN}(V_{\mathtt{LocatedInCounty}}^*.\mathtt{redge},V_{\mathtt{ExcessiveHeat}}^*.\mathtt{redge})\ \mathtt{AS} \ \mathtt{redge}\\
& \mathtt{FROM} \ V_{\mathtt{LocatedInCounty}}^*, V_{\mathtt{ExcessiveHeat}}^*
\end{align*}
\begin{align*}
                                  & \mathtt{WHERE}\ \mathtt{MX}(V_{\mathtt{LocatedInCounty}}^*.\mathtt{ledge},V_{\mathtt{ExcessiveHeat}}^*.\mathtt{ledge}) < \\
                                  & \qquad \mathtt{MN}(V_{\mathtt{LocatedInCounty}}^*.\mathtt{redge},V_{\mathtt{ExcessiveHeat}}^*.\mathtt{redge}) \\
&\qquad \mathtt{AND} \  V_{\mathtt{LocatedInCounty}}^*.\mathtt{county} = V_{\mathtt{ExcessiveHeat}}^*.\mathtt{county},
\end{align*}
where $\mathtt{MN}$ ($\mathtt{MX}$) is the function that returns the earliest (latest) of any two given date/time values (it can be implemented in SQL as a user-defined function, or using the $\mathtt{CASE}$ operator). Finally, we use a query similar to~\eqref{eq:select-for-y} over $V^*_{\mathsf{HeatAffectedCounty}}$ to produce the answers to $(\Pi, \avec{q}(\mathsf{county}, x))$.

We are mostly interested in the scenario where the tables $T_P$ are not available immediately, but extracted from raw timestamped data tables $R$ by means of mappings. In this case, we use views $V_P$ instead of $T_P$ defined over $R$. For example, if the raw data is stored in the table $\mathsf{Weather}$, we define the view:
\begin{align*}
V_{\mathsf{TempAbove24}} =~ & \ \mathtt{SELECT\ sid,\ \mathtt{ledge},\ \mathtt{redge}}\\ 
&\hspace*{0.1cm} \mathtt{FROM\ } (
\mathtt{SELECT\ station\_id\ AS\ } \mathtt{sid}\mathtt{,}\\
& \hspace*{1.8cm} \mathtt{LAG(date\_time, 1) \ OVER\ (w) \ AS \ } \mathtt{ledge}\mathtt{,}\\
& \hspace*{1.8cm} \mathtt{date\_time \ AS \ } \mathtt{redge}\\
                            & \hspace*{1.2cm} \mathtt{FROM \ } \mathsf{Weather}  \\
                            & \hspace*{1.2cm} \mathtt{WINDOW\ w\ AS \ (PARTITION\ BY \ station\_id\ ORDER \ BY \ date\_time) }\ \\
                            & \hspace*{0.1cm} \mathtt{) }\ \mathtt{tmp}\\
                            &\mathtt{WHERE \ air\_temp\_set\_1 \ >= \ 24}.
\end{align*}

Our general rewriting algorithm is outlined in Fig.~\ref{fig:sql-nrdmtl},
where the function $\mathtt{ans}$ produces an SQL query that computes
the certain answers to $(\Pi, Q(\avec{\tau})@x)$ (with maximal
intervals) by evaluating the query over the input database
$\D$.
%
The algorithm is a variation of the standard translation of non-recursive Datalog to
relational algebra---see, e.g., the work by \citeA{Ullman88-dbkb-v1}---extended with the operations on temporal intervals described above (they are underlined in Fig.~\ref{fig:sql-nrdmtl}).

It is to be noted that the `views'
introduced by the algorithm do not require modifying the underlying
database. They can be implemented in different ways: for example, by
using subqueries, common table expressions (CTEs), or temporary
tables. For the experiments in Section~\ref{sec:eval}, we use the
last approach, where temporary tables are generated on the fly
and exist only within a transaction.

\section{Use Cases}
\label{sec:uc}

We test the feasibility of OBDA with $\nrdMTL$ by querying Siemens turbine log data and Meso\-West weather data. In this section, we briefly describe these use cases;  detailed results of our experiments will presented in Section~\ref{sec:eval}.

\subsection{Siemens}
Siemens service centres store aggregated turbine sensor data  in tables such as \TB. The data comes with (not necessarily regular) timestamps $t_1,t_2,\dots$, and it is deemed that the values remain constant in every interval $[t_i,t_{i+1})$.
Using a set of mappings, we extract from these tables a data instance containing ground facts such as
\vspace*{-1mm}
\begin{align*}
&  \mathsf{ActivePowerAbove1.5(tb0)@[12{:}20{:}48,12{:}20{:}49)},\\
&  \mathsf{ActivePowerAbove1.5(tb0)@[12{:}20{:}49,12{:}20{:}52)},\\
&  \mathsf{RotorSpeedAbove1500(tb0)@[12{:}20{:}48,12{:}20{:}49)},\\
& \mathsf{MainFlameBelow0.1(tb0)@[12{:}20{:}48,12{:}20{:}52)}.
\end{align*}
For example, the first two of them are obtained from the table \TB{}
using the following SQL mapping $\mathcal{M}$:
\begin{align*}
& \mathsf{ActivePowerAbove1.5(tbid)@[ \mathtt{ledge},\mathtt{redge})} \leftarrow\\
& \hspace*{0.3cm} \mathtt{SELECT\ tbid,\ \mathtt{ledge},\ \mathtt{redge}\ 
FROM\ } (\\
& \hspace*{0.5cm} \mathtt{SELECT\ turbineId\ AS\ } \mathtt{tbid}\mathtt{,}\\
& \hspace*{0.8cm} \mathtt{LAG(dateTime, 1) \ OVER\ (w) \ AS \ } \mathtt{ledge}\mathtt{,}\\
& \hspace*{0.8cm} \mathtt{LAG(activePower, 1) \ OVER\ (w) \ AS \  lag\_activePower\mathtt{,}}\\
& \hspace*{0.8cm} \mathtt{dateTime \ AS \ } \mathtt{redge}\\
& \hspace*{0.5cm} \mathtt{FROM \ } \TB  \\
 & \hspace*{0.5cm} \mathtt{WINDOW\ w\ AS \ (PARTITION\ BY \ turbineId\ ORDER \ BY \ dateTime) }\ \\
& \hspace*{0.5cm} \mathtt{) }\ \mathtt{tmp} \ \mathtt{WHERE \ lag\_activePower \ > \ 1.5}
\end{align*}
In terms of the basic predicates above, we define more complex ones
that are used in queries posed by the Siemens engineers:
%
\begin{align*}
&\mathsf{NormalRestart}(v) \leftarrow{} \mathsf{NormalStart}(v) \land \diamondminus_{(0,1h]}\mathsf{NormalStop}(v),\\
& \mathsf{NormalStop}(v) \leftarrow
 \mathsf{CoastDown1500to200}(v) \land{}
\diamondminus_{(0,9m]}\bigl[\mathsf{CoastDown6600to1500}(v) \land{}\\
& \hspace{50mm} \diamondminus_{(0,2m]} \bigl(\mathsf{MainFlameOff}(v) \land {}
\diamondminus_{(0,2m]} \mathsf{ActivePowerOff}(v) \bigr)\bigr],\\
&\mathsf{MainFlameOff}(v) \leftarrow \boxminus_{[0s,10s]} \mathsf{MainFlameBelow0.1}(v),\\
&\mathsf{ActivePowerOff}(v) \leftarrow \boxminus_{[0s,10s]} \mathsf{MainPowerBelow0.15}(v),\\
&\mathsf{CoastDown6600to1500}(v) \leftarrow{} 
\boxminus_{[0s,30s]} \mathsf{RotorSpeedBelow1500}(v) \land {} \\
& \hspace{75mm} \diamondminus_{(0, 2m]} \boxminus_{(0,30s]} \mathsf{RotorSpeedAbove6600}(v) ,\\
&\mathsf{CoastDown1500to200}(v) \leftarrow{} 
\boxminus_{[0s,30s]} \mathsf{RotorSpeedBelow200}(v)  \land{} \\
& \hspace{75mm} \diamondminus_{(0, 9m]} \boxminus_{(0,30s]} \mathsf{RotorSpeedAbove1500}(v),\\
& \mathsf{NormalStart}(v) \leftarrow \mathsf{STCtoRUCReached}(v) \land{}  \diamondminus_{(0,30s]} \bigl[\mathsf{RampChange1\text{-}2Reached}(v) \land{}\\
& \hspace{22mm} \diamondminus_{(0,5m]} \bigl(\mathsf{PurgingIsOver}(v) \land{} \diamondminus_{(0,11m]} \bigl(\mathsf{PurgeAndIgnitionSpeedReached}(v) \land{}\\
& \hspace{85mm} \diamondminus_{(0,15s]} \mathsf{FromStandStillTo180}(v) \bigr)\bigr)\bigr],\\
&\mathsf{STCtoRUCReached}(v) \leftarrow \boxminus_{(0,30s]} \mathsf{RotorSpeedAbove4800}(v) \land{}\\
& \hspace{75mm} \diamondminus_{(0,2m]} \boxminus_{(0,30s]} \mathsf{RotorSpeedBelow4400(v)},\\
&\mathsf{RampChange1\text{-}2Reached}(v) \leftarrow  \boxminus_{(0s,30s]} \mathsf{RotorSpeedAbove4400}(v) \land{}\\
&\hspace{73mm} \diamondminus_{(0,6.5m]} \boxminus_{(0,30s]} \mathsf{RotorSpeedBelow1500}(v),\\
%
&\mathsf{PurgingIsOver}(v) \leftarrow{} \boxminus_{[0s,10s]} \mathsf{MainFlameOn}(v) \land {} \\
& \hspace{1mm} \diamondminus_{(0, 10m]}\bigr[ \boxminus_{(0,30s]} \mathsf{RotorSpeedAbove1260}(v) \land{} \diamondminus_{(0,2m]} \boxminus_{(0,1m]} \mathsf{RotorSpeedBelow1000}(v) \bigl],\\
&\mathsf{PurgeAndIgnitionSpeedReached}(v) \leftarrow{} \boxminus_{[0s,30s]} \mathsf{RotorSpeedAbove1260}(v)  \land{} \\
& \hspace{77mm} \diamondminus_{(0, 2m]} \boxminus_{(0,30s]} \mathsf{RotorSpeedBelow200}(v),\\
%
&\mathsf{FromStandStillTo180}(v) \leftarrow{} \boxminus_{[0s,1m]} \mathsf{RotorSpeedAbove180}(v)  \land{} \\
& \hspace{76mm} \diamondminus_{(0, 1.5m]} \boxminus_{(0,1m]} \mathsf{RotorSpeedBelow60}(v).
\end{align*}
\subsection{MesoWest}
The
Meso\-West (\url{http://mesowest.utah.edu/}) project makes
publicly available historical records of the weather stations across
the US showing such parameters of meteorological conditions as
temperature, wind speed and direction, amount of precipitation,
etc. Each station outputs its measurements with some periodicity, with
the output at time $t_{i+1}$ containing the accumulative (e.g., for
precipitation) or averaged (e.g., for wind speed) value over the
interval $(t_i,t_{i+1}]$.
The data comes in a table $\mathsf{Weather}$, which looks as follows:
\begin{center}
	\footnotesize
\addtolength{\tabcolsep}{-3pt}
\begin{tabular}{|c|c|c|c|c|c|c|}
\hline
stationId & dateTime & airTemp & windSpeed & windDir & hourPrecip &  \dots\\\hline
 & & \dots &  & & & \\
KBVY & 2013-02-15;15:14 & 8 & 45 & 10 & 0.05 & \\
KMNI & 2013-02-15;15:21 & 6 & 123 & 240 & 0& \\
KBVY & 2013-02-15;15:24 & 8 & 47 & 10 & 0.08 & \\
KMNI & 2013-02-15;15:31 & 6.7 & 119 & 220 & 0 & \\
 & & \dots & & & & \\\hline
\end{tabular}
\end{center}

One more table, $\mathsf{Metadata}$, provides some atemporal meta information about the stations:
\begin{center}
		\footnotesize
\addtolength{\tabcolsep}{-3pt}
\begin{tabular}{|c|c|c|c|c|c|}
\hline
stationId & county & state & latitude & longitude &   \dots\\\hline
 & & \dots &  & &  \\
KBVY & Essex & Massachusetts & 42.58361 & -70.91639 &  \\
KMNI & Essex & Massachusetts & 33.58333  & -80.21667 &  \\
 & & \dots & & &  \\\hline
\end{tabular}
\end{center}
The monitoring and historical analysis of the weather involves answering  queries such as `find showery counties, where one station observes precipitation at the moment, while another one does not, but observed precipitation 30 minutes ago'\!.

We use SQL mappings over the $\mathsf{Weather}$ table similar to those in the Siemens case to obtain ground atoms such as
\begin{align*}
&   \mathsf{NorthWind(KBVY)@(15{:}14,15{:}24]},\\
&   \mathsf{HurricaneForceWind(KMNI)@(15{:}21,15{:}31]},\\
&   \mathsf{Precipitation(KBVY)@(15{:}14,15{:}24]},\\
&   \mathsf{TempAbove0(KBVY)@(15{:}14,15{:}24]},\\
&   \mathsf{TempAbove0(KMNI)@(15{:}21,15{:}31]}
\end{align*}
%
(according to the standard definition, the hurricane force wind is above 118 km/h). On the other hand, mappings to the $\mathsf{Metadata}$ table provide atoms such as
\begin{align*}
&   \mathsf{LocatedInCounty(KBVY,Essex)@(-\infty, \infty)},\\
&   \mathsf{LocatedInState(KBVY,Massachusetts)@(-\infty, \infty)}.
\end{align*}
Our ontology contains definitions of various meteorological terms:
%
\begin{align*}
& \mathsf{ShoweryCounty}(v) \leftarrow \mathsf{LocatedInCounty}(u_1, v) \land{} 
\mathsf{LocatedInCounty}(u_2, v) \land{}\\
& \hspace*{3.5cm} \mathsf{Precipitation}(u_1) \land{}  \mathsf{NoPrecipitation}(u_2) \land {}
\diamondminus_{(0,30m]} \mathsf{Precipitation}(u_2),\\
&   \boxminus_{[0,1h]}  \mathsf{Hurricane}(v) \leftarrow  \boxminus_{[0,1h]} \mathsf{HurricaneForceWind}(v),\\
&   \mathsf{HurricaneAffectedState}(v) \leftarrow \mathsf{LocatedInState}(u,v) \land{}  \mathsf{Hurricane}(u),
\end{align*}
\begin{align*}
&     \boxminus_{[0,24h]}  \mathsf{ExcessiveHeat}(v) \leftarrow  \boxminus_{[0,24h]} \mathsf{TempAbove24}(v) \land{}      \diamondminus_{[0,24h]} \mathsf{TempAbove41}(v),\\
&   \mathsf{HeatAffectedCounty}(v) \leftarrow \mathsf{LocatedInCounty}(u,v) \land   \mathsf{ExcessiveHeat}(u),\\
&  \mathsf{CyclonePatternState}(v) \leftarrow \mathsf{LocatedInState}(u_1, v) \land{} \mathsf{LocatedInState}(u_2, v) \land{} \\ & \hspace*{3.65cm} \mathsf{LocatedInState}(u_3, v) \land {}
 \mathsf{LocatedInState}(u_4, v) \land \mathsf{EastWind}(u_1) \land {} \\
& \hspace*{5.9cm} \mathsf{NorthWind}(u_2) \land \mathsf{WestWind}(u_3) \land{} \mathsf{SouthWind}(u_4).
\end{align*}

\section{Experiments}
\label{sec:eval}

To evaluate the performance of the SQL queries produced by the $\nrdMTL$ rewriting algorithm outlined in Section~\ref{subsec:imp-in-sql}, we developed two benchmarks for our use
cases.
We ran the experiments on an HP Proliant server with 2 Intel Xeon
X5690 Processors (with 12 logical cores at 3.47GHz each), 106GB of RAM
and five 1TB 15K RPM HD. We used both PostgreSQL 9.6 and the SQL
interface~\cite{DBLP:conf/sigmod/ArmbrustXLHLBMK15} of Apache Spark
2.1.0. Apache Spark is a cluster-computing framework that provides
distributed task dispatching, scheduling and data parallelisation. For
each of these two systems, we provided two different implementations,
imperative and standard SQL, which diverge in the computation of
maximal intervals; see
Section~\ref{implementing}.
%
We ran all the queries  with a timeout of 30 minutes.

%


\subsection{Siemens}
Siemens provided us with a sample of data for one running
turbine, which we denote by $\mathsf{tb0}$, over 4 days in the form of the
table $\TB$.  The data table was rather sparse, containing a lot of  nulls, because different sensors recorded data at
different frequencies. For example, $\mathsf{ActivePower}$ arrived  most frequently with average periodicity of $7$ seconds, whereas the values for the field $\mathsf{MainFlame}$ arrived most rarely, every $1$ minute on average.
We replicated this sample to imitate the data for one turbine over 10
different periods ranging from 32 to 320 months. The statistics of the
data sets are given in Tables~\ref{tab:data-stat-siemens} and~\ref{tab:data-stat-siemens-extra}.
%
We evaluated four queries
$\mathsf{ActivePowerTrip}(\mathsf{tb0})@x$,
$\mathsf{NormalStart}(\mathsf{tb0})@x$,
$\mathsf{NormalStop}(\mathsf{tb0})@x$ and
$\mathsf{NormalRestart}(\mathsf{tb0})@x$. The statistics of returned answers  is given in Table~\ref{tab:answer-stat-siemens}.

\begin{table}[h]
		\small
        \centering
        \addtolength{\tabcolsep}{-1pt}
\begin{tabular}{|c|c|c|c|c|c|c|c|c|c|c|}\hline
\# of months & 32 & 64  & 96 & 128 & 159 & 191 & 223 & 255 & 287 & 320 \\\hline
 \# of rows (approx.) &13 M&26 M&39 M&52 M&65 M&77 M&90 M&103 M&116 M&129 M\\\hline
size (GB) in CSV &0.57&1.2&1.7&2.3&2.9&3.4&4.0&4.5&5.1&5.7\\\hline
\end{tabular}
\caption{Siemens data for one turbine.}
\label{tab:data-stat-siemens}
\end{table}


The execution times for the Siemens use case are given in
Fig.~\ref{fig:siemens}.
\begin{figure}[ht]
\centering
\includegraphics[width=1\textwidth]{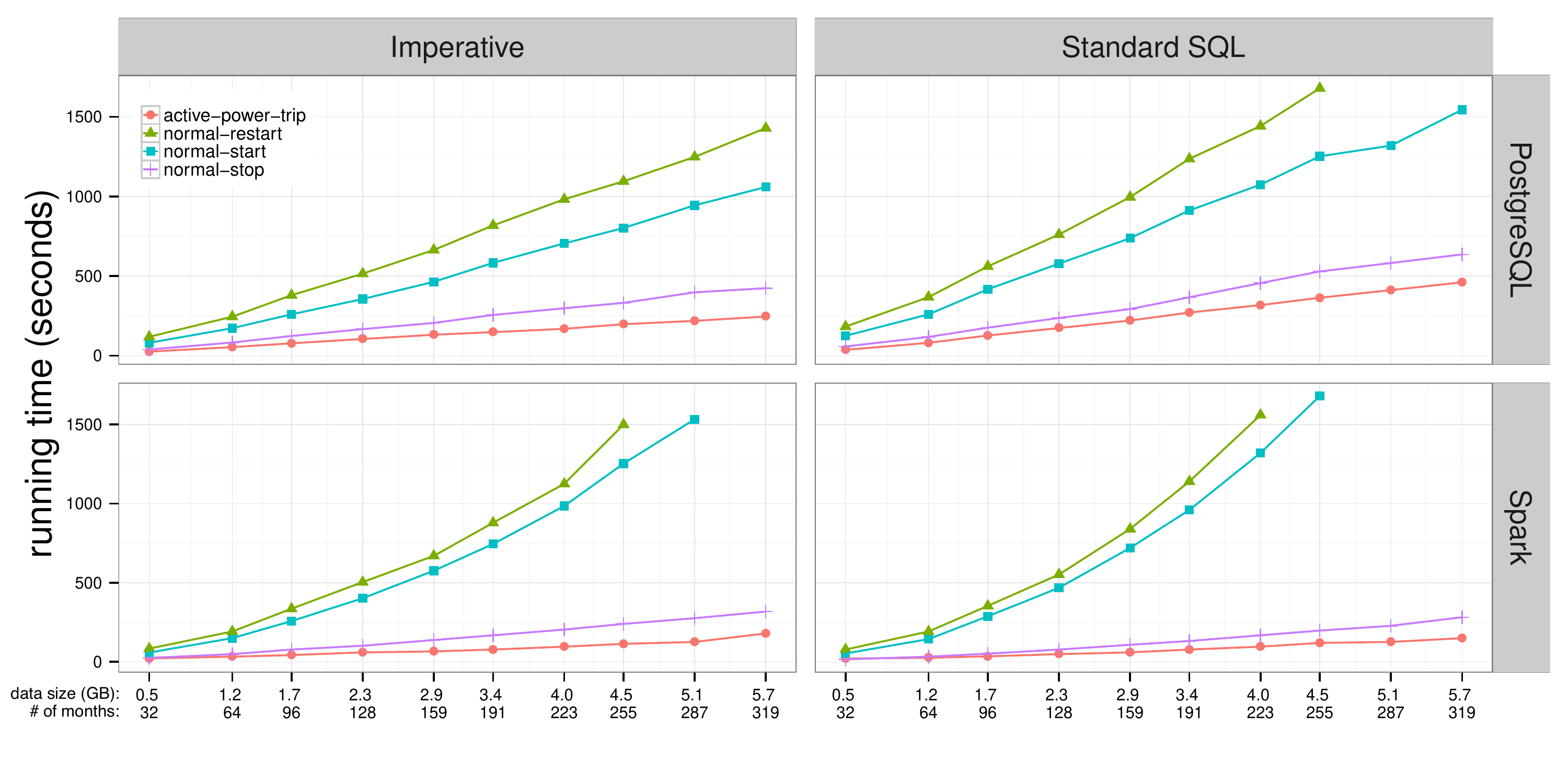}
\caption{Experiment results for the Siemens use case.}
\label{fig:siemens}
\end{figure}
Although Apache Spark was designed to perform
efficient parallel computations, it failed to take advantage of this
feature due to the fact that the Siemens data could not be partitioned by
mapping each part to a separate core.
%
PostgreSQL 9.6 also supports parallel query execution in some
cases. However, as many operators (e.g., scans of temporary tables) in our queries are classified either
`parallel unsafe' or `parallel restricted' in the parallel safety
documentation \cite{postgresql},
the query planner failed to produce any parallel execution strategy in
our case.
%
%
The reason why PostgreSQL outperformed Apache Spark is that the latter does not provide a convenient way to define proper indexes over temporary tables, which leads to quadratically growing running times.
%
On the other hand, PostgreSQL shows
linear growth
 in the size of
 data
(confirming theoretical results since we deal with a single turbine).

Note that the normal restart (start) query timeouts on the data for
more than 18 (respectively, 21) years, which is more than enough for
the monitoring and diagnostics tasks at Siemens, where the two most common
application scenarios for sensor data analytics are daily monitoring (that is, analytics of high-frequency data of the previous 24 hours)  and
fleet-level analytics of key-performance indicators over one year. In
both cases, the computation time of the results is far less a crucial
cost factor than the lead-time for data preparation.

\subsection{MesoWest}
In contrast to the Siemens case, the weather
tables contain very few nulls. Normally, the data values arrive with periodicity from 1 to 20 minutes. We tested the performance of our
algorithm by increasing $(i)$ the temporal span (with some necessary increase of the spatial spread) and $(ii)$ the
geographical spread of data.
For $(i)$, we took the New
York state data for the 10 continuous periods between 2005 and
2014; see Tables~\ref{tab:data-stat-ny} and~\ref{tab:data-stat-ny-extra}. As each year around 70 new weather stations were added, our 10 data samples increase more than linearly in size.
\begin{table}
		\small
        \centering
\begin{tabular}{|c|c|c|c|c|c|c|c|c|c|c|}\hline
\# of years & 1 & 2  & 3 & 4 & 5 & 6 & 7 & 8 & 9 & 10 \\
\# of stations & 229 & 306 & 370 & 441 & 484 & 542 & 595 & 643 & 807 & 874\\\hline
\# of rows (approx.) &4 M &11 M&19 M&27 M&36 M&49 M&63 M&79 M&99 M&124 M\\\hline
size (GB) in CSV &0.2&0.6&1.1&1.6&2.1&2.9&3.8&4.8&5.9&7.4\\\hline
\end{tabular}
\caption{NY weather stations from 2005 to 2014.}
\label{tab:data-stat-ny}
\end{table}
\begin{table}[h]
	\small
        \centering
        \addtolength{\tabcolsep}{-1.5pt}
\begin{tabular}{|c|c|c|c|c|c|c|c|c|c|c|}\hline
\multirow{2}{*}{states}  & DE, &  +NY &  +MD &  +NJ,&  +MA, &  +LA, &  +ME, &  +NH, &  +MS,SC, &  +KY, \\
&GA&&&RI&CT&VT&WV&NC&ND&SD\\\hline
\# of states & 2 & 3 & 4 & 6 & 8 & 10 & 12 & 14 & 17 & 19 \\
\# of stations & 408 & 659 & 1120 & 1476 & 1875 & 2305 & 2669 & 3019 & 3508 & 4037\\\hline
\# of rows (approx.) &17 M&32 M&41 M&52 M&67 M&81 M&93 M&106 M&121 M&141 M\\\hline
size (GB) in CSV&0.9&1.9&2.5&3.1&4.0&4.8&5.5&6.4&7.2&8.3\\\hline
\end{tabular}
\caption{Weather data for 1--19 states in 2012.}
\label{tab:data-stat-w2012}
\end{table}
For $(ii)$, we fixed the time
period of one year (2012) and linearly increased the data from 1 to 19
states (NY, NJ, MD, DE, GA, RI, MA, CT, LA, VT, ME, WV, NH, NC, MS, SC, ND,
KY, SD); see Table~\ref{tab:data-stat-w2012} and~\ref{tab:data-stat-w2012-extra}.
In both cases, 
 we executed four $\nrdMTL$ queries
$\mathsf{ShoweryCounty}(v)@x$,
$\mathsf{Hurricane}\mathsf{AffectedState}(\text{NY})@x$,
$\mathsf{Heat}\mathsf{AffectedCounty}(v)@x$,
$\mathsf{Cyclone}\mathsf{PatternState}(\text{NY})@x$.
The statistics of the returned answers is shown in Tables~\ref{tab:answer-stat-ny} and~\ref{tab:answer-stat-w2012}.

\begin{figure}
	\centering
	\includegraphics[width =\linewidth]{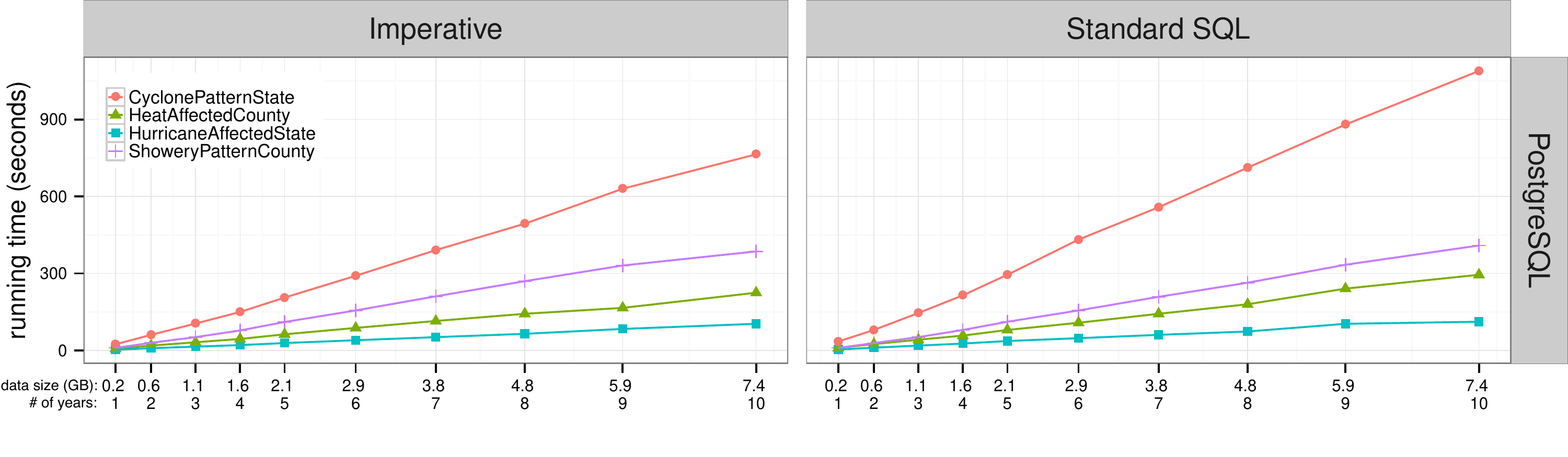}
	
	\includegraphics[width =\linewidth]{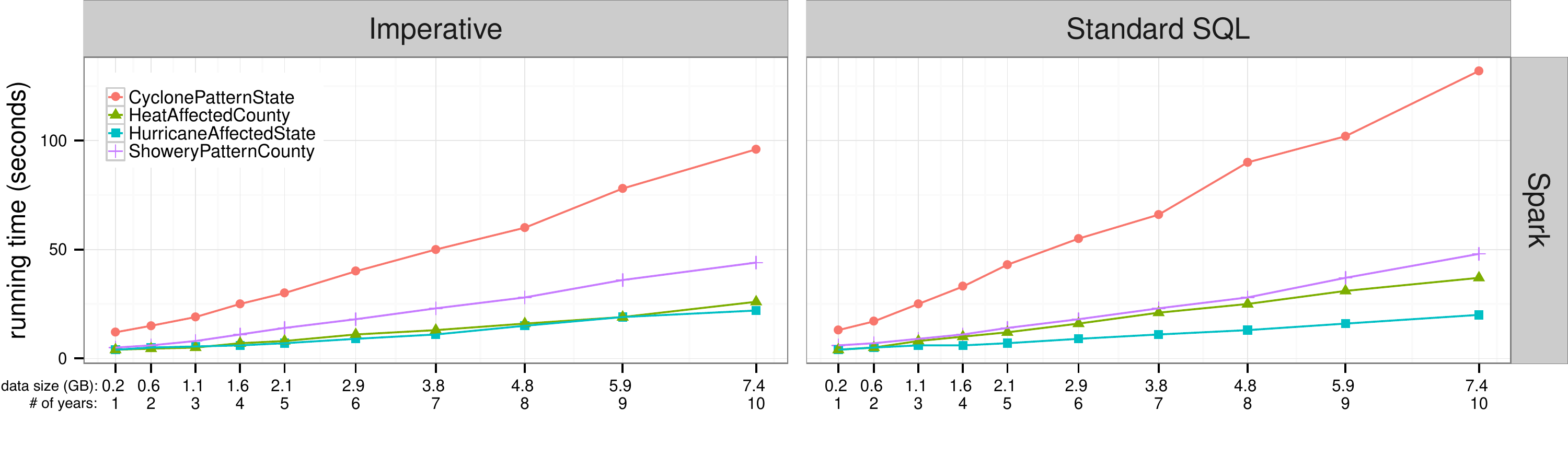}
\caption{Experiments over New York data of 2005--2014 with PostgreSQL and Spark}
\label{fig:weather-long}
\end{figure}

\begin{figure}
	\centering
\includegraphics[width =\linewidth]{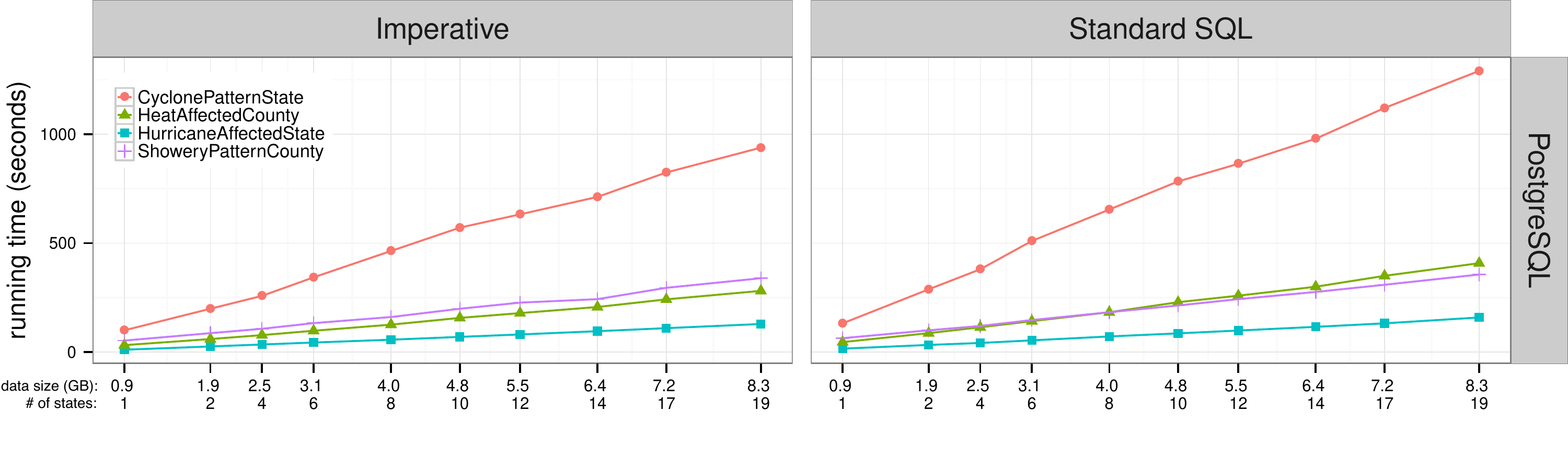}

	\includegraphics[width =\linewidth]{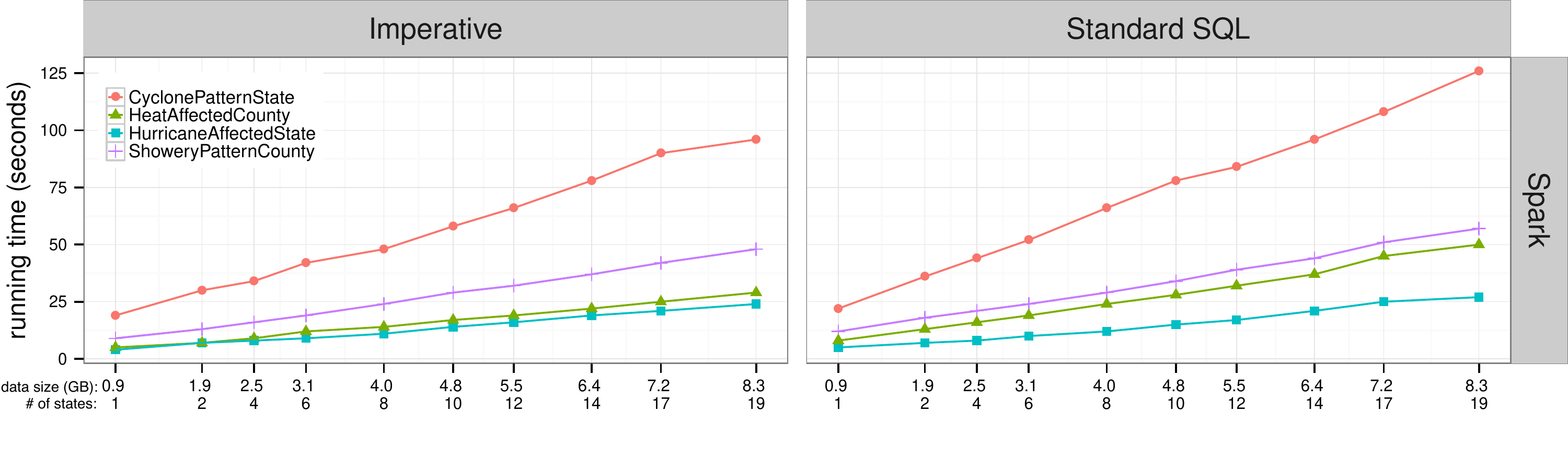}

\caption{Experiments over 1 year data from 1--19 states with PostgreSQL and Spark.}
\label{fig:weather-wide}
\end{figure}

The execution times are shown in Figures~\ref{fig:weather-long} and~\ref{fig:weather-wide}. All the four
queries can be answered within the time limit. The most expensive
one is the cyclone pattern state query because its definition
includes a join of four atoms for winds in four directions, each with
a large volume of instances.
All the graphs in Figures~\ref{fig:weather-wide} and~\ref{fig:weather-long} exhibit linear
behaviour with respect to the size of data.
%
The nearly tenfold better performance of Spark over PostgreSQL can be explained by the fact that, unlike the data in the Siemens case, the MesoWest data is highly
parallelisable. Since it was collected from hundreds of
different weather stations, it can be partitioned by station id,
state, county, etc.\ to perfectly fit the MapReduce programming model
extended with resilient distributed datasets
(RDDs)~\cite{DBLP:journals/cacm/ZahariaXWDADMRV16}. In this case,
Apache Spark is able to take advantage of the multi-core and large
memory hardware infrastructure,
to compute mappings and coalescing in parallel,
making it 10 times faster than
PostgreSQL; see Figures~\ref{fig:weather-long}
and~\ref{fig:weather-wide}.

Overall, the results of the experiments look very encouraging: our
$\nrdMTL$ query rewriting algorithm produces SQL queries that are
executable by a standard database engine PostgreSQL in acceptable
time, and by a cluster-computing framework Apache Spark in better than
acceptable time (in case data can be properly partitioned) over large
sets of real-world temporal data of up to 8.3GB in CSV format. The
relatively challenging queries such as $\mathsf{NormalRestart}$ and
$\mathsf{CyclonePatternState}$ require a large number of temporal
joins, which turn out to be rather expensive.

\section{Conclusions and Future Work}

To facilitate access to sensor temporal data with the aim of monitoring and diagnostics, we suggested the ontology
language $\dMTL$, an extension of datalog with the Horn fragment of
the metric temporal logic \MTL{} (under the continuous semantics). We showed that answering $\dMTL$
queries is {\sc ExpSpace}-complete for combined complexity, but
becomes undecidable if the diamond operators are allowed in the head
of rules. We also proved that answering nonrecursive $\dMTL$ queries is {\sc PSpace}-complete for combined complexity and in AC$^0$ for data complexity. We tested feasibility and efficiency of OBDA with $\nrdMTL$ on two real-world use cases by querying Siemens turbine data and MesoWest weather data. Namely, we designed $\nrdMTL$ ontologies defining typical concepts used by Siemens engineers and various meteorological terms, developed and implemented an algorithm rewriting $\nrdMTL$ queries into SQL queries, and then executed the SQL queries obtained by this algorithm from our ontologies over the Siemens and MesoWest data, showing their acceptable efficiency and scalability.
(To the best of our knowledge, this is the first work  on practical OBDA with temporal ontologies, and so no other systems with similar functionalities are available for comparison.)

Based on these encouraging results, we plan to include our temporal
OBDA framework into the Ontop platform~\cite{DBLP:conf/semweb/Rodriguez-MuroKZ13,DBLP:conf/semweb/KontchakovRRXZ14,DBLP:journals/semweb/CalvaneseCKKLRR17}; visit \url{http://ontop.inf.unibz.it/} for more information on Ontop. Note also that $\dMTL$ presented here has been recently used to develop an ontology of ballet moves (see Example~\ref{ex:dance}) that underlies a search engine of annotated sequences in ballet videos~\cite{DBLP:conf/esws/RahebMRPI17}. This is a third use case for our framework (and we are aware of a few more emerging use cases), which makes an efficient and user-friendly implementation of the framework a top priority.

We are also working on the streaming data setting, where the challenge
is to continuously evaluate queries over the incoming data.
A rule-based language with window operators for analysing streaming data has been suggested by~\citeA{DBLP:conf/aaai/BeckDEF15}. This language is very expressive as it uses an abstract semantics for window operators (which does not have to guarantee decidability) and allows negation and disjunction in the rules. It would be interesting to identify and adapt a suitable fragment of this language in our temporal OBDA framework.


\subsection*{Acknowledgements}
This work was supported by the UK EPSRC grant EP/M012670 `iTract: Islands of Tractability in Ontology-Based Data Access'\! and by the OBATS project at the Free University of Bozen-Bolzano.

Guohui Xiao is the corresponding author of this article.



\appendix

\section{}

\subsection*{Proof of Theorem~\ref{thm:aczero}}

The formula $\sigma^{\langle m, n \rangle}_{\range, P, P_1, P_2}(x,y)$ is defined as follows:
\begin{align*}
&\exists x_1, y_1, \dots, x_5, y_5 \
\hspace*{-5mm}\bigvee_{\begin{subarray}{c}
    m_1 \in \mathsf{le}(P_1)\\
    n_1 \in \mathsf{ri}(P_1)\\
    \lceil_1 \in \{ [, ( \},\
    \rceil_1 \in \{ ], ) \}
    \end{subarray}}
\hspace*{-3mm}
    \bigg( \varphi_{P_1}^{\lceil_1 m_1, n_1 \rceil_1} (x_1, y_1) \land \hspace*{-3mm}
   \bigvee_{\begin{subarray}{c}
    m_2 \in \mathsf{le}(P_2)\\
    n_2 \in \mathsf{ri}(P_2)\\
    \lceil_2 \in \{ [, ( \},\
    \rceil_2 \in \{ ], ) \}
    \end{subarray}} \bigg( \varphi_{P_2}^{\lceil_2 m_2, n_2 \rceil_2} (x_2, y_2) \land{} \\
&  \hspace*{5mm}  \bigvee_{\begin{subarray}{c}
    m_3 = m_1\\
    n_3 = n_1\\
    \lceil_3 \in \{ [, ( \},\
    \rceil_3 \in \{ ], ) \}
    \end{subarray}} \hspace*{-3mm} \Big(
    (x_3 = x_1) \land (y_3 = y_1) \land \mathsf{is}_{\lceil_3,[} \land \mathsf{is}_{\rceil_3,]} \land{} \\
&  \hspace*{10mm}   \bigvee_{\begin{subarray}{c}
    m_4 \in \mathsf{le}(P_1) \cup \mathsf{le}(P_2)\\
    n_4 \in \mathsf{ri}(P_1) \cup \mathsf{ri}(P_2)\\
    \lceil_4 \in \{ [, ( \},\
    \rceil_4 \in \{ ], ) \}
    \end{subarray}} \Big(
     \mathsf{inter}_{\lceil_2 m_2, n_2 \rceil_2, \lceil_3 m_3, n_3 \rceil_3}^{\lceil_4 m_4, n_4 \rceil_4}(x_4, y_4, x_2, y_2, x_3, y_3) \land{} \\
&   \hspace*{15mm}    \bigvee_{\begin{subarray}{c}
    m_4 \in \mathsf{le}(P)\\
    n_4 \in \mathsf{ri}(P)\\
    \lceil_5 \in \{ [, ( \},\
    \rceil_5 \in \{ ], ) \}
    \end{subarray}} \hspace*{-3mm} \big(
     \mathsf{pluso}_{\range, \lceil_4 m_4, n_4 \rceil_4}^{\lceil_5 m_5, n_5 \rceil_5}(x_5, y_5, x_4, y_4) \land{} \\
&    \hspace*{5cm} \mathsf{inter}_{\lceil_5 m_5, n_5 \rceil_5, \lceil_3 m_3, n_3 \rceil_3}^{\langle m, n \rangle}(x, y, x_5, y_5, x_3, y_3) \big)\Big)\Big)\bigg)\bigg),
\end{align*}
where $\mathsf{pluso}_{\range, \lceil_4 m_4, n_4 \rceil_4}^{\lceil_5 m_5, n_5 \rceil_5}(x_5, y_5, x_4, y_4)$ is an (obvious) formula saying that $\lceil_5 x_5 +m_5, y_5 + n_5 \rceil_5$ is the interval $\lceil_4 x_4+m_4, y_4+n_4 \rceil_4 +^o \range$.

The formula $x = y + c$, for a non-negative $c$, is defined as follows. For $c = \infty$, we take the formula
\begin{align*}
\forall j\, (\mathsf{bit}^{\it in}(x, j, 1) \land \mathsf{bit}^{\it fr}(x, j, 1)),
\end{align*}
whereas for a constant $c= h/2^k$, we can use
\begin{multline*}
\forall j\, \Big(\big(\mathsf{bit}^{\it in}(x, j, 0) \land \mathsf{bit}^{\it in}_{+ h/2^k}(y, j, 0)\big) \lor \big(\mathsf{bit}^{\it in}(x, j, 1) \land \mathsf{bit}^{\it in}_{+ h/2^k}(y, j, 1)\big)\Big) \land \\
\forall j\, \Big(\big(\mathsf{bit}^{\it fr}(x, j, 0) \land \mathsf{bit}^{\it fr}_{+ h/2^k}(y, j, 0)\big) \lor \big(\mathsf{bit}^{\it fr}(x, j, 1) \land \mathsf{bit}^{\it fr}_{+ h/2^k}(y, j, 1)\big)\Big),
\end{multline*}
where predicates $\mathsf{bit}^{\it in}_{+ h/2^k}(y, j, v)$, saying that $v$ is the $j$-th bit of the integer part of $y + h/2^k$, and $\mathsf{bit}^{\it fr}_{+ h/2^k}(y, j, v)$, saying that $v$ is the $j$-th bit of the fractional part of $y + h/2^k$, are defined inductively as follows:
\begin{align*}
    \mathsf{bit}^{\it fr}_{+0/2^k}(y, j, v) = \mathsf{bit}^{\it fr}&(y, j, v), \\
    \mathsf{bit}^{\it fr}_{+(d+1/2^k)}(y, j, v) = \exists u \Bigl( & (u = \ell - k) \land \Bigl( \bigl((j \leq u) \land \mathsf{bit}^{\it fr}_{+d}(y, j, v) \bigr) \lor{} \\
     &\bigl( (v = 0) \land \mathsf{bit}^{\it fr}_{+d}(y, j, 0) \land \exists j' ((u < j' < j) \land \mathsf{bit}^{\it fr}_{+d}(y, j', 0)) \bigr) \lor{} \\
  &  \bigl( (v = 0) \land \mathsf{bit}^{\it fr}_{+d}(y, j, 1) \land \forall j' ((u < j' < j) \to \mathsf{bit}^{\it fr}_{+d}(y, j', 1))\bigr) \lor{}\\
  & \bigl( (v = 1) \land \mathsf{bit}^{\it fr}_{+d}(y, j, 1) \land \exists j' ((u < j' < j) \land \mathsf{bit}^{\it fr}_{+d}(y, j', 0)) \bigr) \lor{} \\
  & \bigl( (v = 1) \land \mathsf{bit}^{\it fr}_{+d}(y, j, 0) \land \forall j' ((u < j' < j) \to \mathsf{bit}^{\it fr}_{+d}(y, j', 1)) \bigr) \Bigr) \Bigr),
\end{align*}
\begin{align*}
  \mathsf{bit}^{\it in}_{+0/2^k}(y, j, v) = \mathsf{bit}^{\it in}&(y, j, v), \\
  \mathsf{bit}^{\it in}_{+(d+1/2^k)}(y, j, v) = \exists u \Bigl( & (u = \ell - k) \land \Bigl(  \\
     &\bigl( (v = 0) \land \mathsf{bit}^{\it in}_{+d}(y, j, 0) \land \exists j' ( ((j' < j) \land \mathsf{bit}^{\it in}_{+d}(y, j', 0)) \lor{} \\
      &\hspace{5cm} ((u < j' < j) \land \mathsf{bit}^{\it fr}_{+d}(y, j', 0))) \bigr) \lor{} \\
  &  \bigl( (v = 0) \land \mathsf{bit}^{\it in}_{+d}(y, j, 1) \land \forall j' (((j' < j) \to \mathsf{bit}^{\it in}_{+d}(y, j', 1)) \land{} \\
  & \hspace{5cm} (u < j' < j) \to \mathsf{bit}^{\it fr}_{+d}(y, j', 1)) \bigr) \lor{}\\
  & \bigl( (v = 1) \land \mathsf{bit}^{\it in}_{+d}(y, j, 0) \land \exists j' ( ((j' < j) \land \mathsf{bit}^{\it in}_{+d}(y, j', 0)) \lor{} \\
      &\hspace{5cm} ((u < j' < j) \land \mathsf{bit}^{\it fr}_{+d}(y, j', 0))) \bigr) \lor{} \\
  &  \bigl( (v = 1) \land \mathsf{bit}^{\it in}_{+d}(y, j, 1) \land \forall j' (((j' < j) \to \mathsf{bit}^{\it in}_{+d}(y, j', 1)) \land{} \\
  & \hspace{5cm}  ((u < j' < j) \to \mathsf{bit}^{\it fr}_{+d}(y, j', 1))) \bigr) \Bigr) \Bigr).
\end{align*}
Here, $u = \ell - k$ can be easily defined using $<$ and $k$.

\subsection*{Proofs of Lemmas~\ref{lemma:proj} and~\ref{lemma:union}}

\begin{lemma*}
If $T$ satisfies TOA, then a projection of $T$ satisfying TOA can be computed in time $O(|T|_o^2 \times |T|_t)$.
\end{lemma*}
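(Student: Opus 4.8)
The plan is to give an explicit algorithm that computes a projection of $T$ and respects the temporal ordering assumption within the stated time bound. The naive projection fails only because, after dropping some non-temporal columns, several rows that used to be distinguished by their attributes collapse onto the same attribute tuple $(c_1', \dots, c_n')$ and their intervals may no longer appear in $\preceq$-order. So the work falls into two parts: first, partition the rows of $T$ by the \emph{retained} attribute tuple; second, within each such block, sort (or merge) the intervals with respect to $\prec$ and output them together with the retained attribute tuple. Concretely, I would first build, for each tuple $(c_1', \dots, c_n')$ occurring in the projection of $T$ onto $\mathsf{attr}_1', \dots, \mathsf{attr}_n'$, the list $L_{c_1',\dots,c_n'}$ of all intervals $\langle t_1, t_2\rangle$ such that some row $(c_1, \dots, c_m, \langle t_1, t_2\rangle)$ of $T$ has $c_i' = c_j$ whenever $\mathsf{attr}_i' = \mathsf{attr}_j$. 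This grouping can be done by a single pass over $T$, of which there are at most $|T| \le |T|_o \times |T|_t$ rows.

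The second part is to restore TOA. Each list $L_{c_1',\dots,c_n'}$ is a sublist, but not necessarily a $\preceq$-ordered sublist, of the full interval column of $T$ (it is obtained as a union of the per-$(c_1,\dots,c_m)$ interval lists of the original table, each of which is already $\preceq$-ordered by TOA of $T$). There are at most $|T|_o$ many original attribute tuples, hence each $L_{c_1',\dots,c_n'}$ is a union of at most $|T|_o$ already-sorted lists, each of length $\le |T|_t$. Thus a standard $|T|_o$-way merge produces a $\preceq$-sorted list of length $\le |T|_t$ (after discarding the at most $|T|_t$ distinct intervals; note $|T|_t$ is the number of distinct interval tuples in $T$) in time $O(|T|_o \times |T|_t)$ per group. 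Since there are at most $|T|_o$ groups in the projection, the merging step over all groups takes time $O(|T|_o^2 \times |T|_t)$. Emitting the rows $(c_1', \dots, c_n', \langle t_1, t_2\rangle)$ in the order produced by these merges — groups in any fixed order, intervals within a group in $\preceq$-order — gives a table that satisfies TOA, because for two rows with the same retained attribute tuple they lie in the same group and were emitted in $\preceq$-order. This is exactly the required $O(|T|_o^2 \times |T|_t)$ bound.

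The main point to get right — and the only place where a little care is needed — is the bookkeeping of the complexity: one must observe that $|T| \le |T|_o \times |T|_t$, that the number of groups is $\le |T|_o$, and that each group is a merge of $\le |T|_o$ pre-sorted lists, so that the dominant cost is $|T|_o$ groups times $O(|T|_o \times |T|_t)$ per group. Everything else (the single grouping pass, discarding duplicate intervals within a group, copying out the retained attributes) is linear in $|T|$ and therefore absorbed. I do not expect any genuine obstacle here; the lemma is essentially an exercise in implementing relational projection while maintaining a sort invariant, and the argument parallels the coalescing and temporal-join constructions of Lemmas~\ref{lemma:coal} and the temporal-join lemma. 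For completeness I would also remark that the $\mathsf{lpar}, \mathsf{rpar}$ columns are carried along unchanged with each interval, so the full four temporal columns $\mathsf{lpar}, \mathsf{ledge}, \mathsf{redge}, \mathsf{rpar}$ are handled uniformly as a single interval value throughout.
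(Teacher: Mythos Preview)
Your proposal is correct and follows essentially the same approach as the paper: partition $T$ by the original attribute tuples (so each part is $\preceq$-ordered by TOA), group these parts by the retained attribute tuple, and then merge the at most $|T|_o$ sorted lists within each group to restore TOA, yielding the stated $O(|T|_o^2 \times |T|_t)$ bound. The only cosmetic difference is that you describe an explicit $k$-way merge while the paper phrases it as iterated pairwise merges of sorted tables; both give the same asymptotic total.
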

\begin{proof}
  We first partition $T$ into a set of purely temporal tables
  $T_{c_1, \dots, c_m}$ and compute the set of all individual tuples
  $(c_1', \dots, c_n')$ that will appear in the projection $T'$. Let
  $(c_1', \dots, c_n')$ be one such tuple, and consider the tables
  $T_{c_1^1, \dots, c_m^1}, \dots,$ $T_{c_1^k, \dots, c_m^k}$ such
  that the projection of each $(c_1^i, \dots, c_m^i)$ is precisely
  $(c_1', \dots, c_n')$. Clearly, we have at most $|T|_o$ such
  tables. It is well-known that, for a pair of ordered tables $S$ and
  $S'$, we can construct an ordered table that contains all the tuples
  $S \cup S'$ in time $|S|+|S'|$. We use this algorithm $k$ times to
  obtain an ordered table containing all the tuples of
  $T_{c_1^1, \dots, c_m^1} \cup \dots \cup$ $T_{c_1^k, \dots, c_m^k}$
  in time $O(k|T|_o)$. We then write the tuples of the form
  $(c_1', \dots, c_n', \langle, t_1, t_2, \rangle)$, where
  $(\langle, t_1, t_2, \rangle)$ is a tuple from the united table,
  into the output table. It can be readily checked that the complete
  output table can be produced in the required time.
\end{proof}

\begin{lemma*} For any pair of tables $T$ and $T'$ satisfying TOA, their union table also satisfying TOA can be computed in time $O((|T|_o^2 + |T'|_o^2) \times (|T|_t + |T'|_t))$.
\end{lemma*}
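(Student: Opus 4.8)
The plan is to reduce the problem of computing the union of two TOA-tables to the well-known problem of merging two sorted sequences, following the same recipe used in the proofs of Lemmas~\ref{lemma:coal} and~\ref{lemma:proj}. First I would partition $T$ into the family of purely temporal tables $T_{c_1,\dots,c_m}$ indexed by the distinct tuples $(c_1,\dots,c_m)$ occurring in $T$, and likewise partition $T'$ into the tables $T'_{c_1',\dots,c_n'}$; as in the earlier lemmas, this partitioning costs $O(|T|_o\times |T|_t)$ and $O(|T'|_o\times |T'|_t)$, respectively, and every $T_{c_1,\dots,c_m}$ satisfies TOA, i.e.\ its rows already respect $\preceq$, and similarly for the $T'_{c_1',\dots,c_n'}$.

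Next, for each tuple of individual constants $(c_1,\dots,c_m)$ that occurs in $T$ but in neither $T'$ nor as a shared tuple, I would simply output the rows of $T_{c_1,\dots,c_m}$ (they are already $\preceq$-ordered), and symmetrically for tuples occurring only in $T'$. For a tuple $(c_1,\dots,c_m)$ that occurs in both tables, I would merge the two $\preceq$-ordered sequences $T_{c_1,\dots,c_m}$ and $T'_{c_1,\dots,c_m}$ into a single $\preceq$-ordered sequence using the standard linear-time merge of two sorted lists, which takes time $O(|T_{c_1,\dots,c_m}| + |T'_{c_1,\dots,c_m}|) \le O(|T|_t + |T'|_t)$; since $\preceq$ is a strict linear order on intervals (as observed in the paragraph defining $\prec$), the merged sequence is $\preceq$-sorted, and the resulting rows — with the common prefix $(c_1,\dots,c_m)$ restored — therefore satisfy TOA for that tuple. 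Concatenating the blocks obtained for the distinct tuples in any fixed order yields a table containing exactly the rows of $T\cup T'$ and satisfying TOA.

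Finally I would total up the running time. There are at most $|T|_o + |T'|_o$ distinct tuples to process, and for each the merge (or copy) costs $O(|T|_t + |T'|_t)$, so the merging phase costs $O\bigl((|T|_o + |T'|_o)\times(|T|_t + |T'|_t)\bigr)$, which is dominated by the stated bound $O\bigl((|T|_o^2 + |T'|_o^2)\times(|T|_t + |T'|_t)\bigr)$; the partitioning phases are absorbed into the same bound. The size estimates $|T\cup T'|_o \le |T|_o + |T'|_o$ and $|T\cup T'|_t \le |T|_t + |T'|_t$ are immediate from the construction. The only point requiring any care — and the closest thing to an obstacle — is making sure the partitioning step does not introduce a worse factor than $O(|T|_o^2\times|T|_t)$; this is handled exactly as in Lemma~\ref{lemma:proj}, where it is shown that grouping an ordered table by its non-temporal columns preserves the $\preceq$-order inside each group and can be done within this budget, so no genuinely new argument is needed.

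\begin{proof}
We argue exactly as in the proof of Lemma~\ref{lemma:proj}. First partition $T$ into the purely temporal tables $T_{c_1, \dots, c_m}$, one for each distinct tuple $(c_1, \dots, c_m)$ in $T[\mathsf{attr}_1,\dots,\mathsf{attr}_m]$, and similarly partition $T'$; by TOA, each $T_{c_1, \dots, c_m}$ (and each $T'_{c_1,\dots,c_m}$) has its rows ordered by $\preceq$. This takes time $O(|T|_o \times |T|_t)$ and $O(|T'|_o \times |T'|_t)$.

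For a tuple $(c_1, \dots, c_m)$ occurring in only one of the two tables, copy the corresponding purely temporal table to the output (it is already $\preceq$-ordered). For a tuple occurring in both, merge the two $\preceq$-ordered sequences $T_{c_1, \dots, c_m}$ and $T'_{c_1, \dots, c_m}$ using the standard linear-time merge of two sorted lists, obtaining a $\preceq$-ordered sequence in time $O(|T|_t + |T'|_t)$ (recall that $\preceq$ is a linear order on intervals). Prefixing these rows with $(c_1, \dots, c_m)$ gives a block of rows satisfying TOA. Concatenating all such blocks produces a table whose rows are exactly those of $T \cup T'$ and which satisfies TOA, with $|T \cup T'|_o \leq |T|_o + |T'|_o$ and $|T \cup T'|_t \leq |T|_t + |T'|_t$.

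There are at most $|T|_o + |T'|_o$ blocks, each produced in time $O(|T|_t + |T'|_t)$, so the merging phase costs $O\bigl((|T|_o + |T'|_o)\times(|T|_t + |T'|_t)\bigr)$, and together with the partitioning phase the total is $O\bigl((|T|_o^2 + |T'|_o^2)\times(|T|_t + |T'|_t)\bigr)$, as required.
\end{proof}
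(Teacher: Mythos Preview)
Your proof is correct and follows essentially the same approach as the paper: partition each table into purely temporal sub-tables indexed by individual tuples, then use the standard linear-time merge of sorted lists to combine matching sub-tables. The paper additionally remarks that the sub-tables should be stored sequentially with respect to some fixed order on the tuples $(c_1,\dots,c_m)$ (which is what justifies the $|T|_o^2\times|T|_t$ partitioning cost and lets one find matching tuples from $T$ and $T'$ in a single joint scan); your claim that partitioning costs only $O(|T|_o\times|T|_t)$ is slightly optimistic compared to the $|T|\times|T|_o$ used in Lemma~\ref{lemma:coal}, but this is harmless since you correctly absorb it into the stated bound at the end.
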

\begin{proof}
We first partition $T$ and $T'$ into sets of purely temporal tables $T_{c_1, \dots, c_m}$ and, respectively, $T'_{c_1, \dots, c_m}$. While doing this partition, we make sure that the tables $T_{c_1, \dots, c_m}$ are stored sequentially with respect to some order on the tuples $(c_1, \dots, c_m)$ (it can be done in time $|T|_o^2 \times |T|_t$). We do the same for the tables $T_{c_1, \dots, c_m}'$. It remains to go through all the tuples $\langle,  t_1, t_2, \rangle$ and $\lceil,  t_1', t_2', \rceil$ in all the tables $T_{c_1, \dots, c_m}$ and $T'_{c_1, \dots, c_m}$ to produce the union table by an algorithm similar to the one applied to the tables $S$ and $S'$ in the proof of Lemma~\ref{lemma:proj}.
\end{proof}

\subsection*{Experimental Results}

\begin{table*}[h]\centering
	\addtolength{\tabcolsep}{-3pt}
\begin{tabular}{|l|*{10}r|}
			\hline
			\diagbox{queries}{\# of months}  & 32 & 64  & 96 & 128 & 159 & 191 & 223 & 255 & 287 & 320 \\\hline
			ActivePowerTrip&324&648&970&1294&1618&1940&2264&2588&2912&3236\\
			NormalStop&648&1296&1940&2588&3236&3880&4528&5176&5824&6472\\
			NormalStart&162&324&485&647&809&970&1132&1294&1456&1618\\
			NormalRestart&0&0&0&0&0&0&0&0&0&0\\\hline
		\end{tabular}%
\caption{Number of the results returned from the Siemens queries.}
		\label{tab:answer-stat-siemens}
\end{table*}
\begin{table*}[h]\centering
\addtolength{\tabcolsep}{-2pt}

\begin{tabular}{|l|*{10}r|}
			\hline
			\diagbox{queries}{\# of years}  & 1 & 2& 3 & 4 & 5 & 6 & 7 & 8 & 9 & 10 \\\hline
			ShoweryPatternCounty&530&1221&1802&2647&3609&4349&5204&5912&6639&7655\\
			HurricaneAffectedState&2&4&5&5&5&8&9&801&1523&1533\\
			HeatAffectedCounty&0&5&7&14&21&33&39&51&57&59\\
			CyclonePatternState&914&1574&1617&1851&1936&2139&2246&2307&2333&2359\\\hline
		\end{tabular}%

\caption{Number of the results returned from the NY weather stations from 2005 to 2014.}
		\label{tab:answer-stat-ny}
\end{table*}

\newpage

\begin{table*}[h]\centering\footnotesize
\addtolength{\tabcolsep}{-3pt}
\begin{tabular}{|l|*{10}r|}
		\hline
		\diagbox{queries}{\# of states}  & 1 & 2  & 4 & 6 & 8 & 10 & 12 & 14 & 17 & 19 \\\hline
	ShoweryPatternCounty&3769&4481&4928&10349&12709&13681&14470&14933&16381&16883\\
	HurricaneAffectedState&2&784&789&789&790&790&798&811&813&813\\
	HeatAffectedCounty&53&65&81&84&88&98&100&117&142&224\\
	CyclonePatternState&9109&9179&9593&17577&30203&38421&40769&43662&54199&56303\\\hline
	\end{tabular}%

\caption{Number of the results returned from the Weather data for 1--19 states in 2012.}
		\label{tab:answer-stat-w2012}
\end{table*}
\begin{table*}[h] \centering\footnotesize
\addtolength{\tabcolsep}{-4pt}
\begin{tabular}{|cl|c|c|c|c|c|c|c|c|c|c|}\hline
&\# of months & 32 & 64  & 96 & 128 & 159 & 191 & 223 & 255 & 287 & 320 \\\hline
& \# of rows &12,935,&25,871,&38,726,&51,662,&64,597,& 77,453,&90,389,&103,324,&116,260,&129,195,\\
& & 538 & 076 & 765 & 303 & 841 & 530 & 068 & 606 & 144 & 682 \\\hline
CSV&size (GB)&0.57&1.2&1.7&2.3&2.9&3.4&4.0&4.5&5.1&5.7\\\hline
\multirow{2}{*}{PostgreSQL} &raw size (GB) & 0.7 & 1.4 & 2.2 & 2.9 & 3.7 & 4.4 & 5.2 & 5.9 & 6.7 & 7.4\\
&total size (GB) & 1.0 & 2.0 & 3.0 & 4.0 & 5.0 & 6.0 & 7.0 & 8.0 & 9.0 & 10.0\\\hline
Parquet&size (GB)&0.1&0.2&0.3&0.4&0.5&0.6&0.7&0.8&0.9&1.0\\\hline
\end{tabular}
\caption{Siemens data for one turbine.}
        \label{tab:data-stat-siemens-extra}
\end{table*}
\begin{table*}[h] \centering\footnotesize
\addtolength{\tabcolsep}{-4pt}
\begin{tabular}{|cl|c|c|c|c|c|c|c|c|c|c|}\hline
&\# of years & 1 & 2  & 3 & 4 & 5 & 6 & 7 & 8 & 9 & 10 \\
&\# of stations & 229 & 306 & 370 & 441 & 484 & 542 & 595 & 643 & 807 & 874\\\hline
&\# of rows &3,969,&10,959,&18,614,&26,622,&35,862,& 49,115,&63,469,&79,032,&99,221,&124,001,\\
&&455&978&686&218&560&307&733&846&419&260\\\hline
CSV&size (GB)&0.2&0.6&1.1&1.6&2.1&2.9&3.8&4.8&5.9&7.4\\\hline
\multirow{2}{*}{PostgreSQL} &raw size (GB) & 0.3 & 0.8 & 1.4 & 2.0 & 2.7 & 3.7 & 4.9 & 6.1 & 7.7 & 11.0\\
&total size (GB) & 0.4 & 1.1 & 2.0 & 2.9 & 3.9 & 5.4 & 7.1 & 8.9 & 11.0 & 14.0\\\hline
Parquet&size (GB)&0.03&0.08&0.15&0.2&0.3&0.4&0.5&0.6&0.8&0.9\\\hline
\end{tabular}
\caption{NY weather stations from 2005 to 2014.}
        \label{tab:data-stat-ny-extra}
\end{table*}
\begin{table*}[!h] \centering\footnotesize
\addtolength{\tabcolsep}{-4.5pt}
\begin{tabular}{|cl|c|c|c|c|c|c|c|c|c|c|}\hline
&\multirow{2}{*}{states}  & DE, &  +NY &  +MD &  +NJ,&  +MA, &  +LA, &  +ME, &  +NH, &  +MS,SC, &  +KY, \\
&&GA&&&RI&CT&VT&WV&NC&ND&SD\\\hline
&\# of states & 2 & 3 & 4 & 6 & 8 & 10 & 12 & 14 & 17 & 19 \\
&\# of stations & 408 & 659 & 1120 & 1476 & 1875 & 2305 & 2669 & 3019 & 3508 & 4037\\\hline
&\# of rows&16,760,&32,470,&41,346,&51,610,& 66,842,&80,561,&92,550,&106,415,&121,216, &140,517,\\
&& 333 & 116 & 986 & 908 & 618 & 273 & 905 & 139 & 837 & 500\\\hline
CSV&size (GB)&0.9&1.9&2.5&3.1&4.0&4.8&5.5&6.4&7.2&8.3\\\hline
\multirow{2}{*}{PostgreSQL} &raw size (GB) & 1.2 & 2.4 & 3.1 & 3.9 & 5.1 & 6.1 & 7.1 & 8.1 & 9.2 & 10.0\\
&total size (GB) & 2.0 & 4.1 & 5.3 & 6.5 & 8.6 & 10.0 & 12.0 & 14.0 & 16.0 & 18.0\\\hline
Parquet&size (GB)&0.1&0.2&0.3&0.4&0.5&0.6&0.7&0.8&0.9&1.1\\\hline
\end{tabular}
\caption{Weather data for 1--19 states in 2012.}
        \label{tab:data-stat-w2012-extra}
\end{table*}

Here, CSV is the size of the data in CSV format; PostgreSQL (raw size) is the size of the data itself stored in PostgreSQL reported by the \verb|pg_relation_size| function; %
PostgreSQL (total size) is the size of the total data (including the index)
    stored in PostgreSQL reported by the \verb|pg_total_relation_size|
    function; and Parquet is the size of the data in the Apache Parquet format, used by Apache Spark.

\newpage

\vskip 0.2in
\bibliography{temp}

\begin{thebibliography}{}

\bibitem[\protect\BCAY{Abiteboul, Hull,\ \BBA\ Vianu}{Abiteboul
  et~al.}{1995}]{Abitebouletal95}
Abiteboul, S., Hull, R., \BBA\ Vianu, V. \BBOP1995\BBCP.
\newblock {\Bem Foundations of Databases}.
\newblock Addison-Wesley.

\bibitem[\protect\BCAY{Alur, Feder,\ \BBA\ Henzinger}{Alur
  et~al.}{1996}]{DBLP:journals/jacm/AlurFH96}
Alur, R., Feder, T., \BBA\ Henzinger, T.~A. \BBOP1996\BBCP.
\newblock \BBOQ The benefits of relaxing punctuality\BBCQ\
\newblock {\Bem J. {ACM}}, {\Bem 43\/}(1), 116--146.

\bibitem[\protect\BCAY{Alur\ \BBA\ Henzinger}{Alur\ \BBA\
  Henzinger}{1993}]{DBLP:journals/iandc/AlurH93}
Alur, R.\BBACOMMA\  \BBA\ Henzinger, T.~A. \BBOP1993\BBCP.
\newblock \BBOQ Real-time logics: Complexity and expressiveness\BBCQ\
\newblock {\Bem Inf. Comput.}, {\Bem 104\/}(1), 35--77.

\bibitem[\protect\BCAY{Armbrust, Xin, Lian, Huai, Liu, Bradley, Meng, Kaftan,
  Franklin, Ghodsi,\ \BBA\ Zaharia}{Armbrust
  et~al.}{2015}]{DBLP:conf/sigmod/ArmbrustXLHLBMK15}
Armbrust, M., Xin, R.~S., Lian, C., Huai, Y., Liu, D., Bradley, J.~K., Meng,
  X., Kaftan, T., Franklin, M.~J., Ghodsi, A., \BBA\ Zaharia, M.
  \BBOP2015\BBCP.
\newblock \BBOQ Spark {SQL:} relational data processing in spark\BBCQ\
\newblock In Sellis, T.~K., Davidson, S.~B., \BBA\ Ives, Z.~G.\BEDS, {\Bem
  Proceedings of the 2015 {ACM} {SIGMOD} International Conference on Management
  of Data, Melbourne, Victoria, Australia, May 31 - June 4, 2015}, \BPGS\
  1383--1394. {ACM}.

\bibitem[\protect\BCAY{Arora\ \BBA\ Barak}{Arora\ \BBA\
  Barak}{2009}]{Arora&Barak09}
Arora, S.\BBACOMMA\  \BBA\ Barak, B. \BBOP2009\BBCP.
\newblock {\Bem Computational Complexity: A Modern Approach\/} (1st \BEd).
\newblock Cambridge University Press, New York, USA.

\bibitem[\protect\BCAY{Artale, Kontchakov, Wolter,\ \BBA\ Zakharyaschev}{Artale
  et~al.}{2013}]{ArtaleKWZ13}
Artale, A., Kontchakov, R., Wolter, F., \BBA\ Zakharyaschev, M. \BBOP2013\BBCP.
\newblock \BBOQ Temporal description logic for ontology-based data access\BBCQ\
\newblock In {\Bem Proc.\ of the 23rd Int.\ Joint Conf.\ on Artificial
  Intelligence, IJCAI 2013}. IJCAI/AAAI.

\bibitem[\protect\BCAY{Artale, Kontchakov, Kovtunova, Ryzhikov, Wolter,\ \BBA\
  Zakharyaschev}{Artale et~al.}{2015}]{DBLP:conf/ijcai/ArtaleKKRWZ15}
Artale, A., Kontchakov, R., Kovtunova, A., Ryzhikov, V., Wolter, F., \BBA\
  Zakharyaschev, M. \BBOP2015\BBCP.
\newblock \BBOQ First-order rewritability of temporal ontology-mediated
  queries\BBCQ\
\newblock In {\Bem Proc.\ of the 24th Int.\ Joint Conf.\ on Artificial
  Intelligence, IJCAI 2015}, \BPGS\ 2706--2712. IJCAI/AAAI.

\bibitem[\protect\BCAY{Artale, Kontchakov, Kovtunova, Ryzhikov, Wolter,\ \BBA\
  Zakharyaschev}{Artale et~al.}{2017}]{DBLP:conf/time/ArtaleKKRWZ17}
Artale, A., Kontchakov, R., Kovtunova, A., Ryzhikov, V., Wolter, F., \BBA\
  Zakharyaschev, M. \BBOP2017\BBCP.
\newblock \BBOQ Ontology-mediated query answering over temporal data: {A}
  survey (invited talk)\BBCQ\
\newblock In {\Bem {TIME}}, \lowercase{\BVOL}~90 of {\Bem LIPIcs}, \BPGS\
  1:1--1:37. Schloss Dagstuhl - Leibniz-Zentrum fuer Informatik.

\bibitem[\protect\BCAY{Artale, Kontchakov, Ryzhikov,\ \BBA\
  Zakharyaschev}{Artale et~al.}{2013}]{DBLP:conf/lpar/ArtaleKRZ13}
Artale, A., Kontchakov, R., Ryzhikov, V., \BBA\ Zakharyaschev, M.
  \BBOP2013\BBCP.
\newblock \BBOQ The complexity of clausal fragments of {LTL}\BBCQ\
\newblock In {\Bem Logic for Programming, Artificial Intelligence, and
  Reasoning - 19th International Conference, LPAR-19, Stellenbosch, South
  Africa, December 14-19, 2013. Proceedings}, \BPGS\ 35--52.

\bibitem[\protect\BCAY{Artale, Kontchakov, Ryzhikov,\ \BBA\
  Zakharyaschev}{Artale et~al.}{2014}]{DBLP:journals/tocl/ArtaleKRZ14}
Artale, A., Kontchakov, R., Ryzhikov, V., \BBA\ Zakharyaschev, M.
  \BBOP2014\BBCP.
\newblock \BBOQ A cookbook for temporal conceptual data modelling with
  description logics\BBCQ\
\newblock {\Bem {ACM} Trans. Comput. Log.}, {\Bem 15\/}(3), 25:1--25:50.

\bibitem[\protect\BCAY{Baader, Borgwardt,\ \BBA\ Lippmann}{Baader
  et~al.}{2013}]{BBL13}
Baader, F., Borgwardt, S., \BBA\ Lippmann, M. \BBOP2013\BBCP.
\newblock \BBOQ Temporalizing ontology-based data access\BBCQ\
\newblock In {\Bem Proc.\ of the 24th Int.\ Conf.\ on Automated Deduction,
  CADE-24}, \lowercase{\BVOL}\ 7898 of {\Bem LNCS}, \BPGS\ 330--344. Springer.

\bibitem[\protect\BCAY{Baudinet, Chomicki,\ \BBA\ Wolper}{Baudinet
  et~al.}{1993}]{DBLP:books/bc/tanselCGSS93/BaudinetCW93}
Baudinet, M., Chomicki, J., \BBA\ Wolper, P. \BBOP1993\BBCP.
\newblock \BBOQ Temporal deductive databases\BBCQ\
\newblock In {\Bem Temporal Databases}, \BPGS\ 294--320.

\bibitem[\protect\BCAY{Beck, Dao{-}Tran, Eiter,\ \BBA\ Fink}{Beck
  et~al.}{2015}]{DBLP:conf/aaai/BeckDEF15}
Beck, H., Dao{-}Tran, M., Eiter, T., \BBA\ Fink, M. \BBOP2015\BBCP.
\newblock \BBOQ {LARS:} {A} logic-based framework for analyzing reasoning over
  streams\BBCQ\
\newblock In Bonet, B.\BBACOMMA\  \BBA\ Koenig, S.\BEDS, {\Bem Proceedings of
  the Twenty-Ninth {AAAI} Conference on Artificial Intelligence, January 25-30,
  2015, Austin, Texas, {USA.}}, \BPGS\ 1431--1438. {AAAI} Press.

\bibitem[\protect\BCAY{Bienvenu, Kikot, Kontchakov, Podolskii,\ \BBA\
  Zakharyaschev}{Bienvenu et~al.}{2018}]{JACM}
Bienvenu, M., Kikot, S., Kontchakov, R., Podolskii, V.~V., \BBA\ Zakharyaschev,
  M. \BBOP2018\BBCP.
\newblock \BBOQ Ontology-mediated queries: Combined complexity and succinctness
  of rewritings via circuit complexity\BBCQ\
\newblock {\Bem J. ACM}.
\newblock In print.

\bibitem[\protect\BCAY{Borgwardt, Lippmann,\ \BBA\ Thost}{Borgwardt
  et~al.}{2013}]{BLT13}
Borgwardt, S., Lippmann, M., \BBA\ Thost, V. \BBOP2013\BBCP.
\newblock \BBOQ Temporal query answering in the description logic
  {DL-Lite}\BBCQ\
\newblock In {\Bem Proc.\ of the 9th Int.\ Symposium on Frontiers of Combining
  Systems, FroCoS'13}, \lowercase{\BVOL}\ 8152 of {\Bem LNCS}, \BPGS\ 165--180.
  Springer.

\bibitem[\protect\BCAY{Brandt, Kalayc{\i}, Kontchakov, Ryzhikov, Xiao,\ \BBA\
  Zakharyaschev}{Brandt et~al.}{2017}]{DBLP:conf/aaai/BrandtKKRXZ17}
Brandt, S., Kalayc{\i}, E.~G., Kontchakov, R., Ryzhikov, V., Xiao, G., \BBA\
  Zakharyaschev, M. \BBOP2017\BBCP.
\newblock \BBOQ Ontology-based data access with a horn fragment of metric
  temporal logic\BBCQ\
\newblock In Singh, S.~P.\BBACOMMA\  \BBA\ Markovitch, S.\BEDS, {\Bem
  Proceedings of the Thirty-First {AAAI} Conference on Artificial Intelligence,
  February 4-9, 2017, San Francisco, California, {USA.}}, \BPGS\ 1070--1076.
  {AAAI} Press.

\bibitem[\protect\BCAY{Bresolin, Kurucz, Mu{\~{n}}oz{-}Velasco, Ryzhikov,
  Sciavicco,\ \BBA\ Zakharyaschev}{Bresolin
  et~al.}{2017}]{DBLP:journals/tocl/BresolinKMRSZ17}
Bresolin, D., Kurucz, A., Mu{\~{n}}oz{-}Velasco, E., Ryzhikov, V., Sciavicco,
  G., \BBA\ Zakharyaschev, M. \BBOP2017\BBCP.
\newblock \BBOQ Horn fragments of the halpern-shoham interval temporal
  logic\BBCQ\
\newblock {\Bem {ACM} Trans. Comput. Log.}, {\Bem 18\/}(3), 22:1--22:39.

\bibitem[\protect\BCAY{Calvanese, Cogrel, Komla{-}Ebri, Kontchakov, Lanti,
  Rezk, Rodriguez{-}Muro,\ \BBA\ Xiao}{Calvanese
  et~al.}{2017}]{DBLP:journals/semweb/CalvaneseCKKLRR17}
Calvanese, D., Cogrel, B., Komla{-}Ebri, S., Kontchakov, R., Lanti, D., Rezk,
  M., Rodriguez{-}Muro, M., \BBA\ Xiao, G. \BBOP2017\BBCP.
\newblock \BBOQ Ontop: Answering {SPARQL} queries over relational
  databases\BBCQ\
\newblock {\Bem Semantic Web}, {\Bem 8\/}(3), 471--487.

\bibitem[\protect\BCAY{Chomicki\ \BBA\ Toman}{Chomicki\ \BBA\
  Toman}{1998}]{DBLP:conf/dagstuhl/ChomickiT98}
Chomicki, J.\BBACOMMA\  \BBA\ Toman, D. \BBOP1998\BBCP.
\newblock \BBOQ Temporal logic in information systems\BBCQ\
\newblock In {\Bem Logics for Databases and Information Systems}, \BPGS\
  31--70. Kluwer.

\bibitem[\protect\BCAY{Furia\ \BBA\ Spoletini}{Furia\ \BBA\
  Spoletini}{2008}]{DBLP:conf/ictac/FuriaS08}
Furia, C.~A.\BBACOMMA\  \BBA\ Spoletini, P. \BBOP2008\BBCP.
\newblock \BBOQ Tomorrow and all our yesterdays: {MTL} satisfiability over the
  integers\BBCQ\
\newblock In Fitzgerald, J.~S., Haxthausen, A.~E., \BBA\ Yenig{\"{u}}n,
  H.\BEDS, {\Bem Theoretical Aspects of Computing - {ICTAC} 2008, 5th
  International Colloquium, Istanbul, Turkey, September 1-3, 2008.
  Proceedings}, \lowercase{\BVOL}\ 5160 of {\Bem Lecture Notes in Computer
  Science}, \BPGS\ 126--140. Springer.

\bibitem[\protect\BCAY{Gabbay, Kurucz, Wolter,\ \BBA\ Zakharyaschev}{Gabbay
  et~al.}{2003}]{many_dimensional_modal_logics}
Gabbay, D., Kurucz, A., Wolter, F., \BBA\ Zakharyaschev, M. \BBOP2003\BBCP.
\newblock {\Bem Many-Dimensional Modal Logics: Theory and Applications},
  \lowercase{\BVOL}\ 148.

\bibitem[\protect\BCAY{Gottlob, Kikot, Kontchakov, Podolskii, Schwentick,\
  \BBA\ Zakharyaschev}{Gottlob et~al.}{2014}]{DBLP:journals/ai/GottlobKKPSZ14}
Gottlob, G., Kikot, S., Kontchakov, R., Podolskii, V.~V., Schwentick, T., \BBA\
  Zakharyaschev, M. \BBOP2014\BBCP.
\newblock \BBOQ The price of query rewriting in ontology-based data
  access\BBCQ\
\newblock {\Bem Artif. Intell.}, {\Bem 213}, 42--59.

\bibitem[\protect\BCAY{Guti{\'e}rrez-Basulto, Jung,\ \BBA\
  Kontchakov}{Guti{\'e}rrez-Basulto et~al.}{2016a}]{GJK-IJCAI16}
Guti{\'e}rrez-Basulto, V., Jung, J., \BBA\ Kontchakov, R. \BBOP2016a\BBCP.
\newblock \BBOQ Temporalized {EL} ontologies for accessing temporal data:
  Complexity of atomic queries\BBCQ\
\newblock In {\Bem Proceedings of the 25th International Joint Conference on
  Artificial Intelligence (IJCAI-16)}. {AAAI} Press.

\bibitem[\protect\BCAY{Guti{\'{e}}rrez{-}Basulto, Jung,\ \BBA\
  Ozaki}{Guti{\'{e}}rrez{-}Basulto
  et~al.}{2016b}]{DBLP:conf/ecai/Gutierrez-Basulto16}
Guti{\'{e}}rrez{-}Basulto, V., Jung, J.~C., \BBA\ Ozaki, A. \BBOP2016b\BBCP.
\newblock \BBOQ On metric temporal description logics\BBCQ\
\newblock In {\Bem {ECAI} 2016 - 22nd European Conference on Artificial
  Intelligence, 29 August-2 September 2016, The Hague, The Netherlands -
  Including Prestigious Applications of Artificial Intelligence {(PAIS} 2016)},
  \BPGS\ 837--845.

\bibitem[\protect\BCAY{Guti{\'e}rrez-Basulto\ \BBA\
  Klarman}{Guti{\'e}rrez-Basulto\ \BBA\
  Klarman}{2012}]{DBLP:conf/rr/Gutierrez-BasultoK12}
Guti{\'e}rrez-Basulto, V.\BBACOMMA\  \BBA\ Klarman, S. \BBOP2012\BBCP.
\newblock \BBOQ Towards a unifying approach to representing and querying
  temporal data in description logics\BBCQ\
\newblock In {\Bem Proc.\ of the 6th Int.\ Conf.\ on Web Reasoning and Rule
  Systems, RR 2012}, \lowercase{\BVOL}\ 7497 of {\Bem LNCS}, \BPGS\ 90--105.
  Springer.

\bibitem[\protect\BCAY{Kharlamov, Brandt, Jim{\'{e}}nez{-}Ruiz, Kotidis,
  Lamparter, Mailis, Neuenstadt, {\"{O}}z{\c{c}}ep, Pinkel, Svingos,
  Zheleznyakov, Horrocks, Ioannidis,\ \BBA\ M{\"{o}}ller}{Kharlamov
  et~al.}{2016}]{DBLP:conf/sigmod/KharlamovBJKLMN16}
Kharlamov, E., Brandt, S., Jim{\'{e}}nez{-}Ruiz, E., Kotidis, Y., Lamparter,
  S., Mailis, T., Neuenstadt, C., {\"{O}}z{\c{c}}ep, {\"{O}}.~L., Pinkel, C.,
  Svingos, C., Zheleznyakov, D., Horrocks, I., Ioannidis, Y.~E., \BBA\
  M{\"{o}}ller, R. \BBOP2016\BBCP.
\newblock \BBOQ Ontology-based integration of streaming and static relational
  data with optique\BBCQ\
\newblock In {\Bem Proc.\ of the 2016 Int.\ Conf.\ on Management of Data,
  {SIGMOD} Conference 2016}, \BPGS\ 2109--2112.

\bibitem[\protect\BCAY{Kharlamov, Mailis, Mehdi, Neuenstadt, {\"{O}}z{\c{c}}ep,
  Roshchin, Solomakhina, Soylu, Svingos, Brandt, Giese, Ioannidis, Lamparter,
  M{\"{o}}ller, Kotidis,\ \BBA\ Waaler}{Kharlamov
  et~al.}{2017}]{KharlamovMMNORS17}
Kharlamov, E., Mailis, T., Mehdi, G., Neuenstadt, C., {\"{O}}z{\c{c}}ep,
  {\"{O}}.~L., Roshchin, M., Solomakhina, N., Soylu, A., Svingos, C., Brandt,
  S., Giese, M., Ioannidis, Y.~E., Lamparter, S., M{\"{o}}ller, R., Kotidis,
  Y., \BBA\ Waaler, A. \BBOP2017\BBCP.
\newblock \BBOQ Semantic access to streaming and static data at siemens\BBCQ\
\newblock {\Bem J. Web Sem.}, {\Bem 44}, 54--74.

\bibitem[\protect\BCAY{Klarman\ \BBA\ Meyer}{Klarman\ \BBA\
  Meyer}{2014}]{DBLP:conf/rr/KlarmanM14}
Klarman, S.\BBACOMMA\  \BBA\ Meyer, T. \BBOP2014\BBCP.
\newblock \BBOQ Querying temporal databases via {OWL} 2 {QL}\BBCQ\
\newblock In {\Bem Proc.\ of the 8th Int.\ Conf.\ on Web Reasoning and Rule
  Systems, RR 2014}, \lowercase{\BVOL}\ 8741 of {\Bem LNCS}, \BPGS\ 92--107.
  Springer.

\bibitem[\protect\BCAY{Kontchakov, Rezk, Rodriguez-Muro, Xiao,\ \BBA\
  Zakharyaschev}{Kontchakov et~al.}{2014}]{DBLP:conf/semweb/KontchakovRRXZ14}
Kontchakov, R., Rezk, M., Rodriguez-Muro, M., Xiao, G., \BBA\ Zakharyaschev, M.
  \BBOP2014\BBCP.
\newblock \BBOQ Answering {SPARQL} queries over databases under {OWL 2 QL}
  entailment regime\BBCQ\
\newblock In {\Bem Proc.\ of the 13th Int.\ Semantic Web Conf.\ (ISWC 2014),
  Part I}, \lowercase{\BVOL}\ 8796 of {\Bem LNCS}, \BPGS\ 552--567. Springer.

\bibitem[\protect\BCAY{Kontchakov, Pandolfo, Pulina, Ryzhikov,\ \BBA\
  Zakharyaschev}{Kontchakov et~al.}{2016}]{DBLP:conf/ijcai/KontchakovPPRZ16}
Kontchakov, R., Pandolfo, L., Pulina, L., Ryzhikov, V., \BBA\ Zakharyaschev, M.
  \BBOP2016\BBCP.
\newblock \BBOQ Temporal and spatial {OBDA} with many-dimensional
  halpern-shoham logic\BBCQ\
\newblock In {\Bem {IJCAI}}, \BPGS\ 1160--1166. {IJCAI/AAAI} Press.

\bibitem[\protect\BCAY{Koymans}{Koymans}{1990}]{DBLP:journals/rts/Koymans90}
Koymans, R. \BBOP1990\BBCP.
\newblock \BBOQ Specifying real-time properties with metric temporal
  logic\BBCQ\
\newblock {\Bem Real-Time Systems}, {\Bem 2\/}(4), 255--299.

\bibitem[\protect\BCAY{Ladner}{Ladner}{1977}]{Ladner77thecomputational}
Ladner, R.~E. \BBOP1977\BBCP.
\newblock \BBOQ The computational complexity of provability in systems of modal
  propositional logic\BBCQ\
\newblock {\Bem SIAM Journal of Computing}.

\bibitem[\protect\BCAY{Lutz, Wolter,\ \BBA\ Zakharyaschev}{Lutz
  et~al.}{2008}]{DBLP:conf/time/LutzWZ08}
Lutz, C., Wolter, F., \BBA\ Zakharyaschev, M. \BBOP2008\BBCP.
\newblock \BBOQ Temporal description logics: {A} survey\BBCQ\
\newblock In {\Bem Proc. of the 15th Int.\ Symposium on Temporal Representation
  and Reasoning (TIME 2008)}, \BPGS\ 3--14.

\bibitem[\protect\BCAY{Madnani, Krishna,\ \BBA\ Pandya}{Madnani
  et~al.}{2013}]{DBLP:journals/corr/abs-1305-6137}
Madnani, K., Krishna, S.~N., \BBA\ Pandya, P.~K. \BBOP2013\BBCP.
\newblock \BBOQ On the decidability and complexity of some fragments of
  {M}etric {T}emporal {L}ogic\BBCQ\
\newblock {\Bem CoRR}, {\Bem abs/1305.6137}.

\bibitem[\protect\BCAY{Ouaknine\ \BBA\ Worrell}{Ouaknine\ \BBA\
  Worrell}{2005}]{Ouaknine:2005:DMT:1078035.1079694}
Ouaknine, J.\BBACOMMA\  \BBA\ Worrell, J. \BBOP2005\BBCP.
\newblock \BBOQ On the decidability of metric temporal logic\BBCQ\
\newblock In {\Bem Proceedings of the 20th Annual IEEE Symposium on Logic in
  Computer Science}, LICS '05, \BPGS\ 188--197, Washington, DC, USA. IEEE
  Computer Society.

\bibitem[\protect\BCAY{Ouaknine\ \BBA\ Worrell}{Ouaknine\ \BBA\
  Worrell}{2008}]{DBLP:conf/formats/OuaknineW08}
Ouaknine, J.\BBACOMMA\  \BBA\ Worrell, J. \BBOP2008\BBCP.
\newblock \BBOQ Some recent results in metric temporal logic\BBCQ\
\newblock In {\Bem Formal Modeling and Analysis of Timed Systems, 6th
  International Conference, {FORMATS} 2008, Saint Malo, France, September
  15-17, 2008. Proceedings}, \BPGS\ 1--13.

\bibitem[\protect\BCAY{{\"{O}}z{\c{c}}ep, M\"{o}ller, Neuenstadt,
  Zheleznyakov,\ \BBA\ Kharlamov}{{\"{O}}z{\c{c}}ep et~al.}{2013}]{OMNZK:13}
{\"{O}}z{\c{c}}ep, {\"{O}}., M\"{o}ller, R., Neuenstadt, C., Zheleznyakov, D.,
  \BBA\ Kharlamov, E. \BBOP2013\BBCP.
\newblock \BBOQ A semantics for temporal and stream-based query answering in an
  {OBDA} context\BBCQ\
\newblock \BTR, Deliverable {D5.1}, FP7-318338, EU.

\bibitem[\protect\BCAY{{\"{O}}z{\c{c}}ep\ \BBA\
  M{\"{o}}ller}{{\"{O}}z{\c{c}}ep\ \BBA\
  M{\"{o}}ller}{2014}]{DBLP:conf/rweb/OzcepM14}
{\"{O}}z{\c{c}}ep, {\"{O}}.~L.\BBACOMMA\  \BBA\ M{\"{o}}ller, R.
  \BBOP2014\BBCP.
\newblock \BBOQ Ontology based data access on temporal and streaming data\BBCQ\
\newblock In {\Bem the 10th Int.\ Summer School on Reasoning Web, RW 2014},
  \lowercase{\BVOL}\ 8714 of {\Bem LNCS}, \BPGS\ 279--312. Springer.

\bibitem[\protect\BCAY{Poggi, Lembo, Calvanese, De~Giacomo, Lenzerini,\ \BBA\
  Rosati}{Poggi et~al.}{2008}]{PLCD*08}
Poggi, A., Lembo, D., Calvanese, D., De~Giacomo, G., Lenzerini, M., \BBA\
  Rosati, R. \BBOP2008\BBCP.
\newblock \BBOQ Linking data to ontologies\BBCQ\
\newblock {\Bem J.\ on Data Semantics}, {\Bem X}, 133--173.

\bibitem[\protect\BCAY{Postgre{SQL}}{Postgre{SQL}}{2018}]{postgresql}
Postgre{SQL} \BBOP2018\BBCP.
\newblock \BBOQ Documentation~9.6.10\BBCQ\
\newblock
  \url{https://www.postgresql.org/docs/9.6/static/parallel-safety.html}.

\bibitem[\protect\BCAY{Raheb, Mailis, Ryzhikov, Papapetrou,\ \BBA\
  Ioannidis}{Raheb et~al.}{2017}]{DBLP:conf/esws/RahebMRPI17}
Raheb, K.~E., Mailis, T., Ryzhikov, V., Papapetrou, N., \BBA\ Ioannidis, Y.~E.
  \BBOP2017\BBCP.
\newblock \BBOQ Balonse: Temporal aspects of dance movement and its ontological
  representation\BBCQ\
\newblock In {\Bem The Semantic Web - 14th International Conference, {ESWC}
  2017, Portoro{\v{z}}, Slovenia, May 28 - June 1, 2017, Proceedings, Part
  {II}}, \BPGS\ 49--64.

\bibitem[\protect\BCAY{Rodriguez{-}Muro, Kontchakov,\ \BBA\
  Zakharyaschev}{Rodriguez{-}Muro
  et~al.}{2013}]{DBLP:conf/semweb/Rodriguez-MuroKZ13}
Rodriguez{-}Muro, M., Kontchakov, R., \BBA\ Zakharyaschev, M. \BBOP2013\BBCP.
\newblock \BBOQ Ontology-based data access: Ontop of databases\BBCQ\
\newblock In {\Bem The Semantic Web - {ISWC} 2013 - 12th International Semantic
  Web Conference, Sydney, NSW, Australia, October 21-25, 2013, Proceedings,
  Part {I}}, \BPGS\ 558--573.

\bibitem[\protect\BCAY{Sistla\ \BBA\ Clarke}{Sistla\ \BBA\
  Clarke}{1985}]{Sistla&Clarke85}
Sistla, A.\BBACOMMA\  \BBA\ Clarke, E. \BBOP1985\BBCP.
\newblock \BBOQ The complexity of propositional linear temporal logics\BBCQ\
\newblock {\Bem J.\ ACM}, {\Bem 32}, 733--749.

\bibitem[\protect\BCAY{Soylu, Giese, Jim{\'{e}}nez{-}Ruiz, Vega{-}Gorgojo,\
  \BBA\ Horrocks}{Soylu et~al.}{2016}]{VQSSoylu16}
Soylu, A., Giese, M., Jim{\'{e}}nez{-}Ruiz, E., Vega{-}Gorgojo, G., \BBA\
  Horrocks, I. \BBOP2016\BBCP.
\newblock \BBOQ Experiencing {O}ptique{VQS}: a multi-paradigm and
  ontology-based visual query system for end users\BBCQ\
\newblock {\Bem Universal Access in the Information Society}, {\Bem 15\/}(1),
  129--152.

\bibitem[\protect\BCAY{Tobies}{Tobies}{2001}]{tobiespspace}
Tobies, S. \BBOP2001\BBCP.
\newblock \BBOQ Pspace reasoning for graded modal logics\BBCQ\
\newblock {\Bem Journal of Logic and Computation}, {\Bem 11\/}(1), 85--106.

\bibitem[\protect\BCAY{Ullman}{Ullman}{1988}]{Ullman88-dbkb-v1}
Ullman, J.~D. \BBOP1988\BBCP.
\newblock {\Bem Principles of Database and Knowledge-Base Systems, Volume I}.
\newblock Computer Science Press.

\bibitem[\protect\BCAY{Vardi}{Vardi}{1982}]{Vardi82}
Vardi, M. \BBOP1982\BBCP.
\newblock \BBOQ The complexity of relational query languages (extended
  abstract)\BBCQ\
\newblock In {\Bem Proc.\ of the 14th ACM SIGACT Symp.\ on Theory of Computing
  (STOC'82)}, \BPGS\ 137--146.

\bibitem[\protect\BCAY{Xiao, Calvanese, Kontchakov, Lembo, Poggi, Rosati,\
  \BBA\ Zakharyaschev}{Xiao et~al.}{2018}]{IJCAI-18}
Xiao, G., Calvanese, D., Kontchakov, R., Lembo, D., Poggi, A., Rosati, R.,
  \BBA\ Zakharyaschev, M. \BBOP2018\BBCP.
\newblock \BBOQ Ontology-based data access: A survey\BBCQ\
\newblock In {\Bem Proc.\ of the 28th Int.\ Joint Conf.\ on Artificial
  Intelligence (IJCAI)}. IJCAI/AAAI.

\bibitem[\protect\BCAY{Zaharia, Xin, Wendell, Das, Armbrust, Dave, Meng, Rosen,
  Venkataraman, Franklin, Ghodsi, Gonzalez, Shenker,\ \BBA\ Stoica}{Zaharia
  et~al.}{2016}]{DBLP:journals/cacm/ZahariaXWDADMRV16}
Zaharia, M., Xin, R.~S., Wendell, P., Das, T., Armbrust, M., Dave, A., Meng,
  X., Rosen, J., Venkataraman, S., Franklin, M.~J., Ghodsi, A., Gonzalez, J.,
  Shenker, S., \BBA\ Stoica, I. \BBOP2016\BBCP.
\newblock \BBOQ Apache spark: a unified engine for big data processing\BBCQ\
\newblock {\Bem Commun. {ACM}}, {\Bem 59\/}(11), 56--65.

\bibitem[\protect\BCAY{Zhou, Wang,\ \BBA\ Zaniolo}{Zhou
  et~al.}{2006}]{DBLP:conf/dexa/ZhouWZ06}
Zhou, X., Wang, F., \BBA\ Zaniolo, C. \BBOP2006\BBCP.
\newblock \BBOQ Efficient temporal coalescing query support in relational
  database systems\BBCQ\
\newblock In {\Bem Proc.\ of the 17th Int.\ Conf.\ on Database and Expert
  Systems Applications, DEXA 2006}, \lowercase{\BVOL}\ 4080 of {\Bem LNCS},
  \BPGS\ 676--686. Springer.

\end{thebibliography}
\bibliographystyle{theapa}

\end{document}